\newcommand{\tikzArrowToText}[1]{%
\mathop{\begin{tikzpicture}[baseline={([yshift=-0.5ex]current 
bounding box.south)}]%
\draw[#1] (0,0) -- (1.1em,0);%
\end{tikzpicture}}}
\tikzset{nodelabel/.style={font=\footnotesize}}
\tikzset{edgelabel/.style={font=\footnotesize}}
\tikzset{henodenumber/.style={font=\footnotesize}}
\tikzset{undiredge/.style={shorten >=0pt, shorten <=0pt}}
\tikzset{partialedge/.style={-left to, thick}}
\tikzset{dotline/.style={dotted, shorten <=5pt, shorten >=5pt}}
\tikzset{graphedge/.style={->, >=latex}}
\tikzset{bigraphedge/.style={<->, >=latex}}
\tikzset{invgraphedge/.style={<-, >=latex}}
\tikzset{henode/.style={draw, rectangle, rounded corners=2.5pt}}
\tikzset{stdnode/.style={draw, circle}}
\tikzset{mapedge/.style={->, >=stealth', dashed, color=orange}}
\tikzset{ruleappedge/.style={double, ->}}
\tikzset{mor-tot-inj/.style={catt-catt}}
\tikzset{mor-tot/.style={-catt}}
\tikzset{mor-parr/.style={-catpr}}
\tikzset{mor-parl/.style={-catpl}}
\tikzset{mor-minor/.style={|-catt}}
\tikzset{mor-subgraph/.style={)-catt}}
\tikzset{mor-indsubgraph/.style={open triangle 45 reversed-catt}}
\tikzset{mor-genorder/.style={]-catt}}
\tikzset{noedge/.style={decoration={markings,mark=at position 0.5 with 
{\arrow[scale=1.6,color=red]{|}}}, postaction={decorate}}}
\tikzset{closure/.style={postaction={decorate, decoration={raise=4pt, 
markings, mark=at position 1 with {\node[scale=0.65]{\textbf{*}};}}}}}
\tikzset{backstepedge/.style={-open triangle 45, shorten <=0pt, shorten >=0pt}}
\tikzset{graphbox/.style={draw, dashed, rounded corners=2mm, outer sep=5pt, on 
background layer}}
\tikzset{graphboxgrey/.style={draw, dashed, black!50, rounded corners=2mm, 
outer sep=5pt, on background layer}}
\newdimen\arrowsize
\newcommand<>{\StaticGraphbox}[5][grbox]{%
\def\hAlignCJS{right}%
\def\vAlignCJS{top}%
\def\nameCJS{}%
\renewcommand{\do}[1]{%
  \ifthenelse{\equal{##1}{top}}{\def\vAlignCJS{top}}{%
  \ifthenelse{\equal{##1}{bottom}}{\def\vAlignCJS{bottom}}{%
  \ifthenelse{\equal{##1}{right}}{\def\hAlignCJS{right}}{%
  \ifthenelse{\equal{##1}{left}}{\def\hAlignCJS{left}}{\def\nameCJS{##1}}}}}}%
  \docsvlist{#5}%
  \begin{pgfonlayer}{background}
    \node[circle, above left= #4 and #3 of #2] (top_left) {};
    \node[circle, above right= #4 and #3 of #2] (top_right) {};
    \node[circle, below left= #4 and #3 of #2] (bottom_left) {};
    \node[circle, below right= #4 and #3 of #2] (bottom_right) {};
    \node[fit=(top_left) (bottom_right), outer sep=5pt] (#1) {} ;
    \draw#6[dashed, black!50,rounded corners=2mm] 
    ($(#1.north west) + (5pt,-5pt)$) -- ($(#1.north east) + (-5pt,-5pt)$) --
    ($(#1.south east) + (-5pt,5pt)$) -- ($(#1.south west) + (5pt,5pt)$) -- 
    cycle ;
    \node#6[left] (boxname) at (\vAlignCJS_\hAlignCJS.east) {\nameCJS};
  \end{pgfonlayer}
}}{%
\newcommand{\StaticGraphbox}[5][grbox]{%
\def\hAlignCJS{right}%
\def\vAlignCJS{top}%
\def\nameCJS{}%
\renewcommand{\do}[1]{%
  \ifthenelse{\equal{##1}{top}}{\def\vAlignCJS{top}}{%
  \ifthenelse{\equal{##1}{bottom}}{\def\vAlignCJS{bottom}}{%
  \ifthenelse{\equal{##1}{right}}{\def\hAlignCJS{right}}{%
  \ifthenelse{\equal{##1}{left}}{\def\hAlignCJS{left}}{\def\nameCJS{##1}}}}}}%
  \docsvlist{#5}%
  \begin{pgfonlayer}{background}
    \node[circle, above left= #4 and #3 of #2] (top_left) {};
    \node[circle, above right= #4 and #3 of #2] (top_right) {};
    \node[circle, below left= #4 and #3 of #2] (bottom_left) {};
    \node[circle, below right= #4 and #3 of #2] (bottom_right) {};
    \node[fit=(top_left) (bottom_right), outer sep=5pt] (#1) {} ;
    \draw[dashed, black!50,rounded corners=2mm] 
    ($(#1.north west) + (5pt,-5pt)$) -- ($(#1.north east) + (-5pt,-5pt)$) --
    ($(#1.south east) + (-5pt,5pt)$) -- ($(#1.south west) + (5pt,5pt)$) -- 
    cycle ;
    \node[left] (boxname) at (\vAlignCJS_\hAlignCJS.east) {\nameCJS};
  \end{pgfonlayer}
}}
\newcommand{\nat}{\ensuremath{\mathbb{N}}}
\renewcommand{\phi}{\varphi}
\renewcommand{\epsilon}{\varepsilon}
\newcommand{\pto}{\rightharpoonup}
\newcommand{\upclosed}[1]{\mathord{\uparrow} #1}
\newcommand{\rotleq}{\begin{sideways}$\leq$\end{sideways}}
\newenvironment{theorem_app}[2][]{\noindent{\bf \hyperref[#2]{Theorem 
\ref*{#2}}\ifthenelse{\equal{#1}{}}{.}{ (#1).}}\it}{}
\newenvironment{proposition_app}[2][]{\noindent{\bf \hyperref[#2]{Proposition 
\ref*{#2}}\ifthenelse{\equal{#1}{}}{.}{ (#1).}}\it}{}
\newcommand{\genArrow}{\tikzArrowToText{mor-genorder}}
\newcommand{\subArrow}{\tikzArrowToText{mor-subgraph}}
\newcommand{\indsubArrow}{\mathop{\textnormal{\texttt{$\triangleright$}}\;\!\!\!\!\to}}
\newcommand{\genOrder}{\sqsubseteq}
\newcommand{\subOrder}{\subseteq}
\newcommand{\indsubOrder}{\unlhd}
\newcommand{\predBasis}[2][]{\ensuremath{\mathit{pb}\ifthenelse{\equal{#1}{}}{}{_{#1}}(#2)}}
\newcommand{\pred}[2][]{\ensuremath{\mathit{Pred}\ifthenelse{\equal{#1}{}}{}{_{#1}}(#2)}}
\newcommand{\predAll}[1]{\ensuremath{\mathit{Pred}^*(#1)}}
\newcommand{\arity}{\mathit{ar}}
\newcommand{\ignore}[1]{}
\newcommand{\pseudoParagraph}[1]{\noindent\textit{#1:} }
\newcommand{\minskyDec}[1]{#1\mathord{--}}
\newcommand{\minskyInc}[1]{#1\mathord{++}}
\newcommand{\minskyZero}[1]{#1\mathord{=0?}}
\spnewtheorem{procedure}{Procedure}{\bfseries}{\rmfamily}
\newcommand*{\graphDiagramScale}{0.75}
\newcommand*{\ExampleRuleScale}{0.75}
\newcommand*{\catDiagramScale}{0.85}
\tikzset{StdGraphGrid/.style={x=2cm, y=-1.2cm}}
\title{A General Framework for Well-Structured Graph Transformation 
Systems\thanks{Research partially supported by DFG project GaReV.}}
\author{Barbara K\"onig \and Jan St\"uckrath}
\institute{%
%  Abteilung f\"ur\\
%  Informatik und Angewandte Kognitionswissenschaft,\\
  Universit\"at~Duisburg-Essen, Germany\\
  \email{\{barbara\_koenig, jan.stueckrath\}@uni-due.de}}
\begin{document}

\maketitle

\begin{abstract}
  Graph transformation systems (GTSs) can be seen as well-structured
  transition systems (WSTSs), thus obtaining decidability results for
  certain classes of GTSs. In earlier work it was shown that
  well-structuredness can be obtained using the minor ordering as a
  well-quasi-order. In this paper we extend this idea to obtain a
  general framework in which several types of GTSs can be seen as
  (restricted) WSTSs. We instantiate this framework with the subgraph
  ordering and the induced subgraph ordering and apply it to analyse a
  simple access rights management system.
\end{abstract}

\section{Introduction}
\label{sec:introduction}

Well-structured transition systems
\cite{acjt:general-decidability,fs:well-structured-everywhere} are one
of the main sources for decidability results for infinite-state
systems. They equip a state space with a quasi-order, which must be
a well-quasi-order (wqo) and a simulation relation for the transition
relation. If a system can be seen as a WSTS, one can decide the
coverability problem, i.e., the problem of verifying whether, from a
given initial state, one can reach a state that covers a final state,
i.e., is larger than the final state with respect to the chosen order.
Often, these given final states, and all larger states, are considered
to be error states and one can hence check whether an error state is
reachable.
Large classes of infinite-state systems are well-structured, for
instance (unbounded) Petri nets and certain lossy systems. For these
classes of systems the theory provides a generic backwards
reachability algorithm.

A natural specification language for concurrent, distributed systems
with a variable topology are graph transformation systems
\cite{r:gra-handbook} and they usually generate infinite state spaces.
%(even if one factors the state space through graph isomorphism). 
In those systems states are represented by graphs and state changes by
(local) transformation rules, consisting of a left-hand and a
right-hand side graph.
In \cite{JK08} it was shown how lossy GTSs with edge contraction rules
can be viewed as WSTSs with the graph minor ordering
\cite{rs:graph-minors-xx,RS:graphMinors:XXIII} and the theory was
applied to verify a leader election protocol and a termination
detection protocol \cite{bdkss:undecidability-gts}. The technique
works for arbitrary (hyper-)graphs, i.e.\ the state space is not
restricted to certain types of graphs. On the other hand, in order to
obtain well-structuredness, we can only allow certain rule sets, for
instance one has to require an edge contraction rule for each edge
label.

In order to make the framework more flexible we now consider other
wqos, different from the minor ordering: the subgraph ordering and the
induced subgraph ordering. The subgraph ordering and a corresponding
WSTS were already studied in \cite{BKWZ:2013}, but without the
backwards search algorithm. Furthermore, we already mentioned the
decidability result in the case of the subgraph ordering in
\cite{bdkss:undecidability-gts}, but did not treat it in detail and did
not consider it as an instance of a general framework.

In contrast to the minor ordering, the subgraph ordering is not a wqo on
the set of all graphs, but only on those graphs where the length of undirected
paths is bounded \cite{ding:subgraphs-wqo}. This results in a trade-off: while 
the stricter order allows us to consider all possible sets of graph 
transformation rules in order to obtain a decision procedure, we have to make 
sure to consider a system where only graphs satisfying this restriction are
reachable. Even if this condition is not satisfied, the procedure can
yield useful partial coverability results.  Also, it often terminates
without excluding graphs not satisfying the restriction (this is also
the case for our running example), producing exact results.  We make
these considerations precise by introducing \emph{$Q$-restricted} WSTSs, where 
the order need only be a wqo on $Q$. In general, one wants $Q$ to be as large 
as possible to obtain stronger statements.

It turns out that the results of \cite{JK08} can be transferred to
this new setting. Apart from the minor ordering and the subgraph
ordering, there are various other wqos that could be used
\cite{fhr:wqo-bounded-treewidth}, leading to different classes of
systems and different notions of coverability. In order to avoid
redoing the proofs for every case, we here introduce a general
framework which works for the case where the partial order can be
represented by graph morphisms, which is applicable to several
important cases.  Especially, we state conditions required to perform
the backwards search. We show that the case of the minor ordering can
be seen as a special instance of this general framework and show that
the subgraph and the induced subgraph orderings are also compatible.
Finally we present an implementation and give runtime results.
\opt{long}{The proofs can be found in the Appendix~\ref{sec:proofs}.}
\opt{short}{For the proofs we refer the reader to the extended version
  of this paper \cite{wsts-gts-framework:arxiv}.}

\section{Preliminaries}
\label{sec:preliminaries}

\subsection{Well-structured Transition Systems}
\label{sec:wsts}

We define an extension to the notion of WSTS as introduced in
\cite{acjt:general-decidability,fs:well-structured-everywhere}, a
general framework for decidability results for infinite-state systems,
based on well-quasi-orders.

\begin{definition}[Well-quasi-order and upward closure]\label{def:wqo}
  A quasi-order %\footnote{Note that a quasi-order is the same as a
  % preorder, i.e., a reflexive and transitive relation.}
  $\leq$ (over a set $X$) is a \emph{well-quasi-order (wqo)} if for
  any infinite sequence $x_0, x_1, x_2, \ldots$ of elements of $X$,
  there exist indices $i<j$ with $x_i \leq x_j$.

  An \emph{upward-closed set} is any set $I \subseteq X$ such that $x
  \leq y$ and $x \in I$ implies $y \in I$. For a subset $Y \subseteq
  X$, we define its upward closure $\upclosed{Y} = \{x \in X \mid \exists y
  \in Y \colon y \leq x\}$.  Then, a \emph{basis} of an upward-closed
  set $I$ is a set $I_B$ such that $I = \upclosed{I_B}$.
  A \emph{downward-closed set}, downward closure and a basis of a 
  downward-closed set can be defined analogously.
\end{definition}

%Wqos satisfy the following properties.
The definition of wqos gives rise to properties which are important for 
the correctness and termination of the backwards search algorithm presented 
later.

\begin{lemma}\label{lem:wqo-ucs}
  Let $\leq$ be a wqo, then the following two statements hold:
  \begin{enumerate}
    \item Any upward-closed set $I$ has a finite basis.
    \item For any infinite, increasing sequence of upward-closed sets $I_0 
    \subseteq I_1 \subseteq I_2 \subseteq \ldots$ there exists an index $k \in 
    \nat$ such that $I_i = I_{i+1}$ for all $i \geq k$.
  \end{enumerate}
\end{lemma}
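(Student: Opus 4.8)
The plan is to derive both statements directly from the combinatorial characterisation of wqos: equivalently to the Ramsey-style condition in Definition~\ref{def:wqo}, a quasi-order $\leq$ is a wqo iff it is well-founded and has no infinite antichain, and — the form most useful here — iff every infinite sequence $x_0,x_1,x_2,\ldots$ contains an infinite \emph{ascending chain} $x_{i_0}\leq x_{i_1}\leq x_{i_2}\leq\cdots$ (this follows from the defining property by a standard infinite-Ramsey colouring argument: colour each pair $\{i,j\}$ with $i<j$ by whether $x_i\leq x_j$, and take a monochromatic infinite subset — it cannot be the ``incomparable'' colour by the wqo property, so it is ascending). I would state this equivalence first, either inline or as a preliminary observation, since both parts lean on it.

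For part~(1), let $I$ be upward-closed. The first step is to note that $I=\upclosed{I}$, so it suffices to find a \emph{finite} $I_B\subseteq I$ with $\upclosed{I_B}=I$; the natural candidate is a set of $\leq$-minimal elements of $I$, one from each ``minimality class''. Two things must be checked. First, every element of $I$ lies above some minimal element of $I$: this is well-foundedness of $\leq$ (no infinite strictly descending chain), which is implied by the wqo property. Second, the set of minimal elements of $I$, up to choosing one representative per equivalence class of the induced equivalence $x\leq y\leq x$, is finite: if it were infinite we could enumerate representatives $m_0,m_1,m_2,\ldots$; applying the wqo property gives $i<j$ with $m_i\leq m_j$, but then $m_j$ is not minimal unless $m_j\leq m_i$ too, contradicting that they were chosen from distinct classes. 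Taking $I_B$ to be these finitely many representatives gives $\upclosed{I_B}\subseteq I$ (upward closure of $I$) and $I\subseteq\upclosed{I_B}$ (every element dominates a minimal one), hence equality.

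For part~(2), suppose towards a contradiction that $I_0\subseteq I_1\subseteq I_2\subseteq\cdots$ is an infinite ascending chain of upward-closed sets with $I_i\subsetneq I_{i+1}$ for infinitely many $i$; by passing to a subsequence we may assume $I_i\subsetneq I_{i+1}$ for all $i$. For each $i$ pick $x_i\in I_{i+1}\setminus I_i$. By the wqo property applied to the sequence $(x_i)$, there are indices $i<j$ with $x_i\leq x_j$. But $x_i\in I_{i+1}\subseteq I_j$ and $I_j$ is upward-closed, so $x_j\in I_j$, contradicting $x_j\notin I_j$. Hence only finitely many inclusions are strict, i.e.\ there is $k$ with $I_i=I_{i+1}$ for all $i\geq k$.

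The only mildly delicate point — the ``main obstacle'' — is the bookkeeping in part~(1) around the induced equivalence $x\simeq y\iff x\leq y\leq x$: since $\leq$ is only a quasi-order, ``the set of minimal elements'' need not be an antichain of the strict order but a union of $\simeq$-classes, so one must phrase finiteness in terms of representatives and be slightly careful that the argument ``$m_i\leq m_j\Rightarrow$ contradiction'' uses that representatives were picked from \emph{distinct} classes. Everything else is a direct application of the ascending-chain / no-descending-chain consequences of the wqo property.
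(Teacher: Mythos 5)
Your proof is correct. The paper itself gives no proof of this lemma at all --- it is quoted as a standard fact about wqos from the WSTS literature --- and your argument (a basis of $\leq$-minimal representatives for part~(1), and picking witnesses $x_i \in I_{i+1}\setminus I_i$ of the strict inclusions for part~(2)) is exactly the classical one, with the quasi-order subtlety about equivalence classes handled properly. One small remark: the Ramsey-style ``infinite ascending subsequence'' equivalence you set up at the start is never actually used, since part~(1) needs only well-foundedness plus the defining wqo property and part~(2) uses the defining property directly, so it could simply be dropped.
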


A $Q$-restricted WSTS is a transition system, equipped with a quasi-order, such 
that the quasi-order is a (weak) simulation relation on all states and a wqo on 
a restricted set of states $Q$.

\begin{definition}[$Q$-restricted well-structured transition system]
\label{def:wsts}
Let $S$ be a set of states and let $Q$ be a downward closed subset of
$S$, where membership is decidable.  A \emph{$Q$-restricted well-structured 
transition system ($Q$-restricted WSTS)} is a transition system $\mathcal{T} = 
(S, \Rightarrow, \leq)$, where the following conditions hold:

\noindent
\parbox{0.7\textwidth}{%
  \begin{description}
    \item[Ordering:] $\leq$ is a quasi-order on $S$ and a wqo on $Q$.
    \item[Compatibility:] For all $s_{1} \leq t_{1}$ and a transition $s_{1} 
    \Rightarrow s_{2}$, there exists a sequence $t_{1} \Rightarrow^* t_{2}$ of 
    transitions such that $s_{2} \leq t_{2}$.
  \end{description}
}%
\parbox{0.3\textwidth}{%
  \begin{center}%
    \scalebox{1.0}{\begin{tikzpicture}[x=1.5cm,y=-1.0cm]

%%%%%%%%%%%%%%%%%%%% objects %%%%%%%%%%%%%%%%%%%%
\node (t1) at (0,0) {$t_1$};
\node (t2) at (1,0) {$t_2$};
\node (s1) at (0,1) {$s_1$};
\node (s2) at (1,1) {$s_2$};
\node (leq1) at ($(t1)!0.5!(s1)$) {\rotleq};
\node (leq2) at ($(t2)!0.5!(s2)$) {\rotleq};

%%%%%%%%%%%%%%%%%%%% arrows %%%%%%%%%%%%%%%%%%%%
\draw[ruleappedge,closure] (t1) -- (t2);
\draw[ruleappedge] (s1) -- (s2);

\end{tikzpicture}%}
  \end{center}%
}%
\end{definition}

The presented $Q$-restricted WSTS are a generalization of WSTS and are 
identical to the classical definition, when $Q = S$. We will show how 
well-known results for WSTS can be transfered to $Q$-restricted WSTS.
For $Q$-restricted WSTS there are two coverability problems of
interest. The \emph{(general) coverability problem} is to decide,
given two states $s,t \in S$, whether there is a sequence of
transitions $s \Rightarrow s_1 \Rightarrow \ldots \Rightarrow s_n$
such that $t \leq s_n$. The \emph{restricted coverability problem} is
to decide whether there is such a sequence for two $s,t \in Q$ with
$s_i \in Q$ for $1 \leq i \leq n$.  Both problems are undecidable in
the general case (as a result of \cite{bdkss:undecidability-gts} and
Proposition~\ref{prop:subgraph-is-undecidable}) but we will show that
the well-known backward search for classical WSTS can be put to good
use.

Given a set $I \subseteq S$ of states we denote by $\pred{I}$ the set
of direct predecessors of $I$, i.e., $\pred{I} = \{s \in S \mid
\exists s' \in I \colon s \Rightarrow s'\}$. Additionally, we use
$\pred[Q]{I}$ to denote the restriction $\pred[Q]{I} = \pred{I} \cap
Q$. Furthermore, we define $\predAll{I}$ as the set of all predecessors (in 
$S$) which can reach some state of $I$ with an arbitrary number of transitions. 
To obtain decidability results, the sets of predecessors must be computable,
i.e.~a so-called effective pred-basis must exist.

\begin{definition}[Effective pred-basis]
  \label{def:effective-pred-basis}
  A $Q$-restricted WSTS has an \emph{effective pred-basis} if there exists an
  algorithm accepting any state $s \in S$ and returning
  $\predBasis{s}$, a finite basis of $\upclosed{\pred{\upclosed{\{s\}}}}$.
  It has an \emph{effective $Q$-pred-basis} if there exists an algorithm 
  accepting any state $q \in Q$ and returning $\predBasis[Q]{q}$, a finite 
  basis of $\upclosed{\pred[Q]{\upclosed{\{q\}}}}$.
\end{definition}

Whenever there exists an effective pred-basis, there also exists an
effective $Q$-pred-basis, since we can use the downward closure of
$Q$ to prove $\predBasis[Q]{q} = \predBasis{q} \cap Q$. 
%This does not hold in the other direction.

Let $(S, \Rightarrow, \leq)$ be a $Q$-restricted WSTS with an effective 
pred-basis and let $I \subseteq S$ be an upward-closed set of states with 
finite basis $I_B$. To solve the general coverability problem we compute the 
sequence $I_0, I_1, I_2 \ldots$ where $I_0 = I_B$ and $I_{n+1} = I_n \cup 
\predBasis{I_n}$. If the sequence $\upclosed{I_0} \subseteq \upclosed{I_1} 
\subseteq \upclosed{I_2} \subseteq \ldots$ becomes stationary, i.e.~there is an 
$m$ with $\upclosed{I_m} = \upclosed{I_{m+1}}$, then $\upclosed{I_m} = 
\upclosed{\predAll{I}}$ and a state of $I$ is coverable from a state $s$ if and 
only if there exists an $s' \in I_m$ with $s' \leq s$. If $\leq$ is a wqo on 
$S$, by Lemma~\ref{lem:wqo-ucs} every upward-closed set is finitely 
representable and every sequence becomes stationary. However, in general the 
sequence might not become stationary if $Q \neq S$, in which case the problem 
becomes semi-decidable, since termination is no longer guaranteed (although 
correctness is).

The restricted coverability problem can be solved in a similar way, if
an effective $Q$-pred-basis exists. Let $I^Q \subseteq S$ be an upward
closed set of states with finite basis $I^Q_B \subseteq Q$. We
compute the sequence $I^Q_0, I^Q_1, I^Q_2, \ldots$ with $I^Q_0 =
I^Q_B$ and $I^Q_{n+1} = I^Q_n \cup \predBasis[Q]{I^Q_n}$. Contrary to
the general coverability problem, the sequence $\upclosed{I^Q_0} \cap
Q \subseteq \upclosed{I^Q_1} \cap Q \subseteq \upclosed{I^Q_2} \cap Q
\subseteq \ldots$ is guaranteed to become stationary according to
Lemma~\ref{lem:wqo-ucs}. Let again $m$ be the first index with
$\upclosed{I^Q_m} = \upclosed{I^Q_{m+1}}$, and set $\Rightarrow_Q\ = 
(\Rightarrow \cap\ Q \times Q)$.
We obtain the following result, of which the classical decidability
result of \cite{fs:well-structured-everywhere} is a special case.

\newcommand{\thmCoveringProblem}{
  Let $T = (S, \Rightarrow, \leq)$ be a $Q$-restricted WSTS with a
  decidable order $\leq$.
  \begin{enumerate}[(i)]
  \item \label{thm:covering-problem:case-all} If $T$ has an effective
    pred-basis and $S = Q$, the general and restricted coverability
    problems coincide and both are decidable.
  \item \label{thm:covering-problem:case-special} If $T$ has an
    effective $Q$-pred-basis, the restricted coverability problem is
    decidable if $Q$ is closed under reachability.
  \item \label{thm:covering-problem:case-general} If $T$ has an
    effective $Q$-pred-basis and $I^Q_m$ is the limit as described
    above, then: if $s \in \upclosed{I^Q_m}$, then $s$ covers a state
    of $I^Q$ in $\Rightarrow$ (general coverability). If $s \notin
    \upclosed{I^Q_m}$, then $s$ does not cover a state of $I^Q$ in
    $\Rightarrow_Q$ (no restricted coverability).
  \item \label{thm:covering-problem:case-hope} If $T$ has an effective
    pred-basis and the sequence $I_n$ becomes stationary for $n=m$,
    then: a state $s$ covers a state of $I$ if and only if $s \in
    \upclosed{I_m}$.
  \end{enumerate}
}
\newcommand{\thmCoveringProblemTitle}{Coverability problems}
\begin{theorem}[\thmCoveringProblemTitle]\label{thm:covering-problem}
  \thmCoveringProblem
\end{theorem}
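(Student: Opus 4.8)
The plan is to run the classical backwards-reachability argument of \cite{fs:well-structured-everywhere}, carefully tracking where the weakening from ``wqo on $S$'' to ``wqo on $Q$'' forces us to restrict attention. The common engine for all four parts is the following invariant, which I would establish first: for the sequence $I_0 \subseteq I_1 \subseteq \ldots$ with $I_0 = I_B$ and $I_{n+1} = I_n \cup \predBasis{I_n}$, one has $\upclosed{I_n} = \upclosed{\{s \in S \mid s \text{ reaches a state of } I \text{ in at most } n \text{ steps}\}}$. The ``$\subseteq$'' direction uses the definition of effective pred-basis (each element of $\predBasis{I_n}$ covers a genuine predecessor of something in $\upclosed{I_n}$), and the ``$\supseteq$'' direction uses the Compatibility condition of Definition~\ref{def:wsts}: if $s \Rightarrow s'$ with $s'$ covering some element reaching $I$ in $n{-}1$ steps, then $s$ covers an element of $\pred{\upclosed{\ldots}}$, hence of $\upclosed{I_n}$. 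Taking the union over all $n$ gives $\bigcup_n \upclosed{I_n} = \upclosed{\predAll{I}}$, so once the chain stabilises at $m$ we have $\upclosed{I_m} = \upclosed{\predAll{I}}$, and $s$ covers a state of $I$ iff $s \in \upclosed{\predAll{I}} = \upclosed{I_m}$ iff some $s' \in I_m$ satisfies $s' \leq s$ — the last equivalence using decidability of $\leq$ and finiteness of $I_m$. This already yields part~\eqref{thm:covering-problem:case-hope}, and, when $S = Q$, Lemma~\ref{lem:wqo-ucs}(2) guarantees the chain stabilises, giving part~\eqref{thm:covering-problem:case-all} (the coincidence of the two problems being immediate from $S = Q$).

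For parts~\eqref{thm:covering-problem:case-special} and \eqref{thm:covering-problem:case-general} I would repeat the argument with the $Q$-indexed sequence $I^Q_n$ and the $Q$-restricted pred-basis, but now the invariant has to be stated relative to $\Rightarrow_Q$: $\upclosed{I^Q_n} \cap Q$ equals the set of states in $Q$ reaching $I^Q$ within $n$ steps \emph{all lying in $Q$}. Here the key point is the identity $\predBasis[Q]{q} = \predBasis{q} \cap Q$ noted after Definition~\ref{def:effective-pred-basis}, together with $Q$ being downward closed, so that intersecting with $Q$ commutes appropriately with taking upward closures of bases. Crucially, Lemma~\ref{lem:wqo-ucs}(2) applies to the chain $\upclosed{I^Q_0}\cap Q \subseteq \upclosed{I^Q_1}\cap Q \subseteq \ldots$ because $\leq$ is a wqo \emph{on $Q$} — this is exactly the place the restricted hypothesis is used and the reason the restricted sequence always terminates even when the unrestricted one need not. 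This gives a decision procedure for $\Rightarrow_Q$-coverability outright, and hence part~\eqref{thm:covering-problem:case-general}'s second half (non-membership $\Rightarrow$ no restricted coverability); its first half (membership $\Rightarrow$ general coverability) follows because every $\Rightarrow_Q$-step is a $\Rightarrow$-step, so a witnessing path in $Q$ is a fortiori a witnessing path in $S$. For part~\eqref{thm:covering-problem:case-special}, if additionally $Q$ is closed under reachability, then starting from $s \in Q$ any $\Rightarrow$-computation stays in $Q$, so general and restricted coverability from $s$ coincide and decidability of the latter transfers.

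The main obstacle I anticipate is not any single hard step but the bookkeeping around the two notions of closure and the two transition relations: one must be careful that ``basis of $\upclosed{\pred{\upclosed{\{s\}}}}$'' behaves well under unions and under intersection with the downward-closed set $Q$, and that the monotone-chain lemma is invoked with the correct underlying order in each case. In particular I would take care to verify that $\upclosed{I^Q_m} = \upclosed{I^Q_{m+1}}$ (stabilisation of the bases' closures in $S$) is equivalent to stabilisation of the chain $\upclosed{I^Q_n} \cap Q$ used in the termination argument — this hinges on $I^Q_n \subseteq Q$ and $Q$ downward closed, and is the one spot where a sloppy argument could go wrong. Everything else is a routine transcription of the standard WSTS backwards-search correctness proof.
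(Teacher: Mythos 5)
Your plan for the unrestricted sequence is essentially the standard backwards-search argument, but even there the ``engine'' is overstated: under the weak compatibility of Definition~\ref{def:wsts} (one step of the smaller state is simulated by a $\Rightarrow^*$-sequence of the larger one, of arbitrary length), the equality $\upclosed{I_n} = \upclosed{\{s \in S \mid s \text{ reaches } I \text{ in at most } n \text{ steps}\}}$ is not provable -- the step bound is lost each time compatibility is invoked. What holds, and what you actually need, are the two one-sided facts: $\{s \mid s \text{ reaches } I \text{ within } n \text{ steps}\} \subseteq \upclosed{I_n}$ (this uses only the definition of the pred-basis, not compatibility, contrary to your attribution), and $\upclosed{I_n} \subseteq \upclosed{\predAll{I}}$ together with the compatibility-based lifting showing that any $s \geq s'$ with $s' \in \predAll{I}$ itself has a $\Rightarrow$-path into $I$. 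With this repair, your treatment of cases~(\ref{thm:covering-problem:case-all}) and~(\ref{thm:covering-problem:case-hope}) goes through and matches the paper.

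The genuine gap is in the restricted cases. Your $Q$-relative invariant ``$\upclosed{I^Q_n} \cap Q$ equals the set of states in $Q$ reaching $I^Q$ within $n$ steps all lying in $Q$'' is false, and the ensuing claim that this ``gives a decision procedure for $\Rightarrow_Q$-coverability outright'' cannot be correct: it would contradict Proposition~\ref{prop:subgraph-is-undecidable}, where an effective pred-basis (hence a $Q$-pred-basis) exists but restricted coverability is undecidable, and it would make the deliberately asymmetric statement of case~(\ref{thm:covering-problem:case-general}) superfluous. The failure point: an element of $\predBasis[Q]{q}$ covers some $y \in Q$ with $y \Rightarrow z$ and $q \leq z$, but neither $z$ nor the continuation of the path produced by compatibility need lie in $Q$; so membership in $\upclosed{I^Q_m}$ only yields a $\Rightarrow$-path (possibly leaving $Q$) into $\upclosed{I^Q_B}$. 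Consequently your derivation of the first half of~(\ref{thm:covering-problem:case-general}) from ``a witnessing path in $Q$'' is unsound; the correct argument (the paper's) lifts, by repeated use of compatibility, the pseudo-path through basis elements $q_m, q_{m-1}, \dots$ to a genuine $\Rightarrow$-path from $s$, accepting that its intermediate states may leave $Q$, while the second half uses only the sound ``$\supseteq$'' inclusion ($q_i \in \upclosed{I^Q_{k-i}}$ along any $\Rightarrow_Q$-path) and Lemma~\ref{lem:wqo-ucs} on $Q$ for termination. Case~(\ref{thm:covering-problem:case-special}) should then be obtained directly by observing that when $Q$ is closed under reachability the system restricted to $Q$ is a classical WSTS (as the paper does), not as a corollary of the unsupported outright decidability claim.
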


Thus, if $T$ is a $Q$-restricted WSTS and the ``error states'' can be
represented as an upward-closed set $I$, then the reachability of an
error state of $I$ can be determined as described above, depending on
which of the cases of Theorem~\ref{thm:covering-problem} applies.
Note that it is not always necessary to compute the limits $I_m$ or
$I^Q_m$, since $\upclosed{I_i} \subseteq \upclosed{I_m}$ (and
$\upclosed{I_i^Q} \subseteq \upclosed{I_m^Q}$) for any $i \in \nat_0$.
Hence, if $s \in \upclosed{I_i}$ (or $s \in \upclosed{I_i^Q}$) for
some $i$, then we already know that $s$ covers a state of $I$ (or of
$I^Q$) in $\Rightarrow$.

\subsection{Graph Transformation Systems}\label{sec:gts}

In the following we define the basics of hypergraphs and GTSs as a
special form of transition systems where the states are hypergraphs
and the rewriting rules are hypergraph morphisms. We prefer
hypergraphs over directed or undirected graphs since they are more
convenient for system modelling.

\begin{definition}[Hypergraph]\label{def:hypergraph}
  Let $\Lambda$ be a finite sets of edge labels and $\arity \colon
  \Lambda \to \nat$ a function that assigns an arity to each label.  A
  \emph{($\Lambda$-)hypergraph} is a tuple $(V_{G}, E_{G}, c_{G},
  l_G^E)$ where $V_{G}$ is a finite set of nodes, $E_{G}$ is a finite
  set of edges, $c_{G}\colon E_{G} \rightarrow V_{G}^*$ is an (ordered)
  connection function and $l_G^E\colon E_{G} \rightarrow \Lambda$ is
  an edge labelling function. We require that $|c_G(e)| =
  \arity(l_G^E(e))$ for each edge $e \in E_G$.

  An edge $e$ is called \emph{incident} to a node $v$ (and vice versa) if $v$ 
  occurs in $c_G(e)$.
\end{definition}

From now on we will often call hypergraphs simply graphs.  An (elementary)
\emph{undirected path} of length $n$ in a hypergraph is an alternating
sequence $v_0, e_1, v_1, \dots, v_{n-1}, e_n, v_n$ of nodes and edges
such that for every index $1\le i \le n$ both nodes $v_{i-1}$ and
$v_i$ are incident to $e_i$ and the undirected path contains all nodes
and edges at most once.  Note that there is no established notion of
directed paths for hypergraphs, but our definition gives rise to
undirected paths in the setting of directed graphs (which are a
special form of hypergraphs).

% (Note that for directed graphs, our notion of paths corresponds to
% undirected paths, where the edge direction is ignored.)

\begin{definition}[Partial hypergraph morphism]\label{def:morphism}
Let $G$, $G'$ be ($\Lambda$-)hyper\-graphs. A \emph{partial hypergraph 
morphism} (or simply \emph{morphism}) $\phi \colon G\pto G'$ consists of a 
pair of partial functions $(\phi_{V}: V_{G} \pto V_{G'}, \phi_{E}: E_{G} \pto 
E_{G'})$ such that for every $e \in E_{G}$ it holds that $l_{G}(e) = 
l_{G'}(\phi_E(e))$ and $\phi_{V}(c_{G}(e)) = c_{G'}(\phi_{E}(e))$ whenever 
$\phi_E(e)$ is defined. Furthermore if a morphism is defined on an edge, it 
must be defined on all nodes incident to it.
\emph{Total morphisms} are denoted by an arrow of the form $\to$.
\end{definition}

For simplicity we will drop the subscripts and write $\phi$ instead of
$\phi_V$ and~$\phi_E$.  We call two graphs $G_1$, $G_2$ isomorphic if
there exists a total bijective morphism $\phi : G_1 \to G_2$.

Graph rewriting relies on the notion of \emph{pushouts}.  It is known that
pushouts of partial graph morphisms always exist and are unique up to
isomorphism.  Intuitively, for morphisms $\phi : G_0 \pto G_1$, $\psi
: G_0 \pto G_2$, the pushout is obtained by gluing the two graphs
$G_1,G_2$ over the common interface $G_0$ and by deleting all elements
which are undefined under $\phi$ or $\psi$ (for a formal definition see 
\opt{long}{Appendix~\ref{sec:pushouts}}\opt{short}{\cite{r:gra-handbook}}).

We will take pushouts mainly in the situation described in
Definition~\ref{def:rewriting} below, where $r$ (the rule) is partial
and connects the left-hand side $L$ and the right-hand side $R$. It is
applied to a graph $G$ via a total match $m$. In order to ensure that
the resulting morphism $m'$ (the co-match of the right-hand side in
the resulting graph) is also total, we have to require a match $m$ to be
\emph{conflict-free} wrt.\ $r$, i.e., if there are two elements $x,y$
of $L$ with $m(x)=m(y)$ either $r(x),r(y)$ are both defined or both
undefined.
Here we consider a graph rewriting approach called the
\emph{single-pushout approach (SPO)} \cite{ehklrwc:algebraic-approaches-II},
since it relies on one pushout square, and restrict to conflict-free matches.

\begin{definition}[Graph rewriting]\label{def:rewriting}\label{def:gts}
  A \emph{rewriting rule} is a partial morphism $r\colon L\pto R$,
  where $L$ is called left-hand and $R$ right-hand side.  A
  \emph{match} (of $r$) is a total morphism $m\colon L\to G$,
  conflict-free wrt.\ $r$.  Given a rule and a match, a
  \emph{rewriting step} or rule application is given by a pushout
  diagram as shown below, resulting in the graph $H$.

\noindent
\parbox{0.8\textwidth}{%
  A \emph{graph transformation system (GTS)} is a finite set of rules
  $\mathcal{R}$.  Given a fixed set of graphs $\mathcal{G}$, a
  \emph{graph transition system} on $\mathcal{G}$ generated by a graph
  transformation system $\mathcal{R}$ is represented by a tuple
  $(\mathcal{G},\Rightarrow)$ where $\mathcal{G}$ is the set of states
  and $G \Rightarrow G'$ if and only if $G,G' \in \mathcal{G}$ and $G$
  can be rewritten to $G'$ using a rule of $\mathcal{R}$.}
\parbox{0.2\textwidth}{%
\begin{center}
  \scalebox{\catDiagramScale}{\begin{tikzpicture}[x=1.5cm,y=-1.5cm]

\node (L) at (0,0) {$L$};
\node (R) at (1,0) {$R$};
\node (G) at (0,1) {$G$};
\node (H) at (1,1) {$H$};
\draw[mor-parl] (L) -- node [midway, above] {$r$} (R);
\draw[mor-tot] (L) -- node [midway, left] {$m$} (G);
\draw[mor-tot] (R) -- node [midway, right] {$m'$} (H);
\draw[mor-parl] (G) -- (H);

\end{tikzpicture}%
}
\end{center}}
\end{definition}

Later we will have to apply rules backwards, which means that it is
necessary to compute so-called pushout complements, i.e., given $r$
and $m'$ above, we want to obtain $G$ (such that $m$ is total and
conflict-free). How this computation can be performed in general is
described in \cite{HJKS:pocs2010}.  Note that pushout complements are
not unique and possibly do not exist for arbitrary morphisms. For two
partial morphisms the number of pushout complements may be infinite.

% Mostly we will use GTSs and graph transition systems synonymously,
% e.g.~we call a GTS well-structured if the arising graph transition
% system is well-structured.

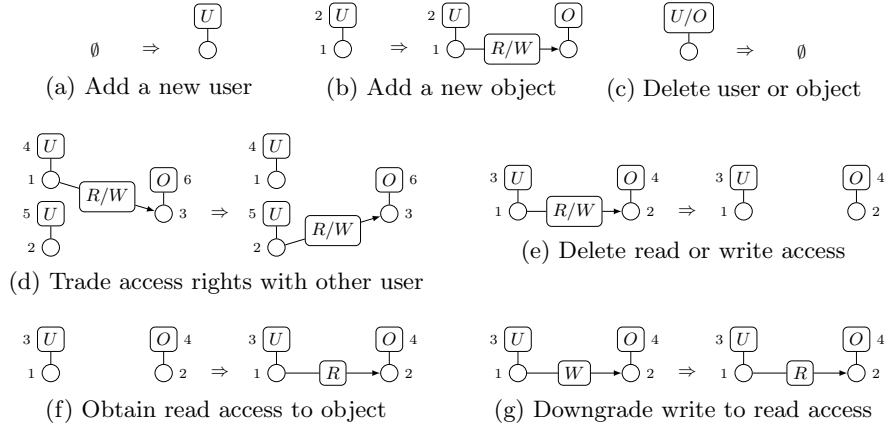
\begin{figure}[ht]
\centering
	\subfloat[Add a new user]{%
		\label{fig:main-example-add-user}
		\begin{minipage}[c]{0.27\textwidth}%
		\centering
		\scalebox{\ExampleRuleScale}{\begin{tikzpicture}[StdGraphGrid]

%%%%%%%%%%%%%%%%%%%% L %%%%%%%%%%%%%%%%%%%%
\node at (0,0) {$\emptyset$};

%%%%%%%%%%%%%%%%%%%% arrow %%%%%%%%%%%%%%%%%%%%
\node at (0.5,0) {$\Rightarrow$};

%%%%%%%%%%%%%%%%%%%% R %%%%%%%%%%%%%%%%%%%%
\begin{scope}[shift={(1,0)}]
  \node[stdnode] (rn1) at (0,0) {};
  \node[henode] (re1) at (0,-0.5) {$U$};
  \draw[undiredge] (rn1) -- (re1);
\end{scope}

\end{tikzpicture}%}
		\end{minipage}}%
	\subfloat[Add a new object]{%
		\label{fig:main-example-add-object}
		\begin{minipage}[c]{0.35\textwidth}%
		\centering
		\scalebox{\ExampleRuleScale}{\begin{tikzpicture}[StdGraphGrid]

%%%%%%%%%%%%%%%%%%%% L %%%%%%%%%%%%%%%%%%%%
\begin{scope}[shift={(0,0)}, every label/.style={font=\scriptsize}]
  \node[stdnode, label=left:{$1$}] (ln1) at (0,0) {};
  \node[henode, label=left:{$2$}] (le1) at (0,-0.5) {$U$};
  \draw[undiredge] (ln1) -- (le1);
\end{scope}

%%%%%%%%%%%%%%%%%%%% arrow %%%%%%%%%%%%%%%%%%%%
\node at (0.5,0) {$\Rightarrow$};

%%%%%%%%%%%%%%%%%%%% R %%%%%%%%%%%%%%%%%%%%
\begin{scope}[shift={(1,0)}, every label/.style={font=\scriptsize}]
  \node[stdnode, label=left:{$1$}] (rn1) at (0,0) {};
  \node[stdnode] (rn2) at (1,0) {};
  \node[henode, label=left:{$2$}] (re1) at (0,-0.5) {$U$};
  \node[henode] (re2) at (1,-0.5) {$O$};
  \node[henode] (re3) at ($(rn1)!0.5!(rn2)$) {$R/W$};
  \draw[undiredge] (rn1) -- (re1);
  \draw[undiredge] (rn2) -- (re2);
  \draw[graphedge] (rn1) -- (re3) -- (rn2);
\end{scope}

\end{tikzpicture}%}
		\end{minipage}}%
	\subfloat[Delete user or object]{%
		\label{fig:main-example-delete}
		\begin{minipage}[c]{0.27\textwidth}%
		\centering
		\scalebox{\ExampleRuleScale}{\begin{tikzpicture}[StdGraphGrid]

%%%%%%%%%%%%%%%%%%%% L %%%%%%%%%%%%%%%%%%%%
\begin{scope}[shift={(0,0)}]
  \node[stdnode] (ln1) at (0,0) {};
  \node[henode] (le1) at (0,-0.5) {$U/O$};
  \draw[undiredge] (ln1) -- (le1);
\end{scope}

%%%%%%%%%%%%%%%%%%%% arrow %%%%%%%%%%%%%%%%%%%%
\node at (0.5,0) {$\Rightarrow$};

%%%%%%%%%%%%%%%%%%%% R %%%%%%%%%%%%%%%%%%%%
\begin{scope}[shift={(1,0)}]
  \node at (0,0) {$\emptyset$};
\end{scope}

\end{tikzpicture}%}
		\end{minipage}}\\
	\subfloat[Trade access rights with other user]{%
		\label{fig:main-example-trade-rights}
		\begin{minipage}[c]{0.5\textwidth}%
		\centering
		\scalebox{\ExampleRuleScale}{\begin{tikzpicture}[StdGraphGrid]

%%%%%%%%%%%%%%%%%%%% L %%%%%%%%%%%%%%%%%%%%
\begin{scope}[shift={(0,0)}, every label/.style={font=\scriptsize}]
  \node[stdnode, label=left:{$1$}] (ln1) at (0,0) {};
  \node[stdnode, label=left:{$2$}] (ln2) at (0,1) {};
  \node[stdnode, label=right:{$3$}] (ln3) at (1,0.5) {};
  \node[henode, label=left:{$4$}] (le1) at (0,-0.5) {$U$};
  \node[henode, label=left:{$5$}] (le2) at (0,0.5) {$U$};
  \node[henode, label=right:{$6$}] (le3) at (1,0) {$O$};
  \node[henode] (le4) at ($(ln1)!0.5!(ln3)$) {$R/W$};
  \draw[undiredge] (ln1) -- (le1);
  \draw[undiredge] (ln2) -- (le2);
  \draw[undiredge] (ln3) -- (le3);
  \draw[graphedge] (ln1) -- (le4) -- (ln3);
\end{scope}

%%%%%%%%%%%%%%%%%%%% arrow %%%%%%%%%%%%%%%%%%%%
\node at (1.5,0.5) {$\Rightarrow$};

%%%%%%%%%%%%%%%%%%%% R %%%%%%%%%%%%%%%%%%%%
\begin{scope}[shift={(2,0)}, every label/.style={font=\scriptsize}]
  \node[stdnode, label=left:{$1$}] (rn1) at (0,0) {};
  \node[stdnode, label=left:{$2$}] (rn2) at (0,1) {};
  \node[stdnode, label=right:{$3$}] (rn3) at (1,0.5) {};
  \node[henode, label=left:{$4$}] (re1) at (0,-0.5) {$U$};
  \node[henode, label=left:{$5$}] (re2) at (0,0.5) {$U$};
  \node[henode, label=right:{$6$}] (re3) at (1,0) {$O$};
  \node[henode] (re4) at ($(rn2)!0.5!(rn3)$) {$R/W$};
  \draw[undiredge] (rn1) -- (re1);
  \draw[undiredge] (rn2) -- (re2);
  \draw[undiredge] (rn3) -- (re3);
  \draw[graphedge] (rn2) -- (re4) -- (rn3);
\end{scope}

\end{tikzpicture}%}
		\end{minipage}}%
  \subfloat[Delete read or write access]{%
		\label{fig:main-example-remove-rights}
		\begin{minipage}[c]{0.5\textwidth}%
		\centering
		\scalebox{\ExampleRuleScale}{\begin{tikzpicture}[StdGraphGrid]

%%%%%%%%%%%%%%%%%%%% L %%%%%%%%%%%%%%%%%%%%
\begin{scope}[shift={(0,0)}, every label/.style={font=\scriptsize}]
  \node[stdnode, label=left:{$1$}] (ln1) at (0,0) {};
  \node[stdnode, label=right:{$2$}] (ln2) at (1,0) {};
  \node[henode, label=left:{$3$}] (le1) at (0,-0.5) {$U$};
  \node[henode, label=right:{$4$}] (le2) at (1,-0.5) {$O$};
  \node[henode] (le3) at ($(ln1)!0.5!(ln2)$) {$R/W$};
  \draw[undiredge] (ln1) -- (le1);
  \draw[undiredge] (ln2) -- (le2);
  \draw[graphedge] (ln1) -- (le3) -- (ln2);
\end{scope}

%%%%%%%%%%%%%%%%%%%% arrow %%%%%%%%%%%%%%%%%%%%
\node at (1.5,0) {$\Rightarrow$};

%%%%%%%%%%%%%%%%%%%% R %%%%%%%%%%%%%%%%%%%%
\begin{scope}[shift={(2,0)}, every label/.style={font=\scriptsize}]
  \node[stdnode, label=left:{$1$}] (rn1) at (0,0) {};
  \node[stdnode, label=right:{$2$}] (rn2) at (1,0) {};
  \node[henode, label=left:{$3$}] (re1) at (0,-0.5) {$U$};
  \node[henode, label=right:{$4$}] (re2) at (1,-0.5) {$O$};
  \draw[undiredge] (rn1) -- (re1);
  \draw[undiredge] (rn2) -- (re2);
\end{scope}

\end{tikzpicture}%}
		\end{minipage}}\\
	\subfloat[Obtain read access to object]{%
		\label{fig:main-example-get-read-right}
		\begin{minipage}[c]{0.5\textwidth}%
		\centering
		\scalebox{\ExampleRuleScale}{\begin{tikzpicture}[StdGraphGrid]

%%%%%%%%%%%%%%%%%%%% L %%%%%%%%%%%%%%%%%%%%
\begin{scope}[shift={(0,0)}, every label/.style={font=\scriptsize}]
  \node[stdnode, label=left:{$1$}] (ln1) at (0,0) {};
  \node[stdnode, label=right:{$2$}] (ln2) at (1,0) {};
  \node[henode, label=left:{$3$}] (le1) at (0,-0.5) {$U$};
  \node[henode, label=right:{$4$}] (le2) at (1,-0.5) {$O$};
  \draw[undiredge] (ln1) -- (le1);
  \draw[undiredge] (ln2) -- (le2);
\end{scope}

%%%%%%%%%%%%%%%%%%%% arrow %%%%%%%%%%%%%%%%%%%%
\node at (1.5,0) {$\Rightarrow$};

%%%%%%%%%%%%%%%%%%%% R %%%%%%%%%%%%%%%%%%%%
\begin{scope}[shift={(2,0)}, every label/.style={font=\scriptsize}]
  \node[stdnode, label=left:{$1$}] (rn1) at (0,0) {};
  \node[stdnode, label=right:{$2$}] (rn2) at (1,0) {};
  \node[henode, label=left:{$3$}] (re1) at (0,-0.5) {$U$};
  \node[henode, label=right:{$4$}] (re2) at (1,-0.5) {$O$};
  \node[henode] (re3) at ($(rn1)!0.5!(rn2)$) {$R$};
  \draw[undiredge] (rn1) -- (re1);
  \draw[undiredge] (rn2) -- (re2);
  \draw[graphedge] (rn1) -- (re3) -- (rn2);
\end{scope}

\end{tikzpicture}%}
		\end{minipage}}
	\subfloat[Downgrade write to read access]{%
		\label{fig:main-example-degrade-rights}
		\begin{minipage}[c]{0.5\textwidth}%
		\centering
		\scalebox{\ExampleRuleScale}{\begin{tikzpicture}[StdGraphGrid]

%%%%%%%%%%%%%%%%%%%% L %%%%%%%%%%%%%%%%%%%%
\begin{scope}[shift={(0,0)}, every label/.style={font=\scriptsize}]
  \node[stdnode, label=left:{$1$}] (ln1) at (0,0) {};
  \node[stdnode, label=right:{$2$}] (ln2) at (1,0) {};
  \node[henode, label=left:{$3$}] (le1) at (0,-0.5) {$U$};
  \node[henode, label=right:{$4$}] (le2) at (1,-0.5) {$O$};
  \node[henode] (le3) at ($(ln1)!0.5!(ln2)$) {$W$};
  \draw[undiredge] (ln1) -- (le1);
  \draw[undiredge] (ln2) -- (le2);
  \draw[graphedge] (ln1) -- (le3) -- (ln2);
\end{scope}

%%%%%%%%%%%%%%%%%%%% arrow %%%%%%%%%%%%%%%%%%%%
\node at (1.5,0) {$\Rightarrow$};

%%%%%%%%%%%%%%%%%%%% R %%%%%%%%%%%%%%%%%%%%
\begin{scope}[shift={(2,0)}, every label/.style={font=\scriptsize}]
  \node[stdnode, label=left:{$1$}] (rn1) at (0,0) {};
  \node[stdnode, label=right:{$2$}] (rn2) at (1,0) {};
  \node[henode, label=left:{$3$}] (re1) at (0,-0.5) {$U$};
  \node[henode, label=right:{$4$}] (re2) at (1,-0.5) {$O$};
  \node[henode] (re3) at ($(rn1)!0.5!(rn2)$) {$R$};
  \draw[undiredge] (rn1) -- (re1);
  \draw[undiredge] (rn2) -- (re2);
  \draw[graphedge] (rn1) -- (re3) -- (rn2);
\end{scope}

\end{tikzpicture}%}
		\end{minipage}}%
\caption{A GTS modelling a multi-user system}
\label{fig:main-example}
\end{figure}

\begin{example}\label{example:rewriting}
  To illustrate graph rewriting we model a multi-user system as a GTS
  (see Figure~\ref{fig:main-example}) inspired by
  \cite{km:gb-models-access-control}. A graph contains user nodes,
  indicated by unary $U$-edges, and object nodes, indicated by unary
  $O$-edges. Users can have read~($R$) or write~($W$) access rights
  regarding objects indicated by a (directed) edge. Note that binary
  edges are depicted by arrows, the numbers describe the rule
  morphisms and labels of the form~$R/W$ represent two rules, one with
  $R$-edges and one with $W$-edges.

  The users and objects can be manipulated by rules for adding new
  users (Fig.~\ref{fig:main-example-add-user}), adding new objects
  with read or write access associated with a user
  (Fig.~\ref{fig:main-example-add-object}) and deleting users or
  objects (Fig.~\ref{fig:main-example-delete}). Both read and write
  access can be traded between users
  (Fig.~\ref{fig:main-example-trade-rights}) or dropped
  (Fig.~\ref{fig:main-example-remove-rights}). Additionally users can
  downgrade their write access to a read access
  (Fig.~\ref{fig:main-example-degrade-rights}) and obtain read access
  of arbitrary objects (Fig.~\ref{fig:main-example-get-read-right}).

  \begin{figure}[t]
    \begin{minipage}{0.25\textwidth}
      \centering
      \scalebox{\graphDiagramScale}{\begin{tikzpicture}[StdGraphGrid]

\node[stdnode] (n1) at (0,0) {};
\node[stdnode] (n2) at (0,1) {};
\node[stdnode] (n3) at (1,0.5) {};
\node[henode] (e1) at (0,-0.5) {$U$};
\node[henode] (e2) at (0,0.5) {$U$};
\node[henode] (e3) at (1,0) {$O$};
\node[henode] (e4) at ($(n1)!0.5!(n3)$) {$W$};
\node[henode] (e5) at ($(n2)!0.5!(n3)$) {$W$};

\draw[undiredge] (n1) -- (e1);
\draw[undiredge] (n2) -- (e2);
\draw[undiredge] (n3) -- (e3);
\draw[graphedge] (n1) -- (e4) -- (n3);
\draw[graphedge] (n2) -- (e5) -- (n3);

\end{tikzpicture}%}
      \caption{An undesired state in the multi-user system}
      \label{fig:main-example-error-graph}
    \end{minipage}
    ~\hfill
    \begin{minipage}{0.75\textwidth}
      \centering
      \scalebox{\graphDiagramScale}{\begin{tikzpicture}[StdGraphGrid]

%%%%%%%%%%%%%%%%%%%% original graph %%%%%%%%%%%%%%%%%%%%

\begin{scope}[shift={(2.3,0)}]
  \node[stdnode] (n1) at (0,0) {};
  \node[stdnode] (n2) at (0,1) {};
  \node[stdnode] (n3) at (1,0) {};
  \node[stdnode] (n4) at (1,1) {};
  \node[henode] (e1) at (0,-0.5) {$U$};
  \node[henode] (e2) at (0,0.5) {$U$};
  \node[henode] (e3) at (1,-0.5) {$O$};
  \node[henode] (e7) at (1,0.5) {$O$};
  \node[henode] (e4) at ($(n1)!0.5!(n3)$) {$W$};
  \node[henode] (e6) at ($(n2)!0.5!(n4)$) {$W$};
  
  \draw[undiredge] (n1) -- (e1);
  \draw[undiredge] (n2) -- (e2);
  \draw[undiredge] (n3) -- (e3);
  \draw[undiredge] (n4) -- (e7);
  \draw[graphedge] (n1) -- (e4) -- (n3);
  \draw[graphedge] (n2) -- (e6) -- (n4);
\end{scope}

%%%%%%%%%%%%%%%%%%%% rule application 1 %%%%%%%%%%%%%%%%%%%%

\node[label=above:{Rule \ref{fig:main-example-trade-rights}}] at (1.65,0.5) 
{$\Leftarrow$};

\begin{scope}[shift={(4.6,0)}]
  \node[stdnode] (ln1) at (0,0) {};
  \node[stdnode] (ln2) at (0,1) {};
  \node[stdnode] (ln3) at (1,0) {};
  \node[stdnode] (ln4) at (1,1) {};
  \node[henode] (le1) at (0,-0.5) {$U$};
  \node[henode] (le2) at (0,0.5) {$U$};
  \node[henode] (le3) at (1,-0.5) {$O$};
  \node[henode] (le7) at (1,0.5) {$O$};
  \node[henode] (le4) at ($(ln1)!0.5!(ln3)$) {$W$};
  \node[henode] (le5) at ($(ln2)!0.5!(ln3)$) {$R$};
  \node[henode] (le6) at ($(ln2)!0.5!(ln4)$) {$W$};
  
  \draw[undiredge] (ln1) -- (le1);
  \draw[undiredge] (ln2) -- (le2);
  \draw[undiredge] (ln3) -- (le3);
  \draw[undiredge] (ln4) -- (le7);
  \draw[graphedge] (ln1) -- (le4) -- (ln3);
  \draw[graphedge] (ln2) -- (le5) -- (ln3);
  \draw[graphedge] (ln2) -- (le6) -- (ln4);
\end{scope}

%%%%%%%%%%%%%%%%%%%% rule application 2 %%%%%%%%%%%%%%%%%%%%

\node[label=above:{Rule \ref{fig:main-example-get-read-right}}] at (3.95,0.5) 
{$\Rightarrow$};

\begin{scope}
  \node[stdnode] (rn1) at (0,0) {};
  \node[stdnode] (rn2) at (0,1) {};
  \node[stdnode] (rn3) at (1,0) {};
  \node[stdnode] (rn4) at (1,1) {};
  \node[henode] (re1) at (0,-0.5) {$U$};
  \node[henode] (re2) at (0,0.5) {$U$};
  \node[henode] (re3) at (1,-0.5) {$O$};
  \node[henode] (re7) at (1,0.5) {$O$};
  \node[henode] (re5) at ($(rn2)!0.5!(rn3)$) {$W$};
  \node[henode] (re6) at ($(rn2)!0.5!(rn4)$) {$W$};
  
  \draw[undiredge] (rn1) -- (re1);
  \draw[undiredge] (rn2) -- (re2);
  \draw[undiredge] (rn3) -- (re3);
  \draw[undiredge] (rn4) -- (re7);
  \draw[graphedge] (rn2) -- (re5) -- (rn3);
  \draw[graphedge] (rn2) -- (re6) -- (rn4);
\end{scope}

\end{tikzpicture}%}
      \caption{Example of two rule applications}
      \label{fig:main-example-rewriting}
    \end{minipage}
  \end{figure}
  
  In a multi-user system there can be arbitrary many users with read
  access to an object, but at most one user may have write access.
  This means especially that any configuration of the system
  containing the graph depicted in
  Figure~\ref{fig:main-example-error-graph} is erroneous.

  An application of the Rules~\ref{fig:main-example-trade-rights}
  and~\ref{fig:main-example-get-read-right} is shown in
  Figure~\ref{fig:main-example-rewriting}. In general, nodes and edges
  on which the rule morphism~$r$ is undefined are deleted and nodes
  and edges of the right-hand side are added if they have no preimage
  under~$r$. In the case of non-injective rule morphisms, nodes or
  edges with the same image are merged.  Finally, node deletion
  results in the deletion of all incident edges (which would otherwise
  be left dangling).  For instance, if
  Rule~\ref{fig:main-example-delete} is applied, all read/write access
  edges attached to the single deleted node will be deleted as well.
\end{example}

\section{GTS as WSTS: A General Framework}
\label{sec:gts-as-wsts}

In this section we state some sufficient conditions such that the
coverability problems for $\mathcal{Q}$-restricted well-structured GTS
can be solved in the sense of Theorem~\ref{thm:covering-problem} (in
the following we use $\mathcal{Q}$ to emphasize that $\mathcal{Q}$ is
a set of graphs). We will also give an appropriate backward algorithm.
The basic idea is to represent the wqo by a given class of morphisms.

\begin{definition}[Representable by morphisms] 
  \label{def:condition-representation}
  Let $\genOrder$ be a quasi-order that satisfies $G_1\genOrder G_2$,
  $G_2\genOrder G_1$ for two graphs $G_1,G_2$ if and only if $G_1,G_2$
  are isomorphic, i.e., $\genOrder$ is anti-symmetric up to
  isomorphism. 

  We call $\genOrder$ \emph{representable by morphisms} if there is a
  class of (partial) morphisms $\mathcal{M}_{\genOrder}$ such that for
  two graphs $G,G'$ it holds that $G' \genOrder G$ if and only if
  there is a morphism $(\mu \colon G \genArrow G') \in
  \mathcal{M}_{\genOrder}$. Furthermore, for $(\mu_1 : G_1 \genArrow
  G_2), (\mu_2 : G_2 \genArrow G_3) \in \mathcal{M}_{\genOrder}$ it
  holds that $\mu_2 \circ \mu_1 \in \mathcal{M}_{\genOrder}$, i.e.,
  $\mathcal{M}_{\genOrder}$ is closed under composition. We call such
  morphisms $\mu$ \emph{order morphisms}.
\end{definition}

The intuition behind an order morphism is the following: whenever
there is an order morphism from $G$ to $G'$, we usually assume that
$G'$ is the smaller graph that can be obtained from $G$ by some form
of node deletion, edge deletion or edge contraction. For any graphs
$G$ (which represent all larger graphs) we can now compose rules and
order morphisms to simulate a co-match of a rule to some graph larger
than $G$. However, for this construction to yield correct results, the
order morphisms have to satisfy the following two properties.

\noindent
\parbox{0.8\textwidth}{%
  \begin{definition}[Pushout preservation]
    \label{def:condition-preservation} We say that a set of order
    morphisms $\mathcal{M}_{\genOrder}$ is \emph{preserved by total
      pushouts} if the following holds: if $(\mu \colon G_0 \genArrow
    G_1) \in \mathcal{M}_{\genOrder}$ is an order morphism and $g
    \colon G_0 \to G_2$ is total, then the morphism $\mu'$ in the
    pushout diagram on the right is an order morphism of
    $\mathcal{M}_{\genOrder}$.
  \end{definition}}
\parbox{0.2\textwidth}{%
\begin{center}
  \scalebox{\catDiagramScale}{\begin{tikzpicture}[x=1.5cm,y=-1.5cm]

%%%%%%%%%%%%%%%%%%%% objects %%%%%%%%%%%%%%%%%%%%
  \node (G0) at (0,0) {$G_0$};
  \node (G1) at (1,0) {$G_1$};
  \node (G2) at (0,1) {$G_2$};
  \node (G3) at (1,1) {$G_3$};

%%%%%%%%%%%%%%%%%%%% arrows %%%%%%%%%%%%%%%%%%%%

\draw[mor-genorder] (G0) -- (G1) node [midway, above] {$\mu$};
\draw[mor-tot] (G0) -- (G2) node [midway, right] {$g$};
\draw[mor-parl] (G1) -- (G3) node [midway, right] {$g'$};
\draw[mor-genorder] (G2) -- (G3) node [midway, above] {$\mu'$};

\end{tikzpicture}%}
\end{center}}

%The next property is needed to ensure that whenever a graph $H$,
%resulting from a rewriting step, is represented by a graph $S$
%($S\sqsubseteq H$), we can make a backwards rewriting step from $S$,
%obtaining a graph $G'$ that represents $G$.
The next property is needed to ensure that every graph $G$, which is rewritten 
to a graph $H$ larger than $S$, is represented by a graph $G'$ obtained by a 
backward rewriting step from $S$, i.e.~the backward step need not be applied to 
$H$.

\begin{definition}[Pushout closure] \label{def:condition-closure} Let
  $m \colon L \to G$ be total and conflict-free wrt.~$r\colon L\pto
  R$. A set of order morphisms is called \emph{pushout closed} if the
  following holds: if the diagram below on the left is a pushout and
  $\mu \colon H \genArrow S$ an order morphism, then there exist
  graphs $R'$ and $G'$ and order morphisms $\mu_R \colon R \genArrow
  R'$, $\mu_G \colon G \genArrow G'$, such that:
  \begin{enumerate}
  \item the diagram below on the right commutes and the outer square is a 
  pushout.
\item the morphisms $\mu_G \circ m \colon L \to G'$ and $n \colon R'
  \to S$ are total and $\mu_G \circ m$ is conflict-free wrt.~$r$.
\end{enumerate}
\begin{center}
  \scalebox{\catDiagramScale}{\begin{tikzpicture}[x=1.2cm,y=-1.2cm]

%%%%%%%%%%%%%%%%%%%% left diagram %%%%%%%%%%%%%%%%%%%%
\begin{scope}[shift={(0,0)}]
  \node (La) at (0,0) {$L$};
  \node (Ra) at (1,0) {$R$};
  \node (Ga) at (0,1) {$G$};
  \node (Ha) at (1,1) {$H$};
  \node (Sa) at (2,2) {$S$};
  \draw[mor-parl] (La) -- (Ra) node [midway, above] {$r$};
  \draw[mor-tot] (La) -- (Ga) node [midway, left] {$m$};
  \draw[mor-tot] (Ra) -- (Ha) node [midway, right] {$m'$};
  \draw[mor-parl] (Ga) -- (Ha) node [midway, above] {$r'$};
  \draw[mor-genorder] (Ha) -- (Sa) node [midway, above right] {$\mu$};
\end{scope}

%%%%%%%%%%%%%%%%%%%% right diagram %%%%%%%%%%%%%%%%%%%%
\begin{scope}[shift={(3.5,0)}]
  \node (Lb) at (0,0) {$L$};
  \node (Rb) at (1,0) {$R$};
  \node (Mb) at (2,0) {$R'$};
  \node (Gb) at (0,1) {$G$};
  \node (Hb) at (1,1) {$H$};
  \node (Sb) at (2,2) {$S$};
  \node (Xb) at (0,2) {$G'$};
  \draw[mor-parl] (Lb) -- (Rb) node [midway, above] {$r$};
  \draw[mor-genorder] (Rb) -- (Mb) node [midway, above] {$\mu_R$};
  \draw[mor-tot] (Lb) -- (Gb) node [midway, left] {$m$};
  \draw[mor-tot] (Rb) -- (Hb) node [midway, right] {$m'$};
  \draw[mor-tot] (Mb) -- (Sb) node [midway, right] {$n$};
  \draw[mor-parl] (Gb) -- (Hb) node [midway, above] {$r'$};
  \draw[mor-genorder] (Gb) -- (Xb) node [midway, left] {$\mu_G$};
  \draw[mor-parl] (Xb) -- (Sb) node [midway, above] {$s$};
  \draw[mor-genorder] (Hb) -- (Sb) node [midway, above right] {$\mu$};
\end{scope}

\end{tikzpicture}%}
\end{center}
\end{definition}

We now present a generic backward algorithm for (partially) solving
both coverability problems. The procedure has two variants, which both
require a GTS, an order and a set of final graphs to generate a set
of minimal representatives of graphs covering a final graph. The
first variant computes the sequence $I_n^Q$ and restricts the set of
graphs to ensure termination. It can be used for
cases~(\ref{thm:covering-problem:case-all}),~(\ref{thm:covering-problem:case-special})
and~(\ref{thm:covering-problem:case-general}) of
Theorem~\ref{thm:covering-problem}, while the second variant computes $I_n$ 
(without restriction) and can be used for
cases~(\ref{thm:covering-problem:case-all}) and
(\ref{thm:covering-problem:case-hope}).

\begin{procedure}[Computation of the ($\mathcal{Q}$-)pred-basis]
\label{procedure:main}\mbox{}

  \pseudoParagraph{Input} A set $\mathcal{R}$ of graph transformation
  rules, a quasi-order $\genOrder$ on all graphs which is a wqo on a 
  downward-closed set $\mathcal{Q}$ and a finite set of final graphs 
  $\mathcal{F}$, satisfying:
  \begin{itemize}
  \item The transition system generated by the rule set $\mathcal{R}$
    is a $\mathcal{Q}$-restricted WSTS with respect to the order
    $\genOrder$.
  \item The order $\genOrder$ is representable by a class of morphisms
    $\mathcal{M}_{\genOrder}$
    (Definition~\ref{def:condition-representation}) and this class
    satisfies Definitions~\ref{def:condition-preservation} and
    \ref{def:condition-closure}.
  \item \emph{Variant 1.} The set of minimal pushout complements
    restricted to $\mathcal{Q}$ with respect to $\genOrder$ is
    computable, for all pairs of rules and co-matches (it is
    automatically finite).
  
    \emph{Variant 2.} The set of minimal pushout complements with
    respect to $\genOrder$ is finite and computable, for all pairs of
    rules and co-matches.
  \end{itemize}

  \pseudoParagraph{Preparation} Generate a new rule set $\mathcal{R}'$
  from $\mathcal{R}$ in the following way: for every rule $(r : L \pto
  R) \in \mathcal{R}$ and every order morphism $\mu \colon R \genArrow
  \overline{R}$ add the rule $\mu \circ r$ to $\mathcal{R}'$. (Note
  that it is sufficient to take a representative $\overline{R}$ for
  each of the finitely many isomorphism classes, resulting in a finite
  set $\mathcal{R}'$.) Start with the working set $\mathcal{W} =
  \mathcal{F}$ and apply the first backward step.

  \pseudoParagraph{Backward Step} Perform backward steps until the
  sequence of working sets $\mathcal{W}$ becomes stationary. The
  following substeps are performed in one backward step for each rule
  $(r : L \pto R) \in \mathcal{R}'$:

  \begin{enumerate}
  \item \label{backstep:calculate-matchings} For a graph $G \in
    \mathcal{W}$ compute all total morphisms $m' : R \to G$
    (co-matches of $R$ in $G$).
  \item \label{backstep:calculate-pocs} \emph{Variant 1.} For each
    such morphism $m'$ calculate the set $\mathcal{G}_{poc}$ of
    minimal pushout complement objects of $m'$ with rule $r$, which
    are also elements of~$\mathcal{Q}$.

  \emph{Variant 2.} Same as Variant~1, but calculate \emph{all}
  minimal pushout complements, without the restriction to $\mathcal{Q}$.
\item \label{backstep:next-step} Add all remaining graphs in
  $\mathcal{G}_{poc}$ to $\mathcal{W}$ and minimize $\mathcal{W}$ by
  removing all graphs $G'$ for which there is a graph $G'' \in
  \mathcal{W}$ with $G' \neq G''$ and $G''\sqsubseteq G'$.
  \end{enumerate}

  \pseudoParagraph{Result} The resulting set $\mathcal{W}$ contains
  minimal representatives of graphs from which a final state is
  coverable (cf.~Theorem~\ref{thm:covering-problem}).

\ignore{
  \begin{enumerate}
    \item[] \emph{Variant 1.} If there is a graph $G \in \mathcal{W}$ with $G 
    \sqsubseteq G_0$, then $G_0$ covers a final graph within $\Rightarrow$. If 
    there is no such graph $G$, then $G_0$ does not cover a final graph within 
    $\Rightarrow_{\mathcal{Q}}$.
    \item[] \emph{Variant 2.} A graph $G_0$ covers a final graph if and only if 
    there is a graph $G \in \mathcal{W}$ with $G \sqsubseteq G_0$. (Note that 
    this variant might not terminate.)
  \end{enumerate}}
\end{procedure}

The reason for composing rule morphisms with order morphisms when
doing the backwards step is the following: the graph $G$, for which we
perform the step, might not contain a right-hand side $R$ in its
entirety. However, $G$ can represent graphs that do contain $R$ and hence
we have to compute the effect of applying the rule backwards to all
graphs represented by $G$. Instead of enumerating all these graphs
(which are infinitely many), we simulate this effect by looking for
matches of right-hand sides modulo order morphisms.
We show that the procedure is correct by proving the following lemma.

\newcommand{\propProcCorrectness}{Let \predBasis[1]{} and \predBasis[2]{} be a 
single backward step of Procedure~\ref{procedure:main} for Variant~1 and~2 
respectively. For each graph $S$, \predBasis[1]{S} is an effective 
$\mathcal{Q}$-pred-basis and \predBasis[2]{S} is an effective pred-basis.}
\begin{proposition}\label{prop:proc-correctness}
\propProcCorrectness
\end{proposition}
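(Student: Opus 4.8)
The plan is to show two things for each variant: (a) \emph{soundness} — every graph produced by one backward step of the procedure lies in $\upclosed{\pred{\upclosed{\{S\}}}}$ (resp. its $\mathcal{Q}$-restricted version), and (b) \emph{completeness} — the finite set produced is a \emph{basis} of that upward-closed set, i.e. every graph in $\upclosed{\pred{\upclosed{\{S\}}}}$ is above one of the computed graphs. Effectiveness (termination of the step) follows from the input hypotheses: there are finitely many co-matches $m' \colon R \to G$ up to the relevant equivalence since $R$ and $G$ are finite, finitely many rules in $\mathcal{R}'$ by the remark in the Preparation phase, and the minimal pushout complements (restricted to $\mathcal{Q}$ in Variant~1, all of them in Variant~2) are computable and finite by assumption. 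So the real content is (a) and (b).

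For \textbf{soundness}, take $G' \in \mathcal{G}_{poc}$: it arises as a pushout complement of some co-match $m' \colon \overline{R} \to G$ of a composed rule $\overline{r} = \mu \circ r$ with $\mu \colon R \genArrow \overline{R}$ an order morphism, where $G \in \mathcal{W}$ (so $G \in \upclosed{\{S\}}$, as the working set stays within the upward closure of the final graphs — formally one argues inductively, but for a single step we just need $G \genOrder$-above $S$, or rather we are computing a basis of $\upclosed{\pred{\upclosed{\{S\}}}}$ directly from $S$). The point is that $G'$ rewrites, via the \emph{original} rule $r$ with match $m \colon L \to G'$, to a graph $H$ with an order morphism $H \genArrow G$: this is exactly the "decompose the pushout" direction, obtained by factoring the pushout square of $\overline{r}$ through $\mu$ using that $\mathcal{M}_\genOrder$ is closed under composition and interacts well with pushouts (Definition~\ref{def:condition-preservation} gives the needed order morphism on the pushout). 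Hence $G' \Rightarrow H \genOrder G$, and since $G \genOrder S$ we get $G' \in \pred{\upclosed{\{S\}}}$, so $G' \in \upclosed{\pred{\upclosed{\{S\}}}}$. One also checks the match $m$ is total and conflict-free — this is where the "totality" clauses of Definition~\ref{def:condition-closure} (item~2) are used, in their "easy" direction.

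For \textbf{completeness}, let $G_0 \in \pred{\upclosed{\{S\}}}$, say $G_0 \Rightarrow H$ with $H \genOrder S$, via some rule $r \colon L \pto R$ and conflict-free match $m \colon L \to G_0$, giving the pushout square $L, R, G_0, H$. Apply \textbf{Pushout closure} (Definition~\ref{def:condition-closure}) to this square together with the order morphism $\mu \colon H \genArrow S$: it yields $\mu_R \colon R \genArrow R'$, $\mu_G \colon G_0 \genArrow G'$, and a commuting diagram whose outer square ($L, R', G', S$) is a pushout with $\mu_G \circ m$ total and conflict-free and $n \colon R' \to S$ total. Now $\mu_R$ is an order morphism into some $R'$; replacing $R'$ by the representative $\overline{R}$ of its isomorphism class, the composed rule $\mu_R \circ r$ (up to iso, a rule of $\mathcal{R}'$) has co-match $n \colon \overline{R} \to S$, and $G'$ is a pushout complement. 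Hence some \emph{minimal} pushout complement $G'' \genOrder G'$ (minimality w.r.t. $\genOrder$ among pushout complements, in $\mathcal{Q}$ for Variant~1) is returned by step~\ref{backstep:calculate-pocs}. Since $\genOrder$ is transitive and $G'' \genOrder G' \genOrder G_0$ wait — the direction is $\mu_G \colon G_0 \genArrow G'$, i.e. $G' \genOrder G_0$, and $G'' \genOrder G'$, so $G'' \genOrder G_0$ by composition of order morphisms; thus $G_0 \in \upclosed{\{G''\}}$, and $G''$ is in the output, proving the output is a basis.

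The \textbf{main obstacle}, and the only place the two variants genuinely diverge, is the interaction between \emph{minimality of pushout complements} and the \emph{restriction to $\mathcal{Q}$}. In Variant~2 one needs the set of minimal pushout complements to be finite (an explicit input hypothesis) and every pushout complement to dominate a minimal one — the latter requires a well-foundedness argument for $\genOrder$ restricted to pushout complements of a fixed co-match, which I would get from the wqo property where available, or argue combinatorially (pushout complements of a fixed $n \colon \overline{R}\to S$ have boundedly many "new" nodes/edges relative to $S$, so $\genOrder$-descending chains among them are finite). In Variant~1 the subtlety is that $G'$ from Definition~\ref{def:condition-closure} need not lie in $\mathcal{Q}$, but $G_0$ does (we only care about $G_0 \in \mathcal{Q}$ for the $\mathcal{Q}$-pred-basis), and since $\mathcal{Q}$ is downward closed and $G' \genOrder G_0$ we in fact get $G' \in \mathcal{Q}$, so the minimal pushout complement restricted to $\mathcal{Q}$ still dominates a valid witness; conversely, restricting to $\mathcal{Q}$ does not break soundness because it only removes graphs. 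Handling this $\mathcal{Q}$-bookkeeping carefully, and making precise "minimal pushout complement up to the order $\genOrder$" (as opposed to up to iso), is the fiddly part; the categorical decomposition/composition of pushouts themselves is routine given Definitions~\ref{def:condition-preservation} and~\ref{def:condition-closure}.
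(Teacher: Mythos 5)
Your proof follows essentially the same route as the paper's: your soundness argument is exactly Lemma~\ref{lem:proc-correctness1} (rewrite the computed pushout complement with the original rule, decompose the pushout and use Definition~\ref{def:condition-preservation} to obtain an order morphism from the resulting graph onto $S$), your completeness argument is Lemma~\ref{lem:proc-correctness2} (apply Definition~\ref{def:condition-closure} to the rewriting square, then use the downward closure of $\mathcal{Q}$ for Variant~1), and effectiveness is discharged as in the paper from the finiteness/computability hypotheses on minimal pushout complements. Only cosmetic points: in a few places you write the relation the wrong way round (e.g.\ ``$H \genOrder S$'' where you mean $S \genOrder H$), although the order morphisms you actually invoke point in the correct direction, and the totality/conflict-freeness of the match in the soundness direction comes from what the procedure requires of a pushout complement rather than from Definition~\ref{def:condition-closure}.
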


%%%%% proof in Appendix %%%%%

\section{Well-quasi Orders for Graph Transformation Systems}
\label{sec:wqo-for-gts}

\subsection{Minor Ordering}\label{sec:minor-ordering}

We first instantiate the general framework with the minor ordering,
which was already considered in \cite{JK08}.
The minor ordering is a well-known order on graphs, which is defined
as follows: a graph $G$ is a minor of $G'$ whenever $G$ can be
obtained from $G'$ by a series of node deletions, edge deletions and
edge contractions, i.e.~deleting an edge and merging its incident nodes 
according to an arbitrary partition. Robertson and Seymour showed in a seminal 
result that the minor ordering is a wqo on the set of all
graphs \cite{rs:graph-minors-xx}, even for hypergraphs
\cite{RS:graphMinors:XXIII}, thus case~(\ref{thm:covering-problem:case-all}) of 
Theorem~\ref{thm:covering-problem} applies. In 
\cite{JK08,jk:minor-wqo-corrected} we showed that the conditions for WSTS are 
satisfied for a restricted set of GTS by introducing minor morphisms and 
proving a result analogous to Proposition~\ref{prop:proc-correctness}, but only 
for this specific case. The resulting algorithm is a special case of both 
variants of Procedure~\ref{procedure:main}.

\begin{proposition}[\cite{JK08}]
  The coverability problem is decidable for every GTS if the minor ordering is 
  used and the rule set contains edge contraction rules for
  each edge label.
\end{proposition}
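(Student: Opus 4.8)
The plan is to instantiate Procedure~\ref{procedure:main} (either variant, since they coincide when $\mathcal{Q} = S$) with the minor ordering, and to verify that all the hypotheses of Procedure~\ref{procedure:main} and hence of Proposition~\ref{prop:proc-correctness} and Theorem~\ref{thm:covering-problem}\eqref{thm:covering-problem:case-all} are met. Concretely, I would introduce \emph{minor morphisms} as the class $\mathcal{M}_{\genOrder}$: a partial morphism $\mu \colon G \genArrow G'$ that is surjective, injective on edges, and such that the preimage of each node is a connected subgraph of $G$ (contracting each such connected piece to a point and deleting everything outside the preimages realizes exactly one sequence of node deletions, edge deletions and edge contractions). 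One checks that $G' \genOrder G$ in the minor order if and only if such a $\mu$ exists, and that minor morphisms are closed under composition — this is essentially the observation that a minor of a minor is a minor, transported to the morphism level.

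\textbf{Verification of the framework conditions.} First, the \textbf{Ordering} condition: by Robertson--Seymour \cite{rs:graph-minors-xx} and its hypergraph extension \cite{RS:graphMinors:XXIII}, the minor ordering is a wqo on \emph{all} (hyper)graphs, so $\mathcal{Q} = S$ and case~\eqref{thm:covering-problem:case-all} of Theorem~\ref{thm:covering-problem} is the relevant one; the order is clearly decidable. Second, \textbf{Compatibility} (i.e.\ that the generated transition system is a WSTS): here the hypothesis that the rule set contains an edge-contraction rule for every edge label is used. Given $G_1 \genOrder G_2$ (a minor morphism $G_2 \genArrow G_1$ witnesses $G_1$ is a minor of $G_2$) and a step $G_1 \Rightarrow G_1'$, one simulates the step from $G_2$: apply the same rule at the corresponding match in $G_2$, then re-apply the contraction/deletion rules to recover a graph $\geq G_1'$. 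The contraction rules are exactly what let us reconstruct, inside $\Rightarrow^*$, the node merges that the minor morphism performed; node and edge deletions are handled because deleting more is always available. Third, \textbf{Pushout preservation} (Definition~\ref{def:condition-preservation}) and \textbf{Pushout closure} (Definition~\ref{def:condition-closure}): these are the genuinely technical diagram chases. For preservation, one shows that pushing a minor morphism $\mu \colon G_0 \genArrow G_1$ along a total $g \colon G_0 \to G_2$ yields a morphism $\mu'$ that is again surjective, injective on edges, with connected node-preimages — the connectedness is preserved because pushouts glue along $g(G_0)$ and a connected set mapped in stays connected. For closure, given a rewriting pushout and a minor morphism $\mu \colon H \genArrow S$, one factors $\mu$ appropriately through the pushout to produce $R'$, $G'$ and the required minor morphisms $\mu_R, \mu_G$, checking that totality and conflict-freeness of $\mu_G \circ m$ survive; the key point is that minor morphisms only contract connected pieces and delete, so the match $m$ — being total and conflict-free — is not destroyed by composing with $\mu_G$.

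\textbf{Computability of minimal pushout complements.} Finally I would invoke the general pushout-complement construction of \cite{HJKS:pocs2010}: for each rule $r$ and co-match $m'$, the minimal pushout complements (minimal w.r.t.\ the minor ordering) form a finite, computable set — finiteness here uses that we only need one representative per isomorphism class and that minimality w.r.t.\ a wqo bounds the antichain. This discharges the last hypothesis of Procedure~\ref{procedure:main}. With all hypotheses verified, Proposition~\ref{prop:proc-correctness} gives an effective pred-basis, and then Theorem~\ref{thm:covering-problem}\eqref{thm:covering-problem:case-all} yields decidability of the (general, equivalently restricted) coverability problem. Since all of this was carried out explicitly in \cite{JK08,jk:minor-wqo-corrected}, the proof reduces to citing that work and observing that the resulting algorithm is the specialization of Procedure~\ref{procedure:main} to minor morphisms.

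\textbf{Main obstacle.} I expect the \emph{pushout closure} condition (Definition~\ref{def:condition-closure}) to be the hard part: constructing $R'$ and $G'$ and the mediating minor morphisms so that the outer square is a pushout \emph{and} $\mu_G \circ m$ stays total and conflict-free requires carefully tracking how the contractions encoded in $\mu \colon H \genArrow S$ interact with the elements deleted and merged by the rule $r$. The subtlety is the one flagged after Definition~\ref{def:rewriting} — pushout complements need not be unique and conflict-freeness is a real constraint — so one must choose the factorization so that no two $L$-elements identified by $\mu_G \circ m$ straddle the defined/undefined boundary of $r$. Everything else is either a citation (Robertson--Seymour for the wqo, \cite{HJKS:pocs2010} for pushout complements) or a routine check that "surjective, edge-injective, connected node-fibers" is stable under the relevant constructions.
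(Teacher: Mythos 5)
Your proposal matches the paper's treatment: the paper gives no independent proof of this proposition but attributes it to \cite{JK08,jk:minor-wqo-corrected}, observing exactly what you argue --- that the minor ordering (a wqo on all hypergraphs by Robertson--Seymour) is representable by minor morphisms satisfying the framework's conditions, so the algorithm of \cite{JK08} is the specialization of Procedure~\ref{procedure:main} with $\mathcal{Q}=S$ and Theorem~\ref{thm:covering-problem}(i) yields decidability. One small wrinkle in your compatibility sketch: the edge-contraction rules must be applied to the larger graph \emph{before} lifting the match (to realize the node merges so that the left-hand side actually embeds), rather than after the rule application as you phrase it, but this is precisely the argument carried out in \cite{JK08}.
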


\subsection{Subgraph Ordering}
\label{sec:subgraph-ordering}

In this paper we will show that the subgraph ordering and the induced
subgraph ordering satisfy the conditions of
Procedure~\ref{procedure:main} for a restricted set
of graphs and are therefore also compatible with our framework. For
the subgraph ordering we already stated a related result (but for
injective instead of conflict-free matches) in
\cite{bdkss:undecidability-gts}, but did not yet instantiate a general
framework.

\begin{definition}[Subgraph]
  Let $G_1$, $G_2$ be graphs. $G_1$ is a subgraph of $G_2$ (written
  $G_1 \subOrder G_2$) if $G_1$ can be obtained from $G_2$ by a
  sequence of deletions of edges and isolated nodes.  We call a
  partial morphism $\mu \colon G \subArrow S$ a \emph{subgraph
    morphism} if and only if it is injective on all elements on which
  it is defined and surjective.
\end{definition}

It can be shown that the subgraph ordering is representable by
subgraph morphisms, which satisfy the necessary properties.  Using a
result from Ding \cite{ding:subgraphs-wqo} we can show that the set
$\mathcal{G}_k$ of hypergraphs where the length of every undirected
path is bounded by $k$, is well-quasi-ordered by the subgraph
relation. A similar result was shown by Meyer for depth-bounded
systems in \cite{m:structural-stationarity-pi}. Note that we bound
undirected path lengths instead of directed path lengths. For the class
of graphs with bounded directed paths there exists a sequence of
graphs violating the wqo property (a sequence of circles of increasing
length, where the edge directions alternate along the circle).

Since every GTS satisfies the compatibility condition of 
Definition~\ref{def:wsts} naturally, we obtain the following result.

\newcommand{\propSubgraphWSTS}{Let $k$ be a natural number. Every graph 
transformation system forms a $\mathcal{G}_k$-restricted WSTS with the 
subgraph ordering.}
\newcommand{\propSubgraphWSTSTitle}{WSTS wrt.~the subgraph ordering}
\begin{proposition}[\propSubgraphWSTSTitle]\label{prop:subgraph-wsts}
\propSubgraphWSTS
\end{proposition}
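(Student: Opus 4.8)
The plan is to verify the two conditions of Definition~\ref{def:wsts} for the tuple $(\mathcal{G}, \Rightarrow, \subOrder)$, where $\mathcal{G}$ is the set of all $\Lambda$-hypergraphs, $\Rightarrow$ is the rewriting relation generated by the given GTS, and $Q = \mathcal{G}_k$. First I would note that $\mathcal{G}_k$ is indeed a downward-closed subset of $\mathcal{G}$ with decidable membership: deleting edges and isolated nodes from a graph in $\mathcal{G}_k$ cannot create a longer undirected path, so $\mathcal{G}_k$ is downward closed under $\subOrder$; and membership is decidable since one need only check, for each of the finitely many alternating node/edge sequences without repetition, whether its length exceeds $k$.

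For the \textbf{Ordering} condition, $\subOrder$ is clearly a quasi-order (reflexive and transitive, since a sequence of edge/isolated-node deletions composes). That $\subOrder$ restricted to $\mathcal{G}_k$ is a wqo is exactly the content of Ding's theorem \cite{ding:subgraphs-wqo}, invoked as indicated in the text immediately preceding the proposition; so this reduces to citing that result after observing that ``every undirected path has length $\le k$'' is the bound Ding requires. (Strictly one should remark that Ding's result, stated for graphs, transfers to $\Lambda$-hypergraphs with a fixed finite label set $\Lambda$ and bounded arities — this is the only mild technical point, and it follows because a hypergraph with such a bound has bounded-size edges and the path-length bound controls the ``tree-like'' growth in the same way as for graphs.)

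For the \textbf{Compatibility} condition I would show that it holds \emph{unconditionally}, i.e. for any GTS, as the text announces (``Since every GTS satisfies the compatibility condition \dots\ naturally''). Suppose $G_1 \subOrder T_1$ and $G_1 \Rightarrow G_2$ via a rule $r\colon L \pto R$ with conflict-free match $m\colon L \to G_1$. Since $G_1$ is a subgraph of $T_1$, compose $m$ with the inclusion $G_1 \hookrightarrow T_1$ to obtain a total morphism $m_T\colon L \to T_1$; it is still conflict-free wrt.\ $r$ because the inclusion is injective, so no new identifications of elements of $L$ are introduced. Apply $r$ at $m_T$ to get $T_1 \Rightarrow T_2$ via the pushout of $r$ and $m_T$. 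It then remains to check $G_2 \subOrder T_2$: the pushout producing $G_2$ embeds into the pushout producing $T_2$ because pushouts in the category of (partial) hypergraph morphisms preserve the subgraph relation along the match — concretely, $T_2$ is obtained from $G_2$ by adjoining the part of $T_1$ disjoint from the image of $m$ (untouched by the rule) and possibly some extra isolated nodes, i.e.\ $G_2$ arises from $T_2$ by deleting exactly those adjoined edges and isolated nodes, giving $G_2 \subOrder T_2$. Hence $T_1 \Rightarrow T_2$ with $G_2 \subOrder T_2$, and in fact a single step suffices.

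The main obstacle is the last step: making precise that the pushout of $r$ along the enlarged match $m_T$ really is obtained from the pushout along $m$ by freely adjoining the ``context'' $T_1 \setminus m(L)$, so that the comparison $G_2 \subOrder T_2$ is literally a sequence of edge/isolated-node deletions (one has to be careful about dangling edges deleted by node-deletion in $r$, and about the fact that $\subOrder$ allows deleting only \emph{isolated} nodes — but any node of $T_1$ that becomes isolated in $T_2$ and is not in $G_2$ can be deleted, while nodes that are non-isolated in $T_2$ carry edges also absent from $G_2$ which are deleted first). This is a routine but slightly fiddly diagram chase in the SPO setting; everything else is immediate or a citation. I would phrase it via the standard decomposition of an SPO pushout into ``deletion part'' plus ``gluing part'' and the fact that both are stable under extending the match by a subgraph inclusion.
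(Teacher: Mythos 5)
Your overall structure matches the paper's: the paper proves this proposition as ``compatibility holds for every GTS'' (Lemma~\ref{lem:subgraph-compatibility}) plus ``$\subOrder$ is a wqo on $\mathcal{G}_k$'' (Proposition~\ref{prop:subgraph-wqo}). Your compatibility argument is essentially the paper's: lift the match along the embedding $G_1 \hookrightarrow T_1$ (the paper writes this as composing with the inverse of a subgraph morphism), apply the rule there, and show the rewritten small graph embeds in the rewritten large graph. The paper discharges the step you call ``fiddly'' not by a concrete context-adjunction description but by a pushout-decomposition argument: the outer square (rule against the enlarged match) and the upper square (rule against the original match) are pushouts, hence the induced lower square is a pushout, and pushouts preserve total injective morphisms, giving $H \subOrder H'$ in one stroke. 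That cleaner route also sidesteps the inaccuracy in your concrete description: $T_2$ is \emph{not} $G_2$ plus all of $T_1\setminus m(L)$, because context edges of $T_1$ incident to nodes deleted by the rule are removed as dangling edges (you flag this yourself, but your stated decomposition as literally ``adjoining the untouched part'' would need to be repaired along these lines).

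The genuine gap is in the Ordering half. Ding's theorem is about \emph{undirected, simple, node-labelled} graphs; the proposition is about $\Lambda$-hypergraphs with labelled hyperedges, ordered tentacles and parallel edges, and the transfer is exactly the non-trivial content the paper has to prove (Proposition~\ref{prop:subgraph-wqo}). The paper constructs an explicit encoding $f$ of a hypergraph into a node-labelled simple graph (one vertex per node, per hyperedge, and per tentacle position, with labels $N$, the edge label, and the position index), and then verifies two things: (i) if the longest undirected path in $G$ has length at most $k$, then paths in $f(G)$ are bounded by roughly $4k+10$, so Ding applies to the image class; and (ii) the reflection property $f(G_1) \subOrder f(G_2) \Rightarrow G_1 \subOrder G_2$, which needs the labels to force any embedding of $f(G_1)$ into $f(G_2)$ to respect the node/edge/tentacle structure. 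Your parenthetical justification (``bounded-size edges and the path-length bound controls the tree-like growth in the same way as for graphs'') addresses neither (i) quantitatively nor (ii) at all, and (ii) is where a naive encoding can fail. Without some such encoding and reflection argument, the claim that $\subOrder$ is a wqo on $\mathcal{G}_k$ for hypergraphs is not established, so this step cannot be dismissed as a citation plus a remark.
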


%%%%% proof is in the Appendix

The set of minimal pushout complements (not just restricted to $\mathcal{G}_k$) 
is always finite and can be computed as in the minor case.

\newcommand{\propSubgraphDecidable}{Every $\mathcal{G}_k$-restricted
  well-structured GTS with the subgraph order has an effective
  pred-basis and the (decidability) results of
  Theorem~\ref{thm:covering-problem} apply.}
\begin{proposition}\label{prop:subgraph-is-decidable}
\propSubgraphDecidable
\end{proposition}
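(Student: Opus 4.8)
The plan is to reduce Proposition~\ref{prop:subgraph-is-decidable} to Proposition~\ref{prop:proc-correctness} together with Theorem~\ref{thm:covering-problem}, by checking all hypotheses of Procedure~\ref{procedure:main} in its Variant~2. Concretely, I would proceed in four steps. First, I would verify that the subgraph ordering $\subOrder$ is \emph{representable by morphisms} in the sense of Definition~\ref{def:condition-representation}, taking $\mathcal{M}_{\subOrder}$ to be the class of subgraph morphisms. Anti-symmetry up to isomorphism is clear (a finite graph cannot be a proper subgraph of itself), and one has $G' \subOrder G$ iff there is a subgraph morphism $G \subArrow G'$: a subgraph morphism is a partial morphism that is injective where defined and surjective, so its image is exactly an isomorphic copy of $G'$ sitting inside $G$ as an edge-and-isolated-node-deleted subgraph, and conversely every such subgraph yields such a morphism. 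Closure under composition is a short check: the composite of two partial morphisms that are each injective-where-defined and surjective is again of this form (surjectivity composes, and the domain of the composite is where both are defined, on which injectivity is preserved).

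Second, I would check \emph{pushout preservation} (Definition~\ref{def:condition-preservation}) and \emph{pushout closure} (Definition~\ref{def:condition-closure}) for subgraph morphisms. For preservation, given $\mu\colon G_0 \subArrow G_1$ a subgraph morphism and $g\colon G_0 \to G_2$ total, the pushout identifies along $g$ and deletes exactly the elements undefined under $\mu$ (pulled over to the $G_2$-side); one verifies that $\mu'$ is still injective-where-defined (no new collisions are created since $g$ is total and mono-ness issues only arise from $\mu$) and surjective (pushouts of epis are epis). Pushout closure is the more delicate of the two: given the rewriting pushout for $r\colon L\pto R$, $m\colon L\to G$ and an order morphism $\mu\colon H \subArrow S$, I must produce $R'$, $G'$ and subgraph morphisms $\mu_R\colon R\subArrow R'$, $\mu_G\colon G\subArrow G'$ making the outer square a pushout with $\mu_G\circ m$ and $n\colon R'\to S$ total and $\mu_G\circ m$ conflict-free. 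The construction: $\mu$ deletes some elements of $H$; those elements came either from $R$ (via the co-match $m'$) or from $G$ minus the deleted interface part. Deleting the $G$-originating ones gives $G'$ (with $\mu_G\colon G\subArrow G'$ the corresponding subgraph morphism — note $m$ survives since matches land in the preserved part, using that $R$-originating deletions don't kill $L$-images because $r$'s domain elements map into $R$ and then into $H$), and deleting the $R$-originating ones gives $R'$ (with $\mu_R\colon R\subArrow R'$); that $n$ becomes total is exactly the statement that we deleted from $R$ precisely the elements whose $S$-image was removed. Commutativity and the pushout property then follow by the usual pushout-composition/decomposition lemmas. This mirrors the argument already carried out for minor morphisms in \cite{JK08,jk:minor-wqo-corrected}.

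Third, I would argue that minimal pushout complements with respect to $\subOrder$ are finite and computable for every pair $(r,m')$ — the hypothesis specific to Variant~2 of Procedure~\ref{procedure:main}. This is the claim asserted in the sentence preceding the proposition (``computed as in the minor case''): a pushout complement of $r\colon L\pto R$ along $m'\colon R\to S$ is built from $S$ by re-adding the elements that $r$ deletes and choosing how the re-added structure and the $m'$-image may be merged or connected; bounding to \emph{minimal} ones with respect to the subgraph order leaves only finitely many choices (no spurious isolated nodes or parallel edges beyond what $L$ forces), and these are enumerable by the general pushout-complement construction of \cite{HJKS:pocs2010}. Finally, I would invoke Proposition~\ref{prop:proc-correctness}: since all the structural hypotheses hold and (by Proposition~\ref{prop:subgraph-wsts}) the GTS is a $\mathcal{G}_k$-restricted WSTS, the single backward step \predBasis[2]{S} is an effective pred-basis, so the system has an effective pred-basis; the order $\subOrder$ is clearly decidable (it is a finite subgraph-search). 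Hence cases~(\ref{thm:covering-problem:case-all}) (when the reachable graphs stay in $\mathcal{G}_k$, or trivially when unrestricted) and~(\ref{thm:covering-problem:case-hope}) of Theorem~\ref{thm:covering-problem} apply, which is what is meant by ``the (decidability) results of Theorem~\ref{thm:covering-problem} apply.''

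I expect the main obstacle to be the pushout-closure verification: it requires carefully tracking which elements a subgraph morphism $\mu\colon H \subArrow S$ deletes, splitting them according to whether they originate in $R$ or in the pushout complement $G$, and checking that deleting the $G$-part does not destroy totality or conflict-freeness of the match $m$ — the subtle point being that an element of $L$ could, in principle, be identified by $m$ with a soon-to-be-deleted element, which is where conflict-freeness of the original match and the definition of subgraph morphism must be used together. Everything else is routine, or explicitly delegated to the cited earlier work.
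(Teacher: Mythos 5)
Your proposal is correct and takes essentially the same route as the paper: the paper's proof of this proposition merely invokes its lemmas stating that the subgraph ordering is representable by subgraph morphisms, that these are preserved by total pushouts and pushout closed (your steps one and two, sketched with the same constructions), notes that minimal pushout complements are finite and computable as in the minor case of \cite{JK08} (your step three), and concludes via Proposition~\ref{prop:proc-correctness} and Theorem~\ref{thm:covering-problem}. The only slight inaccuracy is your appeal to case~(\ref{thm:covering-problem:case-all}), which requires $S=Q$ and hence does not literally apply to a properly $\mathcal{G}_k$-restricted system; the applicable cases are (\ref{thm:covering-problem:case-general}) and (\ref{thm:covering-problem:case-hope}), plus (\ref{thm:covering-problem:case-special}) when $\mathcal{G}_k$ is closed under reachability.
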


%%%%% proof is in the Appendix

By a simple reduction from the reachability problem for two counter machines, 
we can show that the restricted coverability problem is undecidable in the 
general case. Although we cannot directly simulate the zero test, i.e.~negative 
application conditions are not possible, we can make sure that the rules 
simulating the zero test are applied correctly if and only if the bound $k$ was 
not exceeded.

\newcommand{\propSubgraphUndecidable}{Let $k > 2$ be a natural number. The 
restricted coverability problem for $\mathcal{G}_k$-restricted well-structured 
GTS with the subgraph ordering is undecidable.}
\begin{proposition}\label{prop:subgraph-is-undecidable}
\propSubgraphUndecidable
\end{proposition}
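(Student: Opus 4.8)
The plan is to reduce the halting problem (equivalently, the reachability problem) for two-counter Minsky machines to the restricted coverability problem for $\mathcal{G}_k$-restricted well-structured GTS with the subgraph ordering. Since the restricted coverability problem only asks about runs that stay inside $\mathcal{G}_k$, I will engineer an encoding in which the bound $k$ doubles as a syntactic safeguard: a faithful simulation run of the Minsky machine never exceeds undirected path length $k$, whereas any attempt to cheat on a zero-test forces a longer undirected path, hence a graph outside $\mathcal{G}_k$, so such a run is excluded from the restricted problem by definition.

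First I would fix the encoding of configurations. A configuration $(q, n_1, n_2)$ of the machine is represented by a graph that has a ``control'' node labelled by the current state $q$, and two counter values encoded as chains: counter $i$ of value $n_i$ is a path $v_0, e_1, v_1, \dots, e_{n_i}, v_{n_i}$ of unary-or-binary edges all carrying a dedicated label $C_i$, with one distinguished endpoint attached to the control node and the other endpoint marked by a unary ``tip'' edge $T_i$. Increment/decrement rules act locally at the tip: decrement ($\minskyDec{c_i}$) deletes the last edge and moves the tip marker back one node; increment ($\minskyInc{c_i}$) adds a fresh edge and node at the tip. These are standard and clearly length-preserving up to $\pm 1$ on that counter's chain.

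The heart of the construction is the zero-test $\minskyZero{c_i}$. Since SPO rewriting with conflict-free matches has no negative application conditions, I cannot directly check ``the chain has length $0$.'' Instead I use a two-phase gadget guarded by the path bound. The idea (following the sketch in the paragraph preceding the statement) is: the zero-test rule is only allowed to fire when the counter's tip node coincides with the node attached to the control node; I enforce this by having the rule's left-hand side merge those two nodes via the (conflict-free, non-injective) match, so that if the chain actually had positive length the match would create a short cycle that, combined with a companion ``flag'' edge the rule installs, produces an undirected path of length exactly $k+1$ — leaving $\mathcal{G}_k$. One then checks that a \emph{correct} simulation, in which the zero-test is only taken when $n_i = 0$, keeps every reachable graph within $\mathcal{G}_k$: the chains have bounded length because the machine, along the particular halting run we are encoding, has bounded counter values... but the machine's counters are a priori unbounded, so this needs care — the cleanest fix is to reduce not from halting but from the \emph{bounded} reachability/halting of Minsky machines, or to choose $k$ large relative to the specific computation; alternatively, encode counters in binary so that value $n_i$ needs only $O(\log n_i)$ path length and pick $k$ accordingly. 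I would pick whichever variant makes the bookkeeping lightest while keeping source (counter machine reachability) undecidable; $k > 2$ in the statement signals that only a small constant is needed for the gadget's flag/cycle structure, with the counter chains contributing the rest.

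With the encoding in hand, the correctness argument has two directions. (Soundness) If the Minsky machine reaches its halting state, the simulating GTS has a run, staying in $\mathcal{G}_k$ by the invariant above, from the encoded initial configuration to a graph containing the designated ``halt'' edge; taking $\mathcal{F}$ to be (the upward closure of) a single-edge graph with that halt label, the final graph is covered, and all intermediate graphs lie in $\mathcal{Q} = \mathcal{G}_k$, so this is a witness for restricted coverability. (Completeness) Conversely, any $\mathcal{G}_k$-restricted run covering the halt graph must, step by step, correspond to a legal Minsky transition: the only non-obvious case is the zero-test, and here I invoke the gadget — a zero-test step applied when the counter is non-zero necessarily leaves $\mathcal{G}_k$ (the length-$(k+1)$ undirected path is created), contradicting the restriction; hence every zero-test in the run is genuine, so the run projects to an accepting computation of the machine. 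Combining both directions, restricted coverability holds iff the machine halts, which is undecidable.

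The main obstacle I anticipate is exactly the interplay between ``unbounded counters'' and ``fixed bound $k$'': I must make sure the invariant ``a faithful run stays in $\mathcal{G}_k$'' actually holds. The honest resolution is to reduce from a problem that is already undecidable but only involves computations with a priori bounded space in the relevant sense — e.g.\ reachability of a halting configuration where we do not bound the counters globally but the \emph{gadget} is designed so that overflow of a counter beyond the bound is detected and simply blocks the cheating transition rather than the honest one — or, more robustly, to use a binary (logarithmic) encoding of counters together with length-$O(\log)$ chains and observe that along any halting run the counters, hence chains, are finite; since the reduction is from a single fixed machine we are free to... no — $k$ must be fixed in advance, independent of the run length, so the binary encoding alone does not suffice either. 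The cleanest escape, which I would adopt, is the one hinted at in the text: do \emph{not} try to keep honest runs in $\mathcal{G}_k$ for \emph{all} machines, but build a single universal construction in which a run leaving $\mathcal{G}_k$ is interpreted as ``undefined/failed,'' show that the \emph{only} way to leave $\mathcal{G}_k$ is by a bad zero-test, and conclude that restricted coverability (which ignores runs leaving $\mathcal{G}_k$) coincides with machine halting. Getting this universal gadget right — so that legitimate increments never by themselves push a chain past length $k$ because the zero-test gadget forces the chain to be ``reset/checked'' often enough, i.e.\ the machine is preprocessed to a bounded-space-between-tests form — is the delicate combinatorial core of the proof.
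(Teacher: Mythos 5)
There is a genuine gap, and it is exactly the one you flag yourself at the end: with your chain encoding of counters, an \emph{honest} simulation run leaves $\mathcal{G}_k$ as soon as a counter value exceeds (roughly) $k$, and none of the escapes you sketch repairs this. Reducing from bounded reachability of Minsky machines does not work because that problem is decidable; a binary encoding still yields chains whose length grows without bound along a halting run, while $k$ is fixed in advance; and "preprocessing the machine to a bounded-space-between-tests form" is not a transformation that exists in general while keeping the source problem undecidable. So the reduction, as proposed, does not establish that restricted coverability holds iff the machine halts — the forward (soundness) direction fails whenever the machine's computation uses large counter values, and the construction is left explicitly unfinished at its "delicate combinatorial core."

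The paper resolves precisely this point with a different counter encoding: a counter is a \emph{star}, a single hub node to which the current value is recorded as the number of attached $X$-edges, with increment/decrement adding/removing one such edge. The longest undirected path through such a star is $2$ no matter how large the counter gets, so every graph along a faithful run lies in $\mathcal{G}_2$ (hence in $\mathcal{G}_k$ for the $k$ of the statement; for general $k$ the gadget's appendage is lengthened to $k$). The zero-test rule then attaches a short path of two $X$-edges to the counter hub and "blocks" the state edge (auxiliary rules later remove the appendage and unblock); this creates an undirected path of length $3$ exactly when the counter already had an $X$-edge, i.e.\ was non-zero, so within $\mathcal{G}_2$-restricted runs every zero-test is genuine. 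Your overall strategy — Minsky reduction plus using the path bound as a guard that invalidates cheating zero-tests — matches the paper's, but the missing idea is this constant-path-length (radial rather than linear) representation of unbounded counter values, without which the argument does not go through.
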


\begin{example}
  Now assume that an error graph is given and that a graph exhibits an error if 
  and only if it contains the error graph as a subgraph. Then we can use
  Proposition~\ref{prop:subgraph-is-decidable} to calculate all graphs
  which lead to some error configuration.

  For instance, let a multi-user system as described in
  Example~\ref{example:rewriting} be given. Normally we have to choose
  a bound on the undirected path length to guarantee termination, but
  in this example Variant~2 of Procedure~\ref{procedure:main}
  terminates and we can solve coverability on the unrestricted
  transition system (see
  Theorem~\ref{thm:covering-problem}(\ref{thm:covering-problem:case-hope})).
  The graph in Figure~\ref{fig:main-example-error-graph} represents
  the error in the system and by applying
  Procedure~\ref{procedure:main} we obtain a set of four graphs (one
  of which is the error graph itself), fully characterizing all
  predecessor graphs. We can observe that the error can only be
  reached from graphs already containing two $W$-edges going to a
  single object node. Hence, the error is not produced by the given
  rule set if we start with the empty graph and thus the system is
  correct.

  Interestingly the backward search finds the leftmost graph below due
  to the depicted sequence of rule applications, which leads directly to the 
  error graph. Thus, the error can occur even if a single user has two write 
  access rights to an object, because of access right trading.
  \begin{center}
    \scalebox{\graphDiagramScale}{\begin{tikzpicture}[StdGraphGrid]

%%%%%%%%%%%%%%%%%%%% G1 %%%%%%%%%%%%%%%%%%%%
\begin{scope}[shift={(0,0)}]
  \node[stdnode] (g1n1) at (0,0) {};
  \node[stdnode] (g1n3) at (1,0) {};
  \node[henode] (g1e1) at (0,-0.5) {$U$};
  \node[henode] (g1e3) at (1,-0.5) {$O$};
  \node[henode] (g1e5) at ($(g1n1)!0.5!(g1n3)+(0,-0.25)$) {$W$};
  \node[henode] (g1e6) at ($(g1n1)!0.5!(g1n3)+(0,0.25)$) {$W$};
  \draw[undiredge] (g1n1) -- (g1e1);
  \draw[undiredge] (g1n3) -- (g1e3);
  \draw[graphedge] (g1n1) -- (g1e5) -- (g1n3);
  \draw[graphedge] (g1n1) -- (g1e6) -- (g1n3);
\end{scope}

\node[label=above:{Rule \ref{fig:main-example-add-user}}] at (1.75,0.5) 
{$\Rightarrow$};
%%%%%%%%%%%%%%%%%%%% G2 %%%%%%%%%%%%%%%%%%%%
\begin{scope}[shift={(2.5,0)}]
  \node[stdnode] (g2n1) at (0,0) {};
  \node[stdnode] (g2n2) at (0,1) {};
  \node[stdnode] (g2n3) at (1,0) {};
  \node[henode] (g2e1) at (0,-0.5) {$U$};
  \node[henode] (g2e2) at (0,0.5) {$U$};
  \node[henode] (g2e3) at (1,-0.5) {$O$};
  \node[henode] (g2e5) at ($(g2n1)!0.5!(g2n3)+(0,-0.25)$) {$W$};
  \node[henode] (g2e6) at ($(g2n1)!0.5!(g2n3)+(0,0.25)$) {$W$};
  \draw[undiredge] (g2n1) -- (g2e1);
  \draw[undiredge] (g2n2) -- (g2e2);
  \draw[undiredge] (g2n3) -- (g2e3);
  \draw[graphedge] (g2n1) -- (g2e5) -- (g2n3);
  \draw[graphedge] (g2n1) -- (g2e6) -- (g2n3);
\end{scope}

% \node[label=above:{Rule \ref{fig:main-example-add-object}}] at (4.25,0.5) 
% {$\Rightarrow$};
% %%%%%%%%%%%%%%%%%%%% G3 %%%%%%%%%%%%%%%%%%%%
% \begin{scope}[shift={(5,0)}]
%   \node[stdnode] (g3n1) at (0,0) {};
%   \node[stdnode] (g3n2) at (0,1) {};
%   \node[stdnode] (g3n3) at (1,0) {};
%   \node[stdnode] (g3n4) at (1,1) {};
%   \node[henode] (g3e1) at (0,-0.5) {$U$};
%   \node[henode] (g3e2) at (0,0.5) {$U$};
%   \node[henode] (g3e3) at (1,-0.5) {$O$};
%   \node[henode] (g3e4) at (1,0.5) {$O$};
%   \node[henode] (g3e5) at ($(g3n1)!0.5!(g3n3)+(0,-0.25)$) {$W$};
%   \node[henode] (g3e6) at ($(g3n1)!0.5!(g3n3)+(0,0.25)$) {$W$};
%   \node[henode] (g3e7) at ($(g3n2)!0.5!(g3n4)$) {$W$};
%   \draw[undiredge] (g3n1) -- (g3e1);
%   \draw[undiredge] (g3n2) -- (g3e2);
%   \draw[undiredge] (g3n3) -- (g3e3);
%   \draw[undiredge] (g3n4) -- (g3e4);
%   \draw[graphedge] (g3n1) -- (g3e5) -- (g3n3);
%   \draw[graphedge] (g3n1) -- (g3e6) -- (g3n3);
%   \draw[graphedge] (g3n2) -- (g3e7) -- (g3n4);
% \end{scope}

\node[label=above:{Rule \ref{fig:main-example-trade-rights}}] at (4.25,0.5) 
{$\Rightarrow$};
%%%%%%%%%%%%%%%%%%%% G4 %%%%%%%%%%%%%%%%%%%%
\begin{scope}[shift={(5,0)}]
  \node[stdnode] (g4n1) at (0,0) {};
  \node[stdnode] (g4n2) at (0,1) {};
  \node[stdnode] (g4n3) at (1,0) {};
%  \node[stdnode] (g4n4) at (1,1) {};
  \node[henode] (g4e1) at (0,-0.5) {$U$};
  \node[henode] (g4e2) at (0,0.5) {$U$};
  \node[henode] (g4e3) at (1,-0.5) {$O$};
%  \node[henode] (g4e4) at (1,0.5) {$O$};
  \node[henode] (g4e5) at ($(g4n1)!0.5!(g4n3)$) {$W$};
  \node[henode] (g4e6) at ($(g4n2)!0.5!(g4n3)$) {$W$};
%  \node[henode] (g4e7) at ($(g4n2)!0.5!(g4n4)$) {$W$};
  \draw[undiredge] (g4n1) -- (g4e1);
  \draw[undiredge] (g4n2) -- (g4e2);
  \draw[undiredge] (g4n3) -- (g4e3);
%  \draw[undiredge] (g4n4) -- (g4e4);
  \draw[graphedge] (g4n1) -- (g4e5) -- (g4n3);
  \draw[graphedge] (g4n2) -- (g4e6) -- (g4n3);
%  \draw[graphedge] (g4n2) -- (g4e7) -- (g4n4);
\end{scope}

\end{tikzpicture}%}
  \end{center}
  The other two graphs computed are shown below and represent
  states with "broken" structure (a node cannot be a user \emph{and}
  an object). The left graph for instance can be rewritten to a graph
  larger than the left graph above, by a non-injective match of the
  rule in Figure~\ref{fig:main-example-trade-rights} mapping both
  nodes~2 and~3 to the right node.
  \begin{center}
   	\begin{minipage}[c]{0.45\textwidth}%
   	\centering
   	\scalebox{\graphDiagramScale}{\begin{tikzpicture}[StdGraphGrid]

\node[stdnode] (ln1) at (0,0) {};
\node[stdnode] (ln2) at (1,0) {};
\node[henode] (le1) at (0,-0.5) {$U$};
\node[henode] (le2) at (1,-0.5) {$U$};
\node[henode] (le3) at (1.5,0) {$O$};
\node[henode] (le4) at (1,0.7) {$W$};
\node[henode] (le5) at ($(ln1)!0.5!(ln2)$) {$W$};
\draw[undiredge] (ln1) -- (le1);
\draw[undiredge] (ln2) -- (le2);
\draw[undiredge] (ln2) -- (le3);
\draw[undiredge] (ln2) to[bend right] (le4);
\draw[graphedge] (le4) to[bend right] (ln2);
\draw[graphedge] (ln1) -- (le5) -- (ln2);

\end{tikzpicture}%}
   	\end{minipage}%
   	\begin{minipage}[c]{0.45\textwidth}%
   	\centering
   	\scalebox{\graphDiagramScale}{\begin{tikzpicture}[StdGraphGrid]

\node[stdnode] (ln1) at (0,0) {};
\node[henode] (le1) at (0,-0.5) {$U$};
\node[henode] (le2) at (0,0.5) {$O$};
\node[henode] (le3) at (0.5,0) {$W$};
\node[henode] (le4) at (-0.5,0) {$W$};
\draw[undiredge] (ln1) -- (le1);
\draw[undiredge] (ln1) -- (le2);
\draw[undiredge] (ln1) to[bend right] (le3);
\draw[graphedge] (le3) to[bend right] (ln1);
\draw[undiredge] (ln1) to[bend right] (le4);
\draw[graphedge] (le4) to[bend right] (ln1);

\end{tikzpicture}%}
   	\end{minipage}%
  \end{center}
%  \begin{figure}[ht]
%  \centering
%  	\begin{minipage}[c]{0.45\textwidth}%
%  	\centering
%  	\img{\graphDiagramScale}{diagrams/example2-graph3}
%  	\end{minipage}%
%  	\begin{minipage}[c]{0.45\textwidth}%
%  	\centering
%  	\img{\graphDiagramScale}{diagrams/example2-graph4}
%  	\end{minipage}%
%  \caption{Two of four graphs calculated by the Procedure~\ref{procedure:main} 
%  for Example\ref{example:rewriting}}
%  \label{fig:subgraph-example-other-graphs}
%  \end{figure}
\end{example}

\subsection{Induced Subgraph Ordering}
\label{sec:induced-subgraph-ordering}

As for the subgraph ordering in Section~\ref{sec:subgraph-ordering}
our backward algorithm can also be applied to the induced subgraph
ordering, where a graph $G$ is considered as an induced subgraph of
$G'$ if every edge in $G'$ connecting only nodes also present in $G$,
is contained in $G$ as well. Unfortunately, this ordering is not a wqo
even when bounding the longest undirected path in a graph, such that
we also have to bound the multiplicity of edges between two nodes.
Note that this restriction is implicitly done in
\cite{ding:subgraphs-wqo} since Ding uses simple graphs.

Furthermore, since we do not know whether the induced subgraph
ordering can be extended to a wqo on (a class of)
hypergraphs, we here use only \emph{directed graphs}, where each edge
is connected to a sequence of exactly two nodes. For many applications
directed graphs are sufficient for modelling, also for our examples,
since unary hyperedges can simply be represented by loops.

At first, this order seems unnecessary, since it is stricter than the
subgraph ordering and is a wqo on a more restricted set of graphs. On
the other hand, it allows us to specify error graphs more precisely,
since a graph $G \in \mathcal{F}$ does not represent graphs with
additional edges between nodes of $G$. Furthermore one could equip the
rules with a limited form of negative application conditions, still
retaining the compatibility condition of Definition~\ref{def:wsts}.

\begin{definition}[Induced subgraph]
  Let $G_1$, $G_2$ be graphs. $G_1$ is an induced subgraph of $G_2$
  (written $G_1 \indsubOrder G_2$) if $G_1$ can be obtained from $G_2$
  by deleting a subset of the nodes and all incident edges.  We call a
  partial morphism $\mu \colon G \indsubArrow S$ an \emph{induced
    subgraph morphism} if and only if it is injective for all elements
  on which is defined, surjective, and if it is undefined on an edge
  $e$, it is undefined on at least one node incident to $e$.
\end{definition}

\newcommand{\propIndsubgraphWSTS}{ Let $n,k$ be natural numbers and
  let $\mathcal{G}_{n,k}$ be a set of directed, edge-labelled graphs,
  where the longest undirected path is bounded by $n$ and every two nodes are
  connected by at most $k$ parallel edges with the same label (bounded
  edge multiplicity).  Every GTS forms a
  $\mathcal{G}_{n,k}$-restricted WSTS with the induced subgraph
  ordering.}  \newcommand{\propIndsubgraphWSTSTitle}{WSTS wrt.~the
  induced subgraph ordering}
\begin{proposition}[\propIndsubgraphWSTSTitle]
\label{prop:induced-subgraphs-wsts}
\propIndsubgraphWSTS
\end{proposition}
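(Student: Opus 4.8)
The plan is to verify the two defining conditions of a $\mathcal{G}_{n,k}$-restricted WSTS (Definition~\ref{def:wsts}) for the induced subgraph ordering $\indsubOrder$. The \textbf{Ordering} condition has two halves. First, $\indsubOrder$ is a quasi-order on all directed graphs: reflexivity is immediate (delete no nodes), and transitivity follows because deleting a node set $D_1$ from $G_2$ to obtain $G_1$, then a node set $D_2$ from $G_1$ to obtain $G_0$, is the same as deleting $D_1 \cup (\text{preimage of }D_2)$ from $G_2$ in one step — the "delete all incident edges" clause composes cleanly. Second, I must show $\indsubOrder$ is a wqo on $\mathcal{G}_{n,k}$. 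Here I invoke Ding's theorem \cite{ding:subgraphs-wqo}: on simple graphs of bounded undirected path length, the induced subgraph ordering is a wqo. To reduce to Ding's setting, I encode each graph $G \in \mathcal{G}_{n,k}$ (directed, edge-labelled, with at most $k$ parallel same-label edges) as a simple graph $\widehat{G}$: replace each bundle of parallel edges between an ordered pair of nodes by a single vertex carrying a label that records the multiset of (label, direction) data — since both the multiplicity is bounded by $k$ and the label alphabet $\Lambda$ is finite, only finitely many such bundle-labels occur, so $\widehat{G}$ is a finite-vertex-labelled simple graph. Bounded undirected path length in $G$ gives a bound (roughly $2n$) on undirected path length in $\widehat{G}$. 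The key point is that $G_1 \indsubOrder G_2$ iff $\widehat{G_1}$ is an induced subgraph of $\widehat{G_2}$ respecting vertex labels; then a wqo on vertex-labelled simple graphs with bounded path length (Ding, combined with a standard product/Higman argument for the finitely many vertex labels) transfers back along this encoding. Since $\mathcal{G}_{n,k}$ must also be a downward-closed subset of $S$ with decidable membership: downward closure holds because deleting nodes from a graph in $\mathcal{G}_{n,k}$ cannot create longer undirected paths or more parallel edges; decidability of membership is clear.

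For the \textbf{Compatibility} condition I claim that, as remarked just before Proposition~\ref{prop:subgraph-wsts} for the subgraph case, every GTS satisfies it \emph{naturally} — indeed with the stronger statement that a single transition $s_1 \Rightarrow s_2$ is matched by a single transition $t_1 \Rightarrow t_2$. Given $s_1 \indsubOrder t_1$, so $s_1$ is obtained from $t_1$ by deleting a node set $D$ with all incident edges, and given a rewriting step $s_1 \Rightarrow s_2$ via rule $r : L \pto R$ and conflict-free match $m : L \to s_1$, note that $m$ composed with the inclusion $s_1 \hookrightarrow t_1$ is still a total, conflict-free match of $r$ in $t_1$ (the match image avoids $D$ entirely, since $s_1$ has no nodes of $D$). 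Applying $r$ at this match in $t_1$ yields $t_1 \Rightarrow t_2$, and I must check $s_2 \indsubOrder t_2$. The SPO pushout construction is "local": the nodes/edges of $D$, being disjoint from the match, are untouched by the rewriting step in $t_1$ — they are neither deleted nor merged with anything — so $t_2$ is exactly $s_2$ with the node set $D$ (and its incident edges in $t_2$) adjoined, i.e.\ $s_2$ is obtained from $t_2$ by deleting $D$ together with all edges of $t_2$ incident to $D$. That is precisely $s_2 \indsubOrder t_2$. I should spell this out via the universal property of the relevant pushouts, or by an explicit description of the SPO construction as removing undefined elements and adjoining new ones.

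The main obstacle is the wqo half of the Ordering condition, specifically making the encoding $G \mapsto \widehat{G}$ precise enough that "$G_1 \indsubOrder G_2$" corresponds \emph{exactly} to a labelled-induced-subgraph relation on simple graphs — one must be careful that deleting a node in $G$ corresponds to deleting that node together with all the edge-bundle vertices incident to it in $\widehat{G}$, which is again a node-deletion (hence induced subgraph) operation in $\widehat{G}$, and conversely that induced-subgraph witnesses in $\widehat{G}$ pull back to genuine $\indsubOrder$-witnesses (this needs the observation that an edge-bundle vertex can only survive if both its endpoint-vertices survive, which is exactly the "undefined on an edge $\Rightarrow$ undefined on an incident node" shape of induced subgraph morphisms). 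The bounded-multiplicity hypothesis is used precisely to keep the vertex-label set finite; without it the encoded graphs would need infinitely many labels and Ding's result would not apply, which is why the unrestricted induced subgraph ordering fails to be a wqo. The Compatibility part is routine once the locality of SPO rewriting away from the match is stated carefully.
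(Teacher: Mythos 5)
Your treatment of the compatibility condition is fine: composing the match with the inclusion $s_1\hookrightarrow t_1$, applying the same rule, and observing that the deleted node set $D$ is untouched is essentially the paper's argument (Lemmas~\ref{lem:subgraph-compatibility} and~\ref{lem:induced-subgraphs-compatibility}, phrased there via the mediating morphism of the two pushouts). The downward closure and decidability of membership in $\mathcal{G}_{n,k}$ are also correctly handled.

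The gap is in the wqo half. Your reduction to Ding's theorem via the incidence-style encoding (one ``bundle vertex'' per pair of nodes carrying at least one edge) is exactly the encoding the paper uses for the \emph{subgraph} ordering in Proposition~\ref{prop:subgraph-wqo}, but it does not work for the \emph{induced} subgraph ordering: the equivalence you need, namely that $\widehat{G_1}$ being a label-respecting induced subgraph of $\widehat{G_2}$ implies $G_1 \indsubOrder G_2$, is false. Take $G_1$ to be two isolated nodes and $G_2$ two nodes joined by one $A$-edge. Then $G_1 \not\indsubOrder G_2$, yet $\widehat{G_1}$ (two node-vertices, no edges) is an induced subgraph of $\widehat{G_2}$ (two node-vertices plus a bundle vertex): the two node-vertices of $\widehat{G_2}$ are not adjacent to each other, and the bundle vertex simply lies outside the image, so the induced-subgraph condition on the encodings never sees the extra edge of $G_2$. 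Your safeguard (``a bundle vertex can only survive if both endpoints survive'') addresses the converse failure mode, not this one; the problem is bundle vertices of the \emph{larger} graph that are not in the image at all. Repairing this by encoding non-edges as well (empty-labelled bundle vertices for every pair) restores the correspondence but makes the encoded graph essentially complete, destroying the bounded-path-length hypothesis of Ding's theorem; restricting which embeddings count is no longer Ding's order. This is precisely why the paper does not use Ding's theorem as a black box here but adapts his proof (Proposition~\ref{prop:induced-subgraph-wqo}): induction on the type of a graph (bounded undirected path length bounds the type, Lemma~\ref{lem:pathlength-to-type}), splitting at a cut node $v_i$ and pushing the edge data between each remaining node and $v_i$ into node labels via the functions $f_y\colon\Lambda\to\{0,\dots,k\}^2$ --- finitely many labels exactly because of the multiplicity bound $k$ --- and then applying Higman's lemma. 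To complete your proposal you would have to supply such an argument (or a substantially cleverer encoding); as it stands the transfer step $\widehat{G_i}\indsubOrder\widehat{G_j}\Rightarrow G_i\indsubOrder G_j$ is simply wrong.
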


%%%%% proof is in the Appendix

\newcommand{\propIndsubgraphDecidable}{Every $\mathcal{G}_{n,k}$-restricted 
well-structured GTS with the induced subgraph order has an effective 
$\mathcal{G}_{n,k}$-pred-basis and the (decidability) results of 
Theorem~\ref{thm:covering-problem} apply.}
\begin{proposition}\label{prop:induced-subgraph-is-decidable}
\propIndsubgraphDecidable
\end{proposition}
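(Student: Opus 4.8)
The plan is to mirror the treatment of the plain subgraph ordering that underlies Proposition~\ref{prop:subgraph-is-decidable}. Concretely, I would (i) show that the induced subgraph ordering $\indsubOrder$ is representable (in the sense of Definition~\ref{def:condition-representation}) by the class $\mathcal{M}_{\indsubOrder}$ of induced subgraph morphisms, and that this class is preserved by total pushouts (Definition~\ref{def:condition-preservation}) and pushout closed (Definition~\ref{def:condition-closure}); (ii) show that for every rule $r\colon L\pto R$ and every co-match $m'\colon R\to S$ the set of minimal pushout complements lying in $\mathcal{G}_{n,k}$ is finite and computable; and (iii) combine these facts with Variant~1 of Procedure~\ref{procedure:main} and Proposition~\ref{prop:proc-correctness}. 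Since by Proposition~\ref{prop:induced-subgraphs-wsts} every GTS is a $\mathcal{G}_{n,k}$-restricted WSTS for $\indsubOrder$, since $\mathcal{G}_{n,k}$ is downward closed with decidable membership, and since $\indsubOrder$ is itself decidable, step (iii) then yields an effective $\mathcal{G}_{n,k}$-pred-basis and hence the applicability of Theorem~\ref{thm:covering-problem}.

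For step (i), representability is immediate: an induced subgraph morphism $\mu\colon G \indsubArrow S$ is, up to isomorphism, exactly the datum of a set of nodes of $G$ deleted together with all their incident edges, so $S \indsubOrder G$ holds iff such a $\mu$ exists; anti-symmetry up to isomorphism is clear, and closure under composition holds because composing two such node-deletions is again a node-deletion and the clause ``undefined on an edge only if undefined on some incident node'' is stable under composition. For pushout preservation I would argue that, since $g\colon G_0\to G_2$ is total, the only deletions in the pushout stem from $\mu$: the morphism $\mu'$ deletes only images under $g$ of nodes deleted by $\mu$ together with their incident edges, and the verification that $\mu'$ again satisfies the incidence clause of an induced subgraph morphism is the point needing care, proceeding as in the subgraph case by tracking incident nodes through the morphisms involved.

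The technical heart of the proof is the pushout closure condition (Definition~\ref{def:condition-closure}). Given the rewriting pushout of $r\colon L\pto R$ along a total conflict-free match $m\colon L\to G$, with result $H$, co-match $R\to H$ and trace morphism $G\pto H$, and an induced subgraph morphism $\mu\colon H\indsubArrow S$, I would define $R'$ and $G'$ as the induced subgraphs of $R$ and $G$ obtained by deleting exactly the nodes whose images in $H$ are deleted by $\mu$ (and all incident edges), with $\mu_R\colon R\indsubArrow R'$ and $\mu_G\colon G\indsubArrow G'$ the corresponding induced subgraph morphisms and $n\colon R'\to S$, $G'\pto S$ the induced restrictions of $\mu$ precomposed with the co-match and the trace morphism. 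One then checks that $n$ is total and that $\mu_G\circ m$ is total and conflict-free wrt.\ $r$ (both by construction, since $\mu_G$ performs no identifications and a node of $R$ or of $G$ is deleted precisely when its $H$-image is), and that the outer square commutes and is a pushout, which I would reduce to the pushout preservation property established above via a pushout-composition/decomposition argument, recognising $S$ as the pushout of $\mu_R\circ r$ and $\mu_G\circ m$. The delicate point, compared with the plain subgraph case, is that deleting a node of $R$ (or $G$) forces deletion of all its incident edges; showing that these forced edge deletions are matched exactly by the edges $\mu$ deletes in $H$ — so that the diagram really commutes and no spurious edges survive or get removed — is what I expect to be the main obstacle.

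For step (ii), pushout complements along partial morphisms need not be unique or finite, but following \cite{HJKS:pocs2010} every pushout complement of $r$ and $m'$ arises from a canonical finite one by freely adjoining further nodes, (dangling-free) edges and, for directed graphs, parallel edges between existing nodes. Inside $\mathcal{G}_{n,k}$ both the length of undirected paths and the number of equally labelled parallel edges are bounded, which caps the size of admissible adjunctions; hence only finitely many pushout complements lie in $\mathcal{G}_{n,k}$ and they can be enumerated effectively. Together with step (i) this makes Variant~1 of Procedure~\ref{procedure:main} applicable, so by Proposition~\ref{prop:proc-correctness} a single backward step computes an effective $\mathcal{G}_{n,k}$-pred-basis; since $\indsubOrder$ is decidable and $\mathcal{G}_{n,k}$ is downward closed with decidable membership, Theorem~\ref{thm:covering-problem} applies — its case~(\ref{thm:covering-problem:case-general}) unconditionally, and case~(\ref{thm:covering-problem:case-special}) whenever $\mathcal{G}_{n,k}$ is additionally closed under reachability — which completes the proof.
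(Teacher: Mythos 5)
Your overall route coincides with the paper's: verify that induced subgraph morphisms represent $\indsubOrder$, are preserved by total pushouts and are pushout closed, establish computability of the minimal pushout complements inside $\mathcal{G}_{n,k}$, and then apply Proposition~\ref{prop:proc-correctness}, Procedure~\ref{procedure:main} (Variant~1) and Theorem~\ref{thm:covering-problem} together with Proposition~\ref{prop:induced-subgraphs-wsts}. However, your construction for the pushout-closure condition (Definition~\ref{def:condition-closure}) has a genuine flaw. You build $G'$ by deleting from $G$ exactly those nodes whose $H$-images are deleted by $\mu$, and then claim totality of $\mu_G\circ m$ ``by construction''. This fails precisely in the situation the framework is designed for, namely when $S$ does not contain the whole co-match image of $R$: if $x\in L$ is preserved by $r$ and $\mu$ is undefined on $m'(r(x))$, your $\mu_G$ is undefined on $m(x)$, so $\mu_G\circ m$ is not total, violating condition~2 of Definition~\ref{def:condition-closure}. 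A minimal counterexample: let $r$ be the identity on a one-node graph $L$, $G=H$ a single node and $S$ the empty graph with $\mu$ everywhere undefined; your $G'$ is empty, whereas the pushout complement the backward step must find is the one-node graph. The paper's construction (Lemma~\ref{lem:subgraph-condition3}, reused in Lemma~\ref{lem:indsubgraph-condition3}) keeps in $G'$ every element having a preimage in $L$ under $m$, in addition to those on which $\mu\circ r'$ is defined; the deletion of these kept elements is then performed by the composed rule $\mu_R\circ r$ during the backward step, not by $\mu_G$. Without this correction Lemma~\ref{lem:proc-correctness2} breaks down, since the graph it produces must admit a total, conflict-free match of $L$ in order to be generated as a pushout complement in Step~\ref{backstep:calculate-pocs}.

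A smaller inaccuracy concerns your step~(ii). Since the co-match $m'\colon R\to S$ is total, one cannot ``freely adjoin further nodes'' to a pushout complement (a fresh node would survive the pushout and change $S$); the only unbounded freedom is adding edges incident to at least one node deleted by the rule. This is exactly why the paper computes the $\indsubOrder$-minimal complements by taking the subgraph-minimal ones restricted to $\mathcal{G}_{n,k}$ and saturating them with such edges up to the multiplicity bound $k$ — under the induced subgraph ordering these edge-augmented complements are not represented by the edge-free one. Your finiteness-by-bounds argument can be repaired along these lines, but the bare appeal to \cite{HJKS:pocs2010} as stated does not yet justify that all minimal complements with respect to $\indsubOrder$ in $\mathcal{G}_{n,k}$ are effectively obtained.
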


%%%%% proof is in the Appendix

The computation of minimal pushout complements in this case is
considerably more complex, since extra edges have to be added\opt{long}{ (see 
the proof of Proposition~\ref{prop:induced-subgraph-is-decidable} in
Appendix~\ref{sec:proofs})}, but we also obtain additional
expressiveness. In general GTS with negative application conditions do not 
satisfy the compatibility condition with respect to the subgraph relation, but 
we show in the following example, that it may still be satisfied with respect 
to the induced subgraph relation.

\begin{example}
  Let the following simple rule be given, where the negative
  application condition is indicated by the dashed edge, i.e.~the rule
  is applicable if and only if there is a matching only for the solid
  part of the left-hand side and this matching cannot be extended to
  match also the dashed part.

  \begin{center}
    \scalebox{\graphDiagramScale}{\begin{tikzpicture}[StdGraphGrid]

%%%%%%%%%%%%%%%%%%%% L %%%%%%%%%%%%%%%%%%%%
\begin{scope}[shift={(0,0)}, every label/.style={font=\scriptsize}]
  \node[stdnode, label=above:{$1$}] (ln1) at (0,0) {};
  \node[stdnode, label=above:{$2$}] (ln2) at (1,0) {};
  \node[stdnode, label=above:{$3$}] (ln3) at (2,0) {};
  \node[henode, label=above:{$4$}] (le1) at ($(ln1)!0.5!(ln2)$) {$A$};
  \node[henode, label=above:{$5$}] (le2) at ($(ln2)!0.5!(ln3)$) {$A$};
  \node[henode, dashed] (le3) at ($(ln1)!0.5!(ln3)+(0,1)$) {$A$};
  \draw[graphedge] (ln1) -- (le1) -- (ln2);
  \draw[graphedge] (ln2) -- (le2) -- (ln3);
  \draw[graphedge, dashed] (ln1) -- (le3) -- (ln3);
\end{scope}

%%%%%%%%%%%%%%%%%%%% arrow %%%%%%%%%%%%%%%%%%%%
\node at (2.5,0.5) {$\Rightarrow$};

%%%%%%%%%%%%%%%%%%%% R %%%%%%%%%%%%%%%%%%%%
\begin{scope}[shift={(3,0)}, every label/.style={font=\scriptsize}]
  \node[stdnode, label=above:{$1$}] (rn1) at (0,0) {};
  \node[stdnode, label=above:{$2$}] (rn2) at (1,0) {};
  \node[stdnode, label=above:{$3$}] (rn3) at (2,0) {};
  \node[henode, label=above:{$4$}] (re1) at ($(rn1)!0.5!(rn2)$) {$A$};
  \node[henode, label=above:{$5$}] (re2) at ($(rn2)!0.5!(rn3)$) {$A$};
  \node[henode] (re3) at ($(rn1)!0.5!(rn3)+(0,1)$) {$A$};
  \draw[graphedge] (rn1) -- (re1) -- (rn2);
  \draw[graphedge] (rn2) -- (re2) -- (rn3);
  \draw[graphedge] (rn1) -- (re3) -- (rn3);
\end{scope}

\end{tikzpicture}%}
  \end{center}
  
  Applied to a graph containing only $A$-edges, this rule calculates
  the transitive closure and will terminate at some point. This GTS
  satisfies the compatibility condition wrt.~the induced subgraph
  ordering, since for instance a directed path of length two (the
  left-hand side) does not represent graphs where there is an edge
  from the first to the last node of the graph.  Therefore we can use
  the induced subgraph ordering and our procedure to show that a graph
  containing two parallel $A$-edges can only be reached from graphs
  already containing two parallel $A$-edges.
\end{example}

The principle described in the example can be extended to all negative
application conditions which forbid the existence of edges but not of
nodes. This is the case, because if there is no edge between two nodes
of a graph, there is also no edge between these two nodes in any
larger graph. Hence if there is no mapping from the negative
application condition into the smaller graph, there can also be none
into the larger graph. Graphs violating the negative application
condition are simply not represented by the smaller graph. Hence, all
graph transformation rules with such negative application conditions
satisfy the compatibility condition wrt.\ the induced subgraph
ordering. The backward step has to be modified in this case by
dropping all obtained graphs which do not satisfy one of the negative
application conditions.

% As for the subgraph ordering, we can use the induced subgraph ordering
% to prove coverability properties of graph transformation systems. The
% expressible properties differ from the properties expressible using
% the subgraph ordering.

\subsection{Implementation}

We implemented Procedure~\ref{procedure:main} with support for the
minor ordering as well as the subgraph ordering in the tool
\textsc{Uncover}. The tool is written in C++ and designed in a modular
way for easy extension with further orders. The sole optimization
currently implemented is the omission of all rules that are also
order morphisms. It can be shown that the backward application of such
rules produces only graphs which are already represented.  
%A support for parallel computation is planned.

Table~\ref{fig:runtime-results} shows the runtime results of
different case studies, namely a leader election protocol and a
termination detection protocol (in an incorrect as well as a correct
version), using the minor ordering, and the access rights management
protocol described in Figure~\ref{fig:main-example} as well as a
public-private server protocol, using the subgraph order. It shows for
each case the restricted graph set $\mathcal{Q}$, the variant of the
procedure used (for the minor ordering they coincide), the runtime and
the number of minimal graphs representing all predecessors of error
graphs.

\begin{table}[h]
\caption{Runtime result for different case studies}
  \label{fig:runtime-results}
  \centering
  \begin{tabular}{l|c|c|c|r|r}
    \hline
    case study & wqo & graph set $\mathcal{Q}$ & variant & time & \#(error 
    graphs)
    \\\hline\hline
    Leader election & minor & all graphs & 1 / 2 & 3s & 38 \\\hline
    Termination detection (faulty) & minor & all graphs & 1 / 2 & 7s & 69 \\\hline
    Termination detection (correct) & minor & all graphs & 1 / 2 & 2s & 101 
    \\\hline
    Rights management & subgraph & all graphs & 2 & 1s & 4 \\\hline
    Public-private server ($l = 5$) & subgraph & path $\leq 5$ & 1 & 1s 
    & 14 \\\hline
    Public-private server ($l = 6$) & subgraph & path $\leq 6$ & 1 & 
    16s & 16 \\\hline
  \end{tabular}
\end{table}

\section{Conclusion}
\label{sec:conclusion}

We have presented a general framework for viewing GTSs as restricted
WSTSs. We showed that the work in \cite{JK08} for the minor ordering
can be seen as an instance of this framework and we presented two
additional instantiations, based on the subgraph ordering and the
induced subgraph ordering. Furthermore we presented the management of
read and write access rights as an example and discussed our
implementation with very encouraging runtime results. 

% We believe that the problem of checking whether $\mathcal{G}_k$ is
% closed under reachability for a given GTS is decidable, although we
% did not prove it.

Currently we are working on an extension of the presented framework
with rules, which can uniformly change the entire neighbourhood of
nodes. In this case the computed set of predecessor graphs will be an
over-approximation.  More extensions are possible (possibly
introducing over-approximations) and we especially plan to further
investigate the integration of rules with negative application
conditions as for the induced subgraph ordering.  In
\cite{ks:wsts-nac} we introduced an extension with negative
application conditions for the minor ordering, but still, the
interplay of the well-quasi-order and conditions has to be better
understood.  Naturally, we plan to look for additional orders, for
instance the induced minor and topological minor orderings
\cite{fhr:wqo-bounded-treewidth} in order to see whether they can be
integrated into this framework and to study application scenarios.

\smallskip

\noindent \textit{Related work.}  Related to our work is
\cite{BKWZ:2013}, where the authors use the subgraph ordering and a
forward search to prove fair termination for depth-bounded systems.
In \cite{abchr:monotonic-abstraction-heaps} another wqo
for well-structuring graph rewriting is considered, however only for
graphs where every node has out-degree~$1$. It would be interesting to
see whether this wqo can be integrated into our general
framework. The work in \cite{dsz:verification-ad-hoc-networks} uses
the induced subgraph ordering to verify broadcast protocols.
There the rules are different from our setting: a left-hand side
consists of a node and its entire neighbourhood of arbitrary size.
Finally \cite{swj:gg-verification-adhocrouting} uses a backwards
search on graph patterns in order to verify an ad-hoc routing
protocol, but not in the setting of WSTSs.

\smallskip

\noindent\textbf{Acknowledgements:} We would like to thank Roland Meyer, for
giving us the idea to consider the subgraph ordering on graphs, and
Giorgio Delzanno for several interesting discussions on wqos and
WSTSs.

\bibliography{backward-subgraph}

\begin{thebibliography}{10}

\bibitem{abchr:monotonic-abstraction-heaps}
P.~Aziz Abdulla, A.~Bouajjani, J.~Cederberg, F.~Haziza, and A.~Rezine.
\newblock Monotonic abstraction for programs with dynamic memory heaps.
\newblock In {\em Proc. of CAV '08}, pages 341--354. Springer, 2008.
\newblock {LNCS} 5123.

\bibitem{acjt:general-decidability}
P.~Aziz Abdulla, K.~\u{C}er\={a}ns, B.~Jonsson, and Y.~Tsay.
\newblock General decidability theorems for infinite-state systems.
\newblock In {\em Proc. of LICS '96}, pages 313--321. IEEE, 1996.

\bibitem{BKWZ:2013}
K.~Bansal, E.~Koskinen, T.~Wies, and D.~Zufferey.
\newblock Structural counter abstraction.
\newblock In {\em Proc. of TACAS '13}, TACAS'13, pages 62--77. Springer, 2013.

\bibitem{bdkss:undecidability-gts}
N.~Bertrand, G.~Delzanno, B.~K\"onig, A.~Sangnier, and J.~St\"uckrath.
\newblock On the decidability status of reachability and coverability in graph
  transformation systems.
\newblock In {\em Proc. of RTA '12}, volume~15 of {\em {LIPIcs}}, pages
  101--116. Schloss Dagstuhl -- Leibniz Center for Informatics, 2012.

\bibitem{dsz:verification-ad-hoc-networks}
G.~Delzanno, A.~Sangnier, and G.~Zavattaro.
\newblock Parameterized verification of ad hoc networks.
\newblock In {\em Proc. CONCUR '10}, pages 313--327. Springer, 2010.
\newblock {LNCS} 6269.

\bibitem{ding:subgraphs-wqo}
G.~Ding.
\newblock Subgraphs and well-quasi-ordering.
\newblock {\em Jornal of Graph Theory}, 16:489--502, November 1992.

\bibitem{ehklrwc:algebraic-approaches-II}
H.~Ehrig, R.~Heckel, M.~Korff, M.~L\"{o}we, L.~Ribeiro, A.~Wagner, and
  A.~Corradini.
\newblock Algebraic approaches to graph transformation---part~{II}: Single
  pushout approach and comparison with double pushout approach.
\newblock In G.~Rozenberg, editor, {\em Handbook of Graph Grammars and
  Computing by Graph Transformation, Vol.1: Foundations}, chapter~4. World
  Scientific, 1997.

\bibitem{fhr:wqo-bounded-treewidth}
M.R. Fellows, D.~Hermelin, and F.A. Rosamond.
\newblock Well-quasi-orders in subclasses of bounded treewidth graphs.
\newblock In {\em Proc. of IWPEC '09 (Parameterized and Exact Computation)},
  pages 149--160. Springer, 2009.
\newblock {LNCS} 5917.

\bibitem{fs:well-structured-everywhere}
A.~Finkel and P.~Schnoebelen.
\newblock Well-structured transition systems everywhere!
\newblock {\em Theoretical Computer Science}, 256(1-2):63--92, April 2001.

\bibitem{HJKS:pocs2010}
M.~Heum{\"u}ller, S.~Joshi, B.~K{\"o}nig, and J.~St{\"u}ckrath.
\newblock Construction of pushout complements in the category of hypergraphs.
\newblock In {\em Proc. of GCM '10 (Workshop on Graph Computation Models)},
  2010.

\bibitem{higman1952-divisibility}
G.~Higman.
\newblock Ordering by divisibility in abstract algebras.
\newblock {\em Proc. London Math. Soc.}, s3-2(1):326--336, January 1952.

\bibitem{JK08}
S.~Joshi and B.~K{\"o}nig.
\newblock Applying the graph minor theorem to the verification of graph
  transformation systems.
\newblock In {\em Proc. of CAV '08}, pages 214--226. Springer, 2008.
\newblock {LNCS} 5123.

\bibitem{jk:minor-wqo-corrected}
S.~Joshi and B.~K{\"o}nig.
\newblock Applying the graph minor theorem to the verification of graph
  transformation systems.
\newblock Technical Report 2012-01, Abteilung f{\"ur} Informatik und Angewandte
  Kognitionswissenschaft, Universit{\"a}t Duisburg-Essen, 2012.

\bibitem{km:gb-models-access-control}
M.~Koch, L.V. Mancini, and F.~Parisi-Presicce.
\newblock Decidability of safety in graph-based models for access control.
\newblock In {\em Proceedings of the 7th European Symposium on Research in
  Computer Security}, pages 229--243. Springer, 2002.
\newblock LNCS 2502.

\bibitem{ks:wsts-nac}
B.~K\"onig and J.~St\"uckrath.
\newblock Well-structured graph transformation systems with negative
  application conditions.
\newblock In {\em Proc. of ICGT '12}, pages 89--95. Springer, 2012.
\newblock {LNCS} 7562.

\bibitem{m:structural-stationarity-pi}
R.~Meyer.
\newblock {\em Structural Stationarity in the $\pi$-Calculus}.
\newblock PhD thesis, Carl-von-Ossietzky-Universit\"at Oldenburg, 2009.

\bibitem{rs:graph-minors-xx}
N.~Robertson and P.~Seymour.
\newblock Graph minors. {XX}. {W}agner's conjecture.
\newblock {\em Journal of Combinatorial Theory, Series B}, 92(2):325--357,
  2004.

\bibitem{RS:graphMinors:XXIII}
N.~Robertson and P.~Seymour.
\newblock Graph minors {XXIII}. {N}ash-{W}illiams' immersion conjecture.
\newblock {\em Journal of Combinatorial Theory Series B}, 100:181--205, March
  2010.

\bibitem{r:gra-handbook}
G.~Rozenberg, editor.
\newblock {\em Handbook of Graph Grammars and Computing by Graph
  Transformation, Vol.1: Foundations}.
\newblock World Scientific, 1997.

\bibitem{swj:gg-verification-adhocrouting}
M.~Saksena, O.~Wibling, and B.~Jonsson.
\newblock Graph grammar modeling and verification of ad hoc routing protocols.
\newblock In {\em Proc. of TACAS '08}, pages 18--32. Springer, 2008.
\newblock {LNCS} 4963.

\end{thebibliography}
\bibliographystyle{plain}

\opt{long}{
\newpage
\appendix

\section{Pushouts}
\label{sec:pushouts}

We give the definition and construction of pushouts, the graph gluing
construction which is used in this paper.

\noindent
\parbox{0.7\textwidth}{
\begin{definition}\label{def:pushout}
  Let $\phi\colon G_0\pto G_1$ and $\psi\colon G_0\pto G_2$ be two
  partial graph morphisms. The \emph{pushout} of $\phi$ and $\psi$
  consists of a graph $G_3$ and two morphisms $\psi'\colon G_1\pto
  G_3$, $\phi'\colon G_2\pto G_3$ such that $\psi'\circ\phi =
  \phi'\circ\psi$ and for every other pair of morphisms $\psi''\colon
  G_1\pto G'_3$, $\phi''\colon G_2\pto G'_3$ such that
  $\psi''\circ\phi = \phi''\circ\psi$ there exists a unique morphism
  $\eta\colon G_3\pto G'_3$ with $\eta\circ\psi' = \psi''$ and
  $\eta\circ\phi' = \phi''$.
\end{definition}}
\parbox{0.3\textwidth}{
\begin{center}
  \scalebox{\catDiagramScale}{\begin{tikzpicture}[x=1.5cm,y=-1.5cm]

\node (G0) at (0,0) {$G_0$};
\node (G1) at (1,0) {$G_1$};
\node (G2) at (0,1) {$G_2$};
\node (G3) at (1,1) {$G_3$};
\node (G3b) at (2,2) {$G_3'$};
\draw[mor-parl] (G0) -- node [midway, above] {$\phi$} (G1);
\draw[mor-parl] (G0) -- node [midway, left] {$\psi$} (G2);
\draw[mor-parl] (G2) -- node [midway, above] {$\phi'$} (G3);
\draw[mor-parl] (G1) -- node [midway, left] {$\psi'$} (G3);
\draw[mor-parl] (G3) -- node [midway, above right] {$\eta$} (G3b);
\draw[mor-parl] (G1) to[bend left] node [midway, above right] {$\psi''$} (G3b);
\draw[mor-parl] (G2) to[bend right] node [midway, below left] {$\phi''$} (G3b);

\end{tikzpicture}%}
\end{center}}

\begin{proposition}[Construction of pushouts]
  %\label{prop:construction-pushouts}
  Let $\phi\colon G_0\pto G_1$, $\psi\colon G_0\pto G_2$ be partial
  hypergraph morphisms. Furthermore let $\equiv_V$ be the smallest
  equivalence on $V_{G_1}\cup V_{G_2}$ and $\equiv_E$ the smallest
  equivalence on $E_{G_1}\cup E_{G_2}$ such that $\phi(x)\equiv
  \psi(x)$ for every element $x$ of $G_0$.
  
  An equivalence class of nodes is called \emph{valid} if it does not
  contain the image of a node $x$ of $G_0$ for which $\phi(x)$ or
  $\psi(x)$ are undefined. Similarly a class of edges is \emph{valid}
  if the analogous condition holds and furthermore all nodes incident
  to these edges are contained in valid equivalence classes.

  Then the pushout graph $G_3$ of $\phi$ and $\psi$ consists of all
  valid equivalence classes $[x]_\equiv$ of nodes and edges, where
  $l_{G_3}([e]_\equiv) = l_{G_i}(e)$ and $c_{G_3}([e]_\equiv) =
  [v_1]_\equiv \dots [v_k]_\equiv$ if $e\in E_{G_i}$ and $c_{G_i}(e) =
  v_1\dots v_k$. Furthermore the morphisms $\psi',\phi'$ map nodes and
  edges to their respective equivalence classes.
\end{proposition}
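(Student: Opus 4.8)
The plan is to verify, in turn, that $G_3$ equipped with $\psi'$ and $\phi'$ forms a well-defined commuting square over $\phi,\psi$ in the category of partial hypergraph morphisms, and then that this square is initial, i.e.\ that any competing pair $\psi'',\phi''$ as in Definition~\ref{def:pushout} factors uniquely through it. Throughout I treat $V_{G_1}\cup V_{G_2}$ (and likewise the edge sets) as a \emph{disjoint} union and read the generating clause $\phi(x)\equiv\psi(x)$ as a constraint only for those $x\in G_0$ on which \emph{both} $\phi$ and $\psi$ are defined. The notion of a \emph{valid} class is precisely the device needed to discard the parts of $G_1$ and $G_2$ that are deleted by $\phi$ or $\psi$: any class that gets merged with the image of some $x\in G_0$ at which $\phi$ or $\psi$ is undefined is dropped.

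\emph{$G_3$ is a hypergraph and $\psi',\phi'$ are morphisms with $\psi'\circ\phi=\phi'\circ\psi$.} Since $\phi,\psi$ are morphisms (Definition~\ref{def:morphism}), each generating identification $\phi(x)\equiv\psi(x)$ relates edges with the same label whose connection tuples are componentwise $\equiv_V$-related; this is preserved under reflexive/symmetric/transitive closure, so $l_{G_3}$ and $c_{G_3}$ are independent of the chosen representative, and $|c_{G_3}([e])|=|c_{G_i}(e)|=\arity(l_{G_i}(e))=\arity(l_{G_3}([e]))$, so $G_3$ is a finite $\Lambda$-hypergraph in the sense of Definition~\ref{def:hypergraph}; moreover $c_{G_3}([e])$ really is a tuple of nodes of $G_3$, because the second clause of ``valid edge class'' forces all incident node classes to be valid. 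The maps $\psi',\phi'$ send an element to its class when that class is valid and are undefined otherwise; they preserve labels and connections by construction, and whenever $\psi'$ is defined on an edge its class is valid, hence so are all incident node classes, so $\psi'$ is defined on the incident nodes, as Definition~\ref{def:morphism} requires. For commutativity I compare the two composites on $x\in G_0$: if $\phi(x),\psi(x)$ are both defined then $[\phi(x)]=[\psi(x)]$ and the two composites agree (both defined, or both undefined, according to the validity of this common class); if, say, $\phi(x)$ is undefined, then $\psi'\circ\phi$ is undefined at $x$, and either $\psi(x)$ is also undefined or else its class contains $\psi(x)$, the image of $x$ with $\phi(x)$ undefined, hence is invalid, so $\phi'\circ\psi$ is undefined at $x$ as well.

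\emph{Universal property.} Given a competing pair $\psi''\colon G_1\pto G_3'$, $\phi''\colon G_2\pto G_3'$ with $\psi''\circ\phi=\phi''\circ\psi$ (an equation of \emph{partial} maps, hence with matching domains of definedness), define $\eta\colon G_3\pto G_3'$ on a valid class $[x]_\equiv$ by $\eta([x])=\psi''(x)$ or $\eta([x])=\phi''(x)$ according to whether the chosen representative $x$ lies in $G_1$ or $G_2$. The key observation is that any two representatives of a valid class are joined by a zig-zag of generating identifications $\phi(z)\equiv\psi(z)$ in which every $z$ has $\phi(z),\psi(z)$ both defined, and along each such step the cocone equation forces the $\psi''$- resp.\ $\phi''$-value, \emph{including whether it is defined}, to stay constant; hence $\eta$ is a well-defined partial function, and it is a morphism by the same componentwise computation used for $\psi',\phi'$. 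For the factorisation $\eta\circ\psi'=\psi''$: on $v\in G_1$ with $[v]$ valid this is the definition of $\eta$ together with well-definedness; on $v$ with $[v]$ invalid, $\psi'(v)$ is undefined, and one checks $\psi''(v)$ is undefined too, since invalidity of $[v]$ exhibits some $z\in G_0$ with $\phi(z)\in[v]$ (or $\psi(z)\in[v]$) and $\psi(z)$ (resp.\ $\phi(z)$) undefined, so propagating definedness along the zig-zag via the cocone equation kills $\psi''(v)$. Symmetrically $\eta\circ\phi'=\phi''$. Uniqueness follows since every valid class has a representative in $G_1$ or $G_2$, whose $\eta$-image is pinned down by $\eta\circ\psi'=\psi''$ or $\eta\circ\phi'=\phi''$.

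I expect the only genuinely delicate point to be the bookkeeping of partiality against validity --- in particular, verifying the factorisation equations \emph{on the nose} at the invalid classes, where it is essential that the cocone equation is an equality of partial maps with equal domains and that (in)validity is transported along the zig-zag chains. The remaining verifications (label and arity preservation, connection compatibility, finiteness) are routine.
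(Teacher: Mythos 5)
The paper itself does not prove this proposition: it is stated as the standard construction of pushouts along partial morphisms (the main text simply records that such pushouts exist and are unique up to isomorphism, deferring to the literature), so there is no in-paper proof to compare against. Your strategy --- direct verification that the quotient-by-valid-classes construction satisfies the universal property of Definition~\ref{def:pushout} in the category of partial hypergraph morphisms --- is the right one, and almost all of your verifications (well-definedness of labels, arities and connections on classes, the morphism property of $\psi',\phi'$, commutativity as an equality of partial maps, uniqueness of $\eta$) are correct as written.

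There is, however, one concrete incomplete step, precisely at the point you yourself flag as delicate: the factorisation $\eta\circ\psi'=\psi''$ at \emph{invalid edge classes}. You justify undefinedness of $\psi''(v)$ by saying that invalidity of $[v]$ ``exhibits some $z\in G_0$ with $\phi(z)\in[v]$ (or $\psi(z)\in[v]$) and $\psi(z)$ (resp.\ $\phi(z)$) undefined''. That is the definition of invalidity for node classes and for edge classes violating the ``analogous condition'', but an edge class can be invalid for a second reason: some node incident to one of its member edges lies in an invalid class, while the class itself contains no image of a deleted edge. This is exactly the dangling-edge situation of the SPO approach; e.g.\ take $G_0$ a single node $z$, $G_1$ consisting of $v=\phi(z)$ with a unary edge $e$ attached to $v$, and $\psi$ undefined on $z$: then $[e]=\{e\}$ contains no image of any edge of $G_0$, yet is invalid, and you still must show $\psi''(e)$ is undefined --- but your zig-zag/cocone propagation has no generating identification to start from on this class. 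The missing half-step is: by your node argument, $\psi''$ (resp.\ $\phi''$) is undefined on the offending incident node; since $\psi''$ is a partial morphism, the last clause of Definition~\ref{def:morphism} forbids it from being defined on an edge while undefined on an incident node, so $\psi''$ is undefined on that member edge; now your zig-zag propagation along the edge class yields undefinedness at every representative, in particular at $e$. With this case added, the proof is complete.
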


\begin{definition}[Pushout complement]
  Let $\phi : G_0 \pto G_1$ and $\psi' : G_1 \pto G_3$ be morphisms.
  We call the graph $G_2$ together with the morphisms $\psi : G_0 \pto
  G_2$ and $\phi' : G_2 \pto G_3$ a pushout complement, if $G_3$,
  $\phi'$, $\psi'$ is the pushout of $\phi$ and $\psi$.
\end{definition}

\section{Proofs}
\label{sec:proofs}

\subsection{Well-structured Transition Systems}

\begin{theorem_app}[\thmCoveringProblemTitle]{thm:covering-problem}
  \thmCoveringProblem
\end{theorem_app}

\begin{proof}
  (\ref{thm:covering-problem:case-all}) is just a reformulation of the
  decidability results for WSTS. Similar for
  (\ref{thm:covering-problem:case-special}), if $Q$ is closed under
  reachability, a $Q$-restricted WSTS can be seen as a WSTS with state
  space $Q$.

%  We now consider (\ref{thm:covering-problem:case-general}) where $Q$
%  is not required to be closed under reachability. Assume that
%  $q\in\upclosed{I^Q_m}$, where $I^Q_m$ has been obtained by
%  fixed-point iteration, starting with the upward-closed set $I^Q$ and
%  computing the sequence $I^Q_n$. Now set $I = I^Q$ and compute the
%  sequence $I_n$. Clearly $I^Q_n \subseteq I_n$. Hence $q\in I_m$,
%  which contains only states that can cover states in $I^Q = I$.  For
%  the other statement assume that $q\not\in \upclosed{I^Q_m}$ and
%  assume that there exists a path $q\Rightarrow_Q q_1 \Rightarrow_Q
%  \dots \Rightarrow_Q q_k\in I$. Then we can show by induction that
%  $q_i\in I_{k-i}$ and hence $q\in I_k\subseteq I_m$, which leads to a
%  contradiction.

  We now consider (\ref{thm:covering-problem:case-general}) where $Q$
  is not required to be closed under reachability. Assume that
  $s \in \upclosed{I^Q_m}$, where $I^Q_m$ has been obtained by
  fixed-point iteration, starting with the upward-closed set $I^Q_B$ and
  computing the sequence $I^Q_n$. By induction we show the existence of a 
  sequence of transitions leading from $s$ to some state in $\upclosed{I^Q_B}$.
  Obviously there is an $q_m \in I^Q_m$ with $q_m \leq s$ and by definition 
  either $q_m \in I^Q_{m-1}$ or there are $q_{m-1} \in I^Q_{m-1}$ and 
  $q_{m-1}'$ with $q_m \Rightarrow q_{m-1}'$ and $q_{m-1} \leq q_{m-1}'$. In 
  the latter case, because of the compatibility condition of 
  Definition~\ref{def:wsts}, there is a $q_{m-1}''$ with $s \Rightarrow^* 
  q_{m-1}''$ and $q_{m-1} \leq q_{m-1}' \leq q_{m-1}''$, i.e.~$s$ can reach an 
  element of $\upclosed{I^Q_{m-1}}$. Since this argument holds for 
  $q_{m-1}''$ as well, the state $s$ can ultimately reach a state $q_0'' \in 
  \upclosed{I^Q_B}$. Note that it is possible that $s = q_0''$, but it is not 
  guaranteed that $q_n'' \in Q$ for every $n$.

  For the other statement assume that $s \notin \upclosed{I^Q_m}$ and
  assume that there exists a path $s = q_0 \Rightarrow_Q q_1 \Rightarrow_Q
  \dots \Rightarrow_Q q_k \in \upclosed{I^Q_B}$. Note that the second 
  assumption is trivially false, if $s \notin Q$. We can show by induction and 
  by definition of \predBasis[Q]{} that 
  $q_i \in \upclosed{I^Q_{k-i}}$ and hence $q_0 \in \upclosed{I^Q_k} \subseteq 
  \upclosed{I^Q_m}$, which leads to a contradiction.

  The proof of case~(\ref{thm:covering-problem:case-hope}) is
  straightforward by observing that the set $I_m$ is an exact
  representation of all predecessors of $I$. \qed
\end{proof}

\subsection{GTS as WSTS: A General Framework}

\begin{lemma}\label{lem:proc-correctness1}
The sets generated by \predBasis[1]{S} and \predBasis[2]{S} are both finite 
subsets of \pred{\upclosed{\{S\}}} and $\predBasis[1]{S} \subseteq \mathcal{Q}$.
\end{lemma}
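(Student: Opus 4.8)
The plan is to unfold the definition of a single backward step of Procedure~\ref{procedure:main} and check the two assertions directly. Recall that $\predBasis[i]{S}$ is obtained by first replacing each rule $(r\colon L\pto R)\in\mathcal{R}$ by all composites $\mu\circ r$ for order morphisms $\mu\colon R\genArrow\overline R$ (one $\overline R$ per isomorphism class), then for each such modified rule $r'\colon L\pto\overline R$ enumerating all total co-matches $m'\colon\overline R\to S$ and computing the set of minimal pushout complements of $m'$ along $r'$ (restricted to $\mathcal{Q}$ in Variant~1, unrestricted in Variant~2). So $\predBasis[1]{S}=\predBasis[2]{S}\cap\mathcal{Q}$ essentially by construction, which already gives $\predBasis[1]{S}\subseteq\mathcal{Q}$; the remaining work is finiteness and the containment in $\pred{\upclosed{\{S\}}}$.

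For \emph{finiteness}: the modified rule set $\mathcal{R}'$ is finite because there are only finitely many original rules and, for each, only finitely many isomorphism classes of targets $\overline R$ of order morphisms out of the fixed finite graph $R$ (finitely many subgraphs / quotients up to iso). For each of the finitely many rules in $\mathcal{R}'$, there are only finitely many total morphisms $m'\colon\overline R\to S$ since both graphs are finite. Finally, for each pair $(r',m')$ the set of \emph{minimal} pushout complements is finite — in Variant~2 this is exactly the hypothesis of Procedure~\ref{procedure:main} (and it is the known situation for the subgraph/minor cases), and in Variant~1 it is the further restriction to $\mathcal{Q}$, hence a subset, still finite. A finite union of finite sets is finite.

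For the \emph{containment} $\predBasis[i]{S}\subseteq\pred{\upclosed{\{S\}}}$: take any $G$ produced by the step, say as a pushout complement object of a co-match $m'\colon\overline R\to S$ along $r'=\mu\circ r$. By construction of the pushout complement we have a pushout square with top morphism $r'\colon L\pto\overline R$, left morphism a total match $m\colon L\to G$ (total and conflict-free by the way minimal pushout complements are selected, cf.\ the requirements in Definition~\ref{def:rewriting}), right morphism $m'$, and bottom morphism $G\pto S$. Thus $G$ rewrites via the rule $r'\in\mathcal{R}'$ to $S$ itself. It remains to see that a rewrite of $G$ by a modified rule $\mu\circ r$ to $S$ yields a rewrite of $G$ by the original rule $r$ to some graph $H$ with $S\genOrder H$: decomposing the pushout of $\mu\circ r$ along $m$ as two pushouts (one for $r$, one for $\mu$) gives $G\Rightarrow H$ by $r$ together with an order morphism $H\genArrow S$, i.e.\ $S\genOrder H$, so $G\in\pred{\upclosed{\{S\}}}$. (This pushout-decomposition step is exactly where the closure-under-composition clause of Definition~\ref{def:condition-representation} and the shape of Procedure~\ref{procedure:main}'s preparation are used.)

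The main obstacle I expect is the last point: justifying that every minimal pushout complement along the \emph{composed} morphism $\mu\circ r$ really does decompose so that the match $m\colon L\to G$ is total and conflict-free wrt.\ $r$ and the residual morphism $H\genArrow S$ lies in $\mathcal{M}_{\genOrder}$. Totality and conflict-freeness are built into the definition of the pushout-complement computation used in the procedure (step~\ref{backstep:calculate-pocs}), and membership of the residual morphism in $\mathcal{M}_{\genOrder}$ follows because $\mathcal{M}_{\genOrder}$ is closed under composition and the relevant pushouts are of the "total pushout" shape covered by Definition~\ref{def:condition-preservation}; one just has to lay out the two-square decomposition carefully. Everything else is bookkeeping.
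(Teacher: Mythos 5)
Your proposal is correct and follows essentially the same route as the paper's proof: finiteness and the inclusion $\predBasis[1]{S} \subseteq \mathcal{Q}$ come from the procedure's input conditions and the explicit restriction to $\mathcal{Q}$ in Variant~1, and the containment in $\pred{\upclosed{\{S\}}}$ is obtained, exactly as in the paper, by splitting the pushout of $\mu \circ r$ along $m$ into the pushout of $r$ (giving $G \Rightarrow S'$) followed by a pushout of $\mu$ along the total co-match, so that preservation by total pushouts (Definition~\ref{def:condition-preservation}) makes the mediating morphism $S' \genArrow S$ an order morphism. Two small inessential points: in Variant~1 the finiteness of the restricted set of minimal pushout complements is given directly by the procedure's hypothesis (it is automatic because $\genOrder$ is a wqo on $\mathcal{Q}$), not by being a subset of the Variant-2 set, and closure of $\mathcal{M}_{\genOrder}$ under composition is not actually needed in this step---pushout preservation does all the work.
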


\begin{proof}
  By the conditions of Procedure~\ref{procedure:main}, the sets of minimal 
  pushout complements -- in the case of \predBasis[1]{S} restricted to 
  $\mathcal{Q}$ -- are finite and computable. Since the set of rules is also 
  finite, \predBasis[1]{S} and \predBasis[2]{S} are finite as well.
  Every non-minimal pushout complement in $\mathcal{Q}$ is represented by a 
  minimal pushout complement in $\mathcal{Q}$, because of the downward closure 
  of $\mathcal{Q}$. Thus, $\predBasis[1]{S} \subseteq \mathcal{Q}$ holds.

  Let $G \in \predBasis[1]{S} \cup \predBasis[2]{S}$ be a graph generated by 
  one of the procedures. Then there is a rule $r \colon L \to R$, an order 
  morphism $\mu \colon R \genArrow R'$ and a conflict-free match $m \colon L 
  \to G$, such that the left diagram below is a pushout.

  \begin{center}
    \scalebox{\catDiagramScale}{\begin{tikzpicture}[x=1.5cm, y=-1.5cm]

%%%%%%%%%%%%%%%%%%%% left diagram %%%%%%%%%%%%%%%%%%%%
\begin{scope}[shift={(0,0)}]
  \node (La) at (0,0) {$L$};
  \node (R2a) at (1,0) {$R'$};
  \node (Ga) at (0,1) {$G$};
  \node (Sa) at (1,1) {$S$};
  \draw[mor-parl] (La) -- node[midway, above] {$\mu \circ r$} (R2a);
  \draw[mor-tot] (La) -- node[midway, right] {$m$} (Ga);
  \draw[mor-tot] (R2a) -- node [midway, right] {$m''$} (Sa);
  \draw[mor-parl] (Ga) -- node [midway, above] {$k$} (Sa);
\end{scope}

%%%%%%%%%%%%%%%%%%%% left diagram %%%%%%%%%%%%%%%%%%%%
\begin{scope}[shift={(2.5,0)}]
  \node (Lb) at (0,0) {$L$};
  \node (R1b) at (1,0) {$R$};
  \node (R2b) at (2,0) {$R'$};
  \node (Gb) at (0,1) {$G$};
  \node (S2b) at (1,1) {$S'$};
  \node (S1b) at (2,1) {$S$};
  \draw[mor-parl] (Lb) -- node[midway, above] {$r$} (R1b);
  \draw[mor-genorder] (R1b) -- node[midway, above] {$\mu$} (R2b);
  \draw[mor-tot] (Lb) -- node[midway, right] {$m$} (Gb);
  \draw[mor-tot] (R2b) -- node [midway, right] {$m''$} (S1b);
  \draw[mor-parl] (Gb) -- node [midway, above] {$r'$} (S2b);
  \draw[mor-genorder] (S2b) -- node [midway, above] {$\mu'$} (S1b);
  \draw[mor-tot] (R1b) -- node[midway, right] {$m'$} (S2b);
  \draw[mor-parl, dotted] (Gb) to[bend right] node[midway, below] {$k$} (S1b);
\end{scope}

\end{tikzpicture}%}
  \end{center}

  Let $m' \colon R \to S'$, $r' \colon G \pto S'$ be the pushout of $m$, $r$. 
  Because the outer diagram on the right commutes, there is a unique morphism 
  $\mu' \colon S' \genArrow S$. The left and the outer square are both 
  pushouts and therefore also the right square is a pushout. 
  Since $m$ is total and conflict-free, $m'$ is also total. By assumption 
  $\mathcal{M}_{\genOrder}$ is preserved by total pushouts, thus $\mu'$ is in 
  fact an order morphism. This means that $G$ can be rewritten to some graph 
  larger than $S$, hence $G \in \pred{\upclosed{S}}$. \qed
\end{proof}

\begin{lemma}\label{lem:proc-correctness2}
It holds that $\upclosed{\predBasis[1]{S}} \supseteq 
\upclosed{\pred[\mathcal{Q}]{\upclosed{\{S\}}}}$ and
$\upclosed{\predBasis[2]{S}} \supseteq \upclosed{\pred{\upclosed{\{S\}}}}$.
\end{lemma}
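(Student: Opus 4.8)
The plan is to prove both inclusions by one argument, the only extra ingredient for $\predBasis[1]{S}$ being a downward-closure step. Fix a graph $S$, and take a graph $G$ lying in $\pred{\upclosed{\{S\}}}$ (for the second inclusion), respectively in $\pred[\mathcal{Q}]{\upclosed{\{S\}}}$ (for the first). By definition there is a graph $H$ with $S \genOrder H$ and a rewriting step $G \Rightarrow H$; unfolding Definition~\ref{def:rewriting}, this gives a rule $(r \colon L \pto R) \in \mathcal{R}$, a conflict-free match $m \colon L \to G$, and a pushout with co-match $m' \colon R \to H$. Since $\genOrder$ is representable by $\mathcal{M}_{\genOrder}$ (Definition~\ref{def:condition-representation}), the inequality $S \genOrder H$ yields an order morphism $\mu \colon H \genArrow S$.

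The key step is to divert this rewriting step through the enlarged rule set $\mathcal{R}'$ so that it produces $S$ exactly. I would apply pushout closure (Definition~\ref{def:condition-closure}) to the pushout square above together with $\mu \colon H \genArrow S$. This yields graphs $R'$, $G'$ and order morphisms $\mu_R \colon R \genArrow R'$, $\mu_G \colon G \genArrow G'$ such that the induced square over $L$ commutes, its outer boundary is a pushout, and both $n \colon R' \to S$ and $\mu_G \circ m \colon L \to G'$ are total (with $\mu_G \circ m$ conflict-free wrt~$r$). Now $\mu_R \circ r \colon L \pto R'$ is the composite of the rule $r$ with an order morphism, hence — up to replacing $R'$ by its isomorphism-class representative — a rule of the set $\mathcal{R}'$ built in the Preparation step of Procedure~\ref{procedure:main}. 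When the backward step is run on the working set $\{S\}$, substep~\ref{backstep:calculate-matchings} enumerates all total morphisms $R' \to S$, in particular $n$; and because the outer square is a pushout with $\mu_G \circ m$ total, the graph $G'$, equipped with $\mu_G \circ m$ and the pushout morphism $G' \to S$, is a pushout complement of $\mu_R \circ r$ and $n$.

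It then remains to locate $G$ above the computed basis. For $\predBasis[2]{S}$: $G'$ is a pushout complement object, hence — by the assumption of Procedure~\ref{procedure:main} that the minimal pushout complements represent all of them — dominated by a minimal one, which substep~\ref{backstep:calculate-pocs} collects; since minimising $\mathcal{W}$ preserves upward closure, this minimal object lies in $\upclosed{\predBasis[2]{S}}$. For $\predBasis[1]{S}$: note in addition that $G' \genOrder G$ (from $\mu_G$) and $G \in \mathcal{Q}$, so $G' \in \mathcal{Q}$ by downward closure of $\mathcal{Q}$; thus $G'$ is a pushout complement object in $\mathcal{Q}$, and since $\genOrder$ is a wqo on $\mathcal{Q}$ it is dominated by a minimal such object, which substep~\ref{backstep:calculate-pocs} (Variant~1) collects and which therefore lies in $\upclosed{\predBasis[1]{S}}$. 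In either case we obtain some $G''$ in the relevant basis with $G'' \genOrder G' \genOrder G$, so $G \in \upclosed{\{G''\}}$, giving the stated inclusion.

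I expect the only genuine work to be in the middle step: setting up the square so that Definition~\ref{def:condition-closure} applies with $H$ as the result graph, and then checking that its conclusion (outer square a pushout, $n$ total, $\mu_G \circ m$ total and conflict-free) is precisely what exhibits $G'$ as a pushout complement that substeps~\ref{backstep:calculate-matchings}–\ref{backstep:calculate-pocs} actually compute. The rest is transitivity of $\genOrder$, the downward closure of $\mathcal{Q}$, and Lemma~\ref{lem:wqo-ucs}.
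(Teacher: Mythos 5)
Your proposal is correct and follows essentially the same route as the paper's proof: unfold the rewriting step into $\upclosed{\{S\}}$, apply pushout closure (Definition~\ref{def:condition-closure}) to the square together with the order morphism into $S$, observe that $\mu_R \circ r$ belongs to $\mathcal{R}'$ and that $G'$ is a pushout complement object for the co-match $n$, and conclude via domination by a minimal pushout complement, with downward closure of $\mathcal{Q}$ handling Variant~1. The only (harmless) deviations are that you start from an element of $\pred{\upclosed{\{S\}}}$ and close upward at the end instead of picking a minimal representative first, and that you are somewhat more explicit about why every pushout complement is dominated by a minimal one and why the minimization substep preserves the upward closure.
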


\begin{proof}
  Let $G$ be an element of $\upclosed{\pred{\upclosed{\{S\}}}}$. Then there is 
  a minimal representative $G_1 \in \pred{\upclosed{\{S\}}}$ with $G_1 
  \genOrder G$ and a rule $r \colon L \pto R$ rewriting $G_1$ with a 
  conflict-free match $m$ to some element $G_2$ of $\upclosed{\{S\}}$. 
  According to Definition~\ref{def:condition-closure} the left diagram below 
  can be extended to the right diagram below.

  \begin{center}
    \scalebox{\catDiagramScale}{\begin{tikzpicture}[x=1.5cm, y=-1.5cm]

%%%%%%%%%%%%%%%%%%%% left diagram %%%%%%%%%%%%%%%%%%%%
\begin{scope}[shift={(0,0)}]
  \node (G1a) at (-1,1) {$G$};
  \node (La) at (0,0) {$L$};
  \node (R1a) at (1,0) {$R$};
  \node (G2a) at (0,1) {$G_1$};
  \node (G3a) at (1,1) {$G_2$};
  \node (Sa) at (2,2) {$S$};
  \draw[mor-genorder] (G1a) -- node[midway, above] {$\nu$} (G2a);
  \draw[mor-parl] (La) -- node[midway, above] {$r$} (R1a);
  \draw[mor-tot] (La) -- node[midway, right] {$m$} (G2a);
  \draw[mor-tot] (R1a) -- node [midway, right] {$m'$} (G3a);
  \draw[mor-parl] (G2a) -- node [midway, above] {$r'$} (G3a);
  \draw[mor-genorder] (G3a) -- node [midway, above right] {$\mu$} (Sa);
\end{scope}

%%%%%%%%%%%%%%%%%%%% left diagram %%%%%%%%%%%%%%%%%%%%
\begin{scope}[shift={(4,0)}]
  \node (G1b) at (-1,1) {$G$};
  \node (Lb) at (0,0) {$L$};
  \node (R1b) at (1,0) {$R$};
  \node (G2b) at (0,1) {$G_1$};
  \node (G3b) at (1,1) {$G_2$};
  \node (Sb) at (2,2) {$S$};
  \node (R2b) at (2,0) {$R'$};
  \node (G4b) at (0,2) {$G_3$};
  \draw[mor-genorder] (G1b) -- node[midway, above] {$\nu$} (G2b);
  \draw[mor-parl] (Lb) -- node[midway, above] {$r$} (R1b);
  \draw[mor-tot] (Lb) -- node[midway, right] {$m$} (G2b);
  \draw[mor-tot] (R1b) -- node [midway, right] {$m'$} (G3b);
  \draw[mor-parl] (G2b) -- node [midway, above] {$r'$} (G3b);
  \draw[mor-genorder] (G3b) -- node [midway, above right] {$\mu$} (Sb);
  \draw[mor-genorder] (R1b) -- node [midway, above] {$\mu_R$} (R2b);
  \draw[mor-tot] (R2b) -- node [midway, right] {$n$} (Sb);
  \draw[mor-genorder] (G2b) -- node [midway, right] {$\mu_G$} (G4b);
  \draw[mor-parl] (G4b) -- node [midway, above] {$s$} (Sb);
\end{scope}

\end{tikzpicture}%}
  \end{center}

  Since the outer square is a pushout, $G_3$ is a pushout complement object. 
  Thus, a graph $G_4 \genOrder G_3$ will be obtained by the 
  procedure \predBasis[2]{} in Step~\ref{backstep:calculate-pocs} using the 
  rule $\mu_R \circ r$. Summarized, this means that \predBasis[2]{} computes a 
  graph $G_4$ for every graph $G$ such that $G_4 \genOrder G_3 \genOrder G_1 
  \genOrder G$, i.e.~every $G$ is represented by an element of \predBasis[2]{S}.
  
  Now assume $G \in \upclosed{\pred[\mathcal{Q}]{\upclosed{\{S\}}}}$. By
  definition, the minimal representative $G_1$ is an element of
  $\mathcal{Q}$.  We obtain $G_3 \in \mathcal{Q}$, due to the downward
  closure of $\mathcal{Q}$. Thus, the procedure \predBasis[1]{} will
  compute a graph $G_4 \genOrder G_3$ (with $G_4 \in \mathcal{Q}$),
  i.e.~every $G$ is represented by an element of $\predBasis[1]{S}$.
  \qed
\end{proof}

\begin{proposition_app}{prop:proc-correctness}
\propProcCorrectness
\end{proposition_app}

\begin{proof}
The correctness of \predBasis[1]{} and \predBasis[2]{} is a direct consequence 
of Lemma~\ref{lem:proc-correctness1} 
and~\ref{lem:proc-correctness2}. Moreover, by the conditions of 
Procedure~\ref{procedure:main}, the set of minimal pushout complements 
(possibly restricted to $\mathcal{Q}$) is finite and computable. Thus, 
\predBasis[1]{} and \predBasis[2]{} are effective. \qed
\end{proof}

\subsection{Subgraph Ordering}

\begin{lemma}\label{lem:subgraph-condition1}
  The subgraph ordering is representable by subgraph morphisms.
\end{lemma}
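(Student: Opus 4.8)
The plan is to verify the two halves of Definition~\ref{def:condition-representation} for the subgraph ordering $\subOrder$ with the class $\mathcal{M}_{\subOrder}$ of subgraph morphisms (partial morphisms that are injective wherever defined and surjective). First I would observe that $\subOrder$ is anti-symmetric up to isomorphism: if $G_1 \subOrder G_2$ and $G_2 \subOrder G_1$, then since deleting edges and isolated nodes can only decrease (or preserve) the number of nodes and edges, both graphs must have the same size, so the deletions are trivial and $G_1 \cong G_2$. Next, I would show the representability equivalence $G' \subOrder G \iff$ there is a subgraph morphism $\mu \colon G \subArrow G'$. For the forward direction, if $G'$ is obtained from $G$ by deleting a set of edges and isolated nodes, the partial morphism that is the identity on the surviving elements and undefined on the deleted ones is total on the image side, injective where defined, and surjective onto $G'$; one checks it is a well-defined morphism (label- and connection-preserving on edges where defined, and defined on all nodes incident to a surviving edge, since we only ever delete \emph{isolated} nodes). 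For the converse, given a subgraph morphism $\mu \colon G \subArrow G'$, injectivity-where-defined plus surjectivity makes the defined part of $\mu$ an isomorphism onto $G'$; the elements of $G$ on which $\mu$ is undefined are exactly the ones ``deleted'', and the morphism condition guarantees that whenever an edge is deleted its incident nodes may or may not survive, but whenever a node survives together with an incident edge the edge survives too — so by first deleting the undefined edges and then the now-isolated undefined nodes we realise $G'$ as a subgraph of $G$ (up to the isomorphism induced by $\mu$).

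Finally I would check closure under composition: if $\mu_1 \colon G_1 \subArrow G_2$ and $\mu_2 \colon G_2 \subArrow G_3$ are subgraph morphisms, then $\mu_2 \circ \mu_1$ is a partial morphism (composition of partial morphisms), it is surjective because a composition of surjections is surjective, and it is injective on its domain of definition because it is a composition of two maps each injective where defined. Hence $\mu_2 \circ \mu_1 \in \mathcal{M}_{\subOrder}$.

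I do not expect a serious obstacle here; the only point requiring a little care is the bookkeeping in the converse direction of the representability equivalence — making sure that ``undefined on an edge'' together with the morphism axiom (a morphism defined on an edge is defined on all incident nodes) really does let us carry out the edge-deletions and isolated-node-deletions in the right order to exhibit $G'$ as a genuine subgraph, rather than merely as an abstract quotient. The arity/label-preservation clauses of Definition~\ref{def:morphism} are exactly what make the surviving part a subgraph in the sense of Definition~\ref{def:rewriting}'s ambient notion, so those should be quoted explicitly.
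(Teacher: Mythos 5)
Your proposal is correct and follows essentially the same route as the paper: the forward direction exhibits the order morphism as the partial identity on the surviving elements (the paper merely builds it as a composition of single-deletion subgraph morphisms, using closure under composition), and the converse uses the fact that a subgraph morphism $\mu \colon G \subArrow G'$ is injective and surjective, so its defined part (equivalently, the inverse total injective morphism $G' \to G$) exhibits $G'$ up to isomorphism as a subgraph of $G$. Your extra care about deleting undefined edges before the (then isolated) undefined nodes, justified by the axiom that a morphism defined on an edge is defined on its incident nodes, is exactly the right bookkeeping and matches the paper's implicit use of the same fact.
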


\begin{proof}
Let $S \subseteq G$, then by definition $S$ can be obtained from $G$ by a 
sequence of node and edge deletions of length $n$. For each $G_i$ and each 
node or edge $x \in G_i$ we can give a subgraph morphism $\mu_i \colon G_i 
\subArrow G_i \setminus \{x\}$, where $\mu_i$ is undefined on $x$ and the 
identity on all other elements. Note that a node can only be deleted if it has 
no incident edges. Since injectivity and surjectivity are preserved by 
concatenation, the concatenation $\mu = \mu_1 \circ \ldots \circ \mu_n$ is 
again a subgraph morphism.

Let $\mu \colon G \subArrow S$ be a subgraph morphism. Since $\mu$ is
surjective and injective, the inverse of $\mu$ is a total, injective
morphism $\mu^{-1} \colon S \to G$. The image of $\mu^{-1}$ is
isomorphic to $S$ and a subgraph of $G$, therefore $S \subseteq G$
holds. \qed
\end{proof}

\begin{lemma}\label{lem:subgraph-condition2}
  Subgraph morphisms are preserved by total pushouts.
\end{lemma}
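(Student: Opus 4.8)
The plan is to take the pushout diagram from Definition~\ref{def:condition-preservation}: an order morphism $\mu \colon G_0 \subArrow G_1$ (hence injective where defined and surjective onto $G_1$), a total morphism $g \colon G_0 \to G_2$, and the resulting pushout square with $\mu' \colon G_2 \subArrow G_3$ and $g' \colon G_1 \to G_3$. I must show $\mu'$ is again a subgraph morphism, i.e.\ surjective onto $G_3$ and injective on the part of $G_2$ where it is defined. I would work entirely with the explicit pushout construction given in the Appendix (equivalence classes on $V_{G_1} \cup V_{G_2}$ and $E_{G_1} \cup E_{G_2}$, keeping only the valid classes), since surjectivity and injectivity of $\mu'$ are statements about those classes.

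\emph{Surjectivity.} Every element of $G_3$ is a valid equivalence class $[x]_\equiv$ containing some $x \in G_1 \cup G_2$. If $x \in G_2$, then $[x]_\equiv = \mu'(x)$ and we are done. If $x \in G_1$, then since $\mu$ is surjective there is $y \in G_0$ with $\mu(y) = x$; because $g$ is total, $g(y)$ is defined, and by the pushout equations $x = \mu(y) \equiv g(y)$, so $[x]_\equiv = [g(y)]_\equiv = \mu'(g(y))$, again in the image of $\mu'$. Hence $\mu'$ is surjective.

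\emph{Injectivity on the domain of $\mu'$ and well-definedness of the domain.} I first argue that $\mu'$ is undefined exactly on those elements of $G_2$ whose equivalence class is invalid, and that an element's class is invalid precisely when it is ``merged with'' an element deleted by $\mu$; concretely, a class is invalid iff it contains $\mu(y)$ or $g(y)$ for some $y \in G_0$ on which $\mu$ is undefined (for edges, also the incidence condition, which I handle by noting $\mu$ is undefined on an incident node whenever it is undefined on an edge, so invalidity propagates consistently). For injectivity on the defined part: suppose $\mu'(x_1) = \mu'(x_2)$ with $x_1, x_2 \in G_2$, i.e.\ $x_1 \equiv x_2$ via a valid class. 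The equivalence $\equiv$ is generated by the relations $\mu(y) \equiv g(y)$, so $x_1$ and $x_2$ are connected by a zig-zag chain alternating through $G_1$ and $G_2$; each ``hop'' into $G_1$ and back corresponds to elements $y, y' \in G_0$ with $g(y) = x_i$-side element, $\mu(y) = \mu(y')$ in $G_1$, $g(y') = $ next element. Since $\mu$ is injective on its domain and the class is valid (so all these $y$'s lie in the domain of $\mu$), $\mu(y) = \mu(y')$ forces $y = y'$, which collapses the chain step by step until $x_1 = x_2$. The same argument on edges uses injectivity of $\mu$ on edges plus the fact that the connection function is determined by node classes.

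The main obstacle I expect is the bookkeeping around \emph{validity} of equivalence classes and making the injectivity zig-zag argument airtight: I need to be careful that a valid class containing $x_1, x_2 \in G_2$ cannot ``route through'' an element on which $\mu$ is undefined (which would both break the injectivity-of-$\mu$ step and, in fact, make the class invalid, contradicting the hypothesis), and that the edge incidence clause in the definition of validity interacts correctly with the morphism condition ``if a morphism is defined on an edge it is defined on all incident nodes.'' Once that is set up, the collapse of the chain is routine. I would also remark that closure under composition (the remaining clause of Definition~\ref{def:condition-representation}) is immediate since injectivity-where-defined and surjectivity are both preserved by composition, so this lemma together with Lemma~\ref{lem:subgraph-condition1} gives everything needed for the representation conditions.
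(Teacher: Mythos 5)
Your argument is correct, and it takes a somewhat more concrete route than the paper's. The paper proves the same two facts (injectivity where defined, surjectivity) but argues at the level of the pushout square itself: for injectivity it observes that two distinct elements of $G_2$ can only be merged in $G_3$ if they have preimages in $G_0$, maps those preimages injectively under $\mu$ to distinct elements of $G_1$, and derives a contradiction with the universal property (no mediating morphism would exist for a cocone separating them); for surjectivity it notes that an element of $G_3$ missed by $\mu'$ must be hit by $g'$, and then surjectivity of $\mu$ plus totality of $g$ makes the square fail to commute. You instead verify both properties directly in the explicit equivalence-class construction of the pushout (Appendix), which makes explicit precisely the structural facts the paper invokes without proof (merging of $G_2$-elements happens only through $G_0$; the two legs jointly cover $G_3$): your zig-zag collapse is the honest content behind the paper's ``must have preimages in $G_0$'' step, so your version is more self-contained at the price of the bookkeeping you describe. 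One simplification: validity of the class is not actually needed for the injectivity step, since the generating relations $\mu(y)\equiv g(y)$ only exist for $y$ in the domain of $\mu$ ($g$ being total), so every hop of the chain already passes through the domain of $\mu$ and collapses by injectivity of $\mu$; validity is only needed to determine where $\mu'$ is defined (and, for edges, that incident node classes survive). With that observation your anticipated obstacle dissolves and the proof goes through exactly as you outline it.
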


\begin{proof}
  Let $g \colon G_0 \to G_2$ be a total morphism and let $\mu \colon G_0
  \subArrow G_1$ be a subgraph morphism, such that $\mu'$, $g'$ is the
  pushout of $\mu$, $g$ as shown in the diagram below.
  \begin{center}
    \scalebox{\catDiagramScale}{\begin{tikzpicture}[x=1.5cm,y=-1.5cm]

%%%%%%%%%%%%%%%%%%%% objects %%%%%%%%%%%%%%%%%%%%
  \node (G0) at (0,0) {$G_0$};
  \node (G1) at (1,0) {$G_1$};
  \node (G2) at (0,1) {$G_2$};
  \node (G3) at (1,1) {$G_3$};

%%%%%%%%%%%%%%%%%%%% arrows %%%%%%%%%%%%%%%%%%%%

\draw[mor-subgraph] (G0) -- (G1) node [midway, above] {$\mu$};
\draw[mor-tot] (G0) -- (G2) node [midway, right] {$g$};
\draw[mor-parl] (G1) -- (G3) node [midway, right] {$g'$};
\draw[mor-subgraph] (G2) -- (G3) node [midway, above] {$\mu'$};

\end{tikzpicture}%}
  \end{center}
  First we show that $\mu'$ is injective where it is defined. Assume
  there are two different elements $x_1, x_2 \in G_2$ such that
  $\mu'(x_1) = \mu'(x_2)$. For $G_3$ to be a pushout, both $x_1$ and
  $x_2$ have to have preimages $x_1', x_2' \in G_0$ with $g(x_1') =
  x_1$ and $g(x_2') = x_2$.  The diagram commutes, thus $\mu$ is
  defined for both elements and these elements are mapped injectively
  to $x_1'', x_2'' \in G_1$ respectively.  Hence, there is a commuting
  diagram with $g'(x_1'') = \mu'(x_1) \neq g'(x_2'') = \mu'(x_2)$,
  where there is no mediating morphism from $G_3$. Since this violates
  the pushout properties of the diagram, $\mu'$ has to be injective.

  It remains to be shown that $\mu'$ is surjective. Assume there is an
  $x_3 \in G_3$ without a preimage under $\mu'$. For the diagram to be a
  pushout there has to be an $x_1 \in G_1$ with $g'(x_1) = x_3$. Since
  $\mu$ is surjective, there is an $x_0 \in G_0$ with $\mu(x_0) = x_1$ and
  therefore the diagram does not compute, because $g'(\mu(x_0))$ is
  defined, but $\mu'(g(x_0))$ is not. Hence, $\mu'$ has to be surjective
  and is a subgraph morphism. \qed
\end{proof}

\begin{lemma} \label{lem:subgraph-condition3}
  Subgraph morphisms are pushout closed.
\end{lemma}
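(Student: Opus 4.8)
Lemma \ref{lem:subgraph-condition3} (Subgraph morphisms are pushout closed).

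The plan is to verify the two conditions of Definition~\ref{def:condition-closure} directly from the pushout construction given in Appendix~\ref{sec:pushouts}, using the combinatorial description of subgraph morphisms (injective where defined, surjective). We are given a pushout square built from a rule $r\colon L\pto R$ and a total conflict-free match $m\colon L\to G$, producing $H$ with co-match $m'\colon R\to H$ and $r'\colon G\pto H$, together with a subgraph morphism $\mu\colon H\subArrow S$. We must produce $R'$, $G'$, and subgraph morphisms $\mu_R\colon R\subArrow R'$, $\mu_G\colon G\subArrow G'$ making the outer square a pushout with $\mu_G\circ m$ and $n\colon R'\to S$ total and $\mu_G\circ m$ conflict-free wrt.\ $r$.

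First I would construct the candidates. Since $\mu$ deletes some nodes and edges of $H$, define $R'$ to be the image of $R$ in $S$ under $\mu\circ m'$, i.e.\ $R'=R\setminus(m')^{-1}(\text{elements deleted by }\mu)$ with the evident inclusion $n\colon R'\to S$ and surjection $\mu_R\colon R\subArrow R'$; these are visibly subgraph morphisms, and $n$ is total by construction. Symmetrically, define $G'$ by deleting from $G$ exactly those elements whose image under $r'$ is deleted by $\mu$ (equivalently, the elements of $G$ that $\mu_G$ must kill so that the outer square closes), giving $\mu_G\colon G\subArrow G'$. The key point here is that $m$ maps $L$ into $G$, and since $\mu_G\circ m'\circ r$ agrees with $\mu_G\circ r'\circ m$, one checks that no element in the image of $m$ is deleted by $\mu_G$ — intuitively because such an element, if deleted, would have to be deleted in $H$ as well, but $m'$ is total so its image survives unless the corresponding $R$-element is deleted, and that case is absorbed into $\mu_R$. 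Hence $\mu_G\circ m\colon L\to G'$ is total; conflict-freeness wrt.\ $r$ is inherited from $m$ since $\mu_G$ is injective where defined and we have only restricted, never merged.

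The main work is showing the outer square $L \xrightarrow{\mu_G\circ m} G' \xleftarrow{r'?} \cdot$ — more precisely that $G'$ together with the induced morphisms is the pushout of $r\colon L\pto R'$ (i.e.\ $\mu_R\circ r$) and $\mu_G\circ m$. I would do this by the explicit construction: compute the pushout of $\mu_R\circ r$ and $\mu_G\circ m$ and exhibit an isomorphism to $S$ restricted appropriately, by tracking equivalence classes of nodes and edges and the validity condition. The cleanest route is a pushout-decomposition argument: the inner square is a pushout, and one shows the "difference" square relating $H$ to $S$ (via $\mu$, $\mu_R$, $\mu_G$) is a pushout of partial morphisms — this is essentially because deleting a consistent set of elements is itself a pushout along the partial identity — and then pushout composition gives that the outer square is a pushout.

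The step I expect to be the main obstacle is precisely verifying that the outer square is a genuine pushout and not merely a commuting square — in particular checking the validity/deletion bookkeeping: one must ensure that the set of elements deleted to form $G'$ is exactly right so that gluing $G'$ and $R'$ over $L$ reproduces $S$ (no element survives that should have been deleted, and none is deleted that should survive), and that deleting a node in $G$ does not strand an edge whose endpoint is needed in $S$. This requires care with the dangling-edge condition in the pushout construction, but it is a finite case analysis on node/edge equivalence classes and their validity, exactly as in the analogous closure lemma for the minor ordering in \cite{JK08,jk:minor-wqo-corrected}.
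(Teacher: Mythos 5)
Your construction of $R'$, $\mu_R$ and $n=\mu\circ m'$ is exactly the paper's, but your construction of $G'$ has a genuine gap. You define $G'$ by deleting from $G$ precisely the elements whose image under $r'$ is deleted by $\mu$, and then claim that no element in the image of $m$ is deleted, ``because $m'$ is total so its image survives.'' Totality of $m'$ only says that every element of $R$ has an image in $H$; it does not prevent $\mu\colon H\subArrow S$ from being undefined on that image. If $x=m(l)$ with $r(l)$ defined and $\mu(m'(r(l)))$ undefined, your rule deletes $x$ from $G'$, so $\mu_G\circ m$ is \emph{not} total, violating condition~2 of Definition~\ref{def:condition-closure}; and this case is not ``absorbed into $\mu_R$'', since the backward step needs a total, conflict-free match of $L$ into the pushout-complement candidate. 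A minimal counterexample: $L=R$ a single node, $r$ the identity, $G$ a single node, $H=G$, and $S$ the empty graph with $\mu$ everywhere undefined; your $G'$ is empty and $\mu_G\circ m$ is undefined on $L$. The paper's definition avoids this by force: $G'$ keeps every element that is in the image of $m$ \emph{or} has a defined image under $\mu\circ r'$ (so here $G'=G$, while $\mu_R$ deletes the node of $R$), and the outer square is still a pushout because the invalid classes coming from the undefined $R'$-leg delete exactly the right elements when gluing.

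A secondary weakness: your argument that the outer square is a pushout is only a sketch (``deleting a consistent set of elements is itself a pushout along the partial identity'', plus an appeal to pushout composition). The paper instead verifies the universal property directly: given a competing cospan $(n',s')$, it uses the pushout property of the inner square to get $\eta\colon H\pto S'$, and then exploits that $\mu$ is injective and surjective, so $\mu^{-1}\colon S\to H$ is a total morphism, to define the mediating morphism $\eta'=\eta\circ\mu^{-1}$ and check $\eta'\circ n=n'$ and $\eta'\circ s=s'$ via surjectivity of $\mu_R,\mu_G$. Your decomposition idea could likely be made to work, but as written it is not a proof; the concrete error to fix first is the definition of $G'$ and the totality claim for $\mu_G\circ m$.
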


\begin{proof}
Let the morphisms be given as in the left diagram below. We will show the 
existence of the subgraph morphisms $\mu_R$, $\mu_G$ and the morphisms 
$n$, $s$, such that $\mu_G \circ m$ and $n$ are total and 
$\mu_G \circ m$ is conflict-free wrt.~$r$.
\begin{center}
  \scalebox{\catDiagramScale}{\begin{tikzpicture}[x=1.5cm,y=-1.5cm]

%%%%%%%%%%%%%%%%%%%% left diagram %%%%%%%%%%%%%%%%%%%%
\begin{scope}[shift={(0,0)}]
  \node (La) at (0,0) {$L$};
  \node (Ra) at (1,0) {$R$};
  \node (Ga) at (0,1) {$G$};
  \node (Ha) at (1,1) {$H$};
  \node (Sa) at (2,2) {$S$};
  \draw[mor-parl] (La) -- node [midway, above] {$r$} (Ra);
  \draw[mor-tot] (La) -- node [midway, left] {$m$} (Ga);
  \draw[mor-tot] (Ra) -- node [midway, right] {$m'$} (Ha);
  \draw[mor-parl] (Ga) -- node [midway, above] {$r'$} (Ha);
  \draw[mor-subgraph] (Ha) -- node [midway, above right] {$\mu$} (Sa);
\end{scope}

%%%%%%%%%%%%%%%%%%%% right diagram %%%%%%%%%%%%%%%%%%%%
\begin{scope}[shift={(3.5,0)}]
  \node (Lb) at (0,0) {$L$};
  \node (Rb) at (1,0) {$R$};
  \node (Mb) at (2,0) {$R'$};
  \node (Gb) at (0,1) {$G$};
  \node (Hb) at (1,1) {$H$};
  \node (Sb) at (2,2) {$S$};
  \node (Xb) at (0,2) {$G'$};
  \node (S2b) at (3,3) {$S'$};
  \draw[mor-parl] (Lb) -- node [midway, above] {$r$} (Rb);
  \draw[mor-subgraph] (Rb) -- node [midway, above] {$\mu_R$} (Mb);
  \draw[mor-tot] (Lb) -- node [midway, left] {$m$} (Gb);
  \draw[mor-tot] (Rb) -- node [midway, right] {$m'$} (Hb);
  \draw[mor-tot] (Mb) -- node [midway, right] {$n$} (Sb);
  \draw[mor-parl] (Gb) -- node [midway, above] {$r'$} (Hb);
  \draw[mor-subgraph] (Gb) -- node [midway, left] {$\mu_G$} (Xb);
  \draw[mor-parl] (Xb) -- node [midway, above] {$s$} (Sb);
  \draw[mor-subgraph] (Hb) -- node [midway, above right] {$\mu$} (Sb);
  \draw[mor-parl] (Mb) to[bend left=15] node [midway, above right] 
  {$n'$}(S2b);
  \draw[mor-parl] (Xb) to[bend right=15] node [midway, below left] 
  {$s'$}(S2b);
  \draw[mor-parl] (Sb) -- node [midway, above right] {$\eta'$}(S2b);
  \draw[mor-parl] (Hb) to[bend right=25] node [midway, below left] 
  {$\eta$}(S2b);
\end{scope}

\end{tikzpicture}%}
\end{center}
We define $R' = (V_{R'}, E_{R'}, c_{R'}, l_{R'})$ with $V_{R'} = \{v 
\in V_R \mid \mu(m'(v)) \text{ is defined}\}$, $E_{R'} = \{e 
\in E_R \mid \mu(m'(e)) \text{ is defined}\}$, $c_{R'}(e) = c_R(e)$ and 
$l_{R'}(e) = l_R(e)$ for all $e \in E_{R'}$. Note that all vertices of the 
sequence $c_R(e)$ are in $V_{R'}$ since $\mu \circ m'$ can only be 
defined for $e$ if it is defined for all attached vertices. On this basis we 
define $\mu_R(x) = x$ for all $x \in V_{R'} \cup E_{R'}$ (undefined otherwise) 
and $n = \mu \circ m'$. 
Obviously, $\mu_R$ is injective and surjective where it is defined, hence a 
subgraph morphism, and 
$n$ is total, since $\mu \circ m'$ is by definition defined on all 
elements of $R'$. Additionally by definition $\mu \circ m' = n \circ 
\mu_R$, since $\mu_R$ is undefined if and only if $\mu \circ m'$ is 
undefined.

The graph $G' = (V_{G'}, E_{G'}, c_{G'}, l_{G'})$ is defined in a similar way 
with $V_{G'} = \{v \in V_G \mid \exists v' \in V_L: m(v') = v \lor 
\mu(r'(v)) \text{ is defined}\}$, $E_{G'} = \{e \in E_G \mid \exists e' 
\in E_L: m(e') = e \lor \mu(r'(e)) \text{ is defined}\}$, $c_{G'}(e) 
= c_G(e)$ and $l_{G'}(e) = l_G(e)$ for all $e \in E_{G'}$. Note that $e \in 
E_{G'}$ again implies that all vertices of $c_G(e)$ are in $V_{G'}$. Also 
$\mu_G(x) = x$ for all $x \in V_{G'} \cup E_{G'}$ (undefined otherwise) and 
$s = \mu \circ r'$. Since $\mu_G$ is injective and
surjective where it is defined, it is 
a subgraph morphism. Additionally $\mu_G$ is defined on all elements of $G$ 
which have a preimage in $L$, hence $\mu_G \circ m$ is total and also 
conflict-free wrt.~$r$, since $\mu_G$ is injective and $m$ is 
conflict-free wrt.~$r$. By definition $\mu \circ r' = s \circ 
\mu_G$ since $\mu \circ r'$ is undefined on every element of $G$ on which 
$\mu_G$ is undefined.

Finally we show that the outer square is a pushout. We first observe that the 
outer diagram commutes, since $n \circ \mu_R \circ r = \mu \circ 
m' \circ r = \mu \circ r' \circ m = s \circ \mu_G 
\circ m$. Assume there is a graph $S'$ with morphisms $n' : R' \pto 
S'$ and $s' : G' \pto S'$ and $n' \circ \mu_R \circ r = s' 
\circ \mu_G \circ m$. The inner square is a pushout, hence there is a 
morphism $\eta : H \to S'$ such that $\eta \circ m' = n' \circ 
\mu_R$ and $\eta \circ r' = s' \circ \mu_G$. Since $\mu$ is 
injective and surjective, the inverse morphism $\mu^{-1} : S \to H$ is total 
and well-defined. Thus, there is a unique morphism $\eta' : S \pto 
S'$ with $\eta' = \eta \circ \mu^{-1}$. Due to the commutativity in the 
diagram we know that $s' \circ \mu_G = \eta \circ r' = \eta' \circ 
\mu \circ r' = \eta' \circ s \circ \mu_G$. Since $\mu_G$ is 
surjective, we obtain that $s' = \eta' \circ s$. Analogously we can 
show that $n' = \eta' \circ n$ commutes and the diagram is a
pushout. \qed
\end{proof}

\begin{lemma}
\label{lem:subgraph-compatibility}
Every GTS satisfies the compatibility condition of
Definition~\ref{def:wsts} with respect to the subgraph ordering.
\end{lemma}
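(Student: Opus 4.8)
The plan is to verify the compatibility condition of Definition~\ref{def:wsts} directly, producing a single rewriting step on the larger graph. Suppose $G_1 \subOrder H_1$ and $G_1 \Rightarrow G_2$, where this step applies a rule $r \colon L \pto R$ via a conflict-free match $m \colon L \to G_1$, with co-match $m_2 \colon R \to G_2$ and trace $r_1 \colon G_1 \pto G_2$ forming the rewriting pushout of Definition~\ref{def:rewriting}. Since $G_1 \subOrder H_1$ there is, by Lemma~\ref{lem:subgraph-condition1} and the correspondence established in its proof, a total injective morphism $\iota \colon G_1 \to H_1$. First I would extend the match along $\iota$: set $m' = \iota \circ m \colon L \to H_1$. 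This morphism is total, and because $\iota$ is injective it identifies precisely the same pairs of elements as $m$; hence, as $m$ is conflict-free wrt.\ $r$, so is $m'$. Therefore $r$ applies to $H_1$ at $m'$, yielding a rewriting step $H_1 \Rightarrow H_2$ with the very same rule (hence a legitimate transition of the GTS), with co-match $m_2' \colon R \to H_2$ and trace $r_1' \colon H_1 \pto H_2$.

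Next I would establish $G_2 \subOrder H_2$ by a pushout-pasting argument. Form the pushout of the partial trace $r_1 \colon G_1 \pto G_2$ along the total morphism $\iota \colon G_1 \to H_1$; call the resulting legs $\rho \colon H_1 \pto P$ and $\lambda \colon G_2 \to P$. Pasting this square underneath the rewriting pushout of $G_1$ (sharing the edge $r_1$) and invoking the standard pushout-pasting lemma, which holds in the category of partial hypergraph morphisms, the outer square is a pushout of $r \colon L \pto R$ and $\iota \circ m = m'$; by uniqueness of pushouts it therefore coincides, up to isomorphism, with the rewriting pushout producing $H_2$, so we may take $P = H_2$ and $\lambda \colon G_2 \to H_2$. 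It then suffices to check that $\lambda$ is total and injective, for then its image is a subgraph of $H_2$ isomorphic to $G_2$, i.e.\ $G_2 \subOrder H_2$. Together with $H_1 \Rightarrow H_2$ this is exactly the compatibility condition (realized with a single step on the right).

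The routine parts are the extension of the match and the conflict-freeness check. The one step that needs genuine care — and the main obstacle — is controlling the pushout of $r_1$ along $\iota$: since $r_1$ is only partial (it records the deletion of nodes together with their incident edges, and may be non-injective when $r$ is), one must confirm that the leg $\lambda \colon G_2 \to H_2$ opposite $\iota$ stays total and injective. Intuitively, moving from $G_1$ to $H_1$ only adds material outside the rewritten region, so deleting in $H_1$ exactly the $\iota$-images of the elements deleted from $G_1$ neither erases nor merges anything inside the copy of $G_2$. Formally I would read this off the explicit pushout construction (Definition~\ref{def:pushout} and the accompanying description of valid equivalence classes), using the totality and injectivity of $\iota$ to argue that every equivalence class meeting $G_2$ meets it in a single element and is valid. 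Once this is in place, the pasting argument above concludes the proof.
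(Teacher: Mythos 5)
Your proposal is correct and takes essentially the same route as the paper: both lift the match along the total injective morphism induced by the subgraph relation (your $\iota$, the paper's $\mu^{-1}$), observe that injectivity preserves conflict-freeness, perform a single rewriting step on the larger graph, and conclude via a pushout composition/decomposition argument that the comparison morphism from the small result into the large result is total and injective, hence a witness of the subgraph relation. The only cosmetic difference is that you construct the lower pushout of the trace along $\iota$ and identify the pasted square with the rewriting of the larger graph by uniqueness of pushouts, whereas the paper starts from the outer rewriting pushout and uses the decomposition lemma, then simply asserts that pushouts preserve totality and injectivity (the fact you propose to verify explicitly via the equivalence-class construction).
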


\begin{proof}
  We have to show that whenever $G\Rightarrow H$ and $G\subOrder G'$,
  then there exists $H'$ with $G'\Rightarrow^* H'$ (here even
  $G'\Rightarrow H'$) and $G'\subOrder H'$.

  Let $m : L \pto R$ be a rule and $m : L \to G$ a matching that is
  conflict-free wrt.~$r$ such that $G$ is rewritten to $H$, i.e.~the
  upper inner square below is a pushout. Furthermore let $\mu : G'
  \subArrow G$ be a subgraph morphism, then the inverse morphism
  $\mu^{-1}$ is total and injective. Thus, the morphism $m_\mu : L \to
  G'$ with $m_\mu = \mu^{-1} \circ m$ in the diagram below is total
  and conflict-free wrt.~$r$ and $G'$ can be rewritten to $H'$.
  \begin{center}
    \scalebox{\catDiagramScale}{\begin{tikzpicture}[x=1.5cm,y=-1.5cm]

\node (L) at (0,0) {$L$};
\node (R) at (1,0) {$R$};
\node (G) at (0,1) {$G$};
\node (H) at (1,1) {$H$};
\node (G2) at (-1,1.5) {$G'$};
\node (H2) at (2,1.5) {$H'$};
\draw[mor-parl] (L) -- node [midway, above] {$r$} (R);
\draw[mor-tot] (L) -- node [midway, left] {$m$} (G);
\draw[mor-parl] (G) -- node [midway, above] {$r'$} (H);
\draw[mor-tot] (R) -- node [midway, left] {$m'$} (H);
\draw[mor-tot-inj] (G) -- node [midway, above] {$\mu^{-1}$} (G2);
\draw[mor-tot] (L) to[bend right] node [midway, above left] {$m_\mu$} (G2);
\draw[mor-parl] (G2) -- node [midway, above] {$r_\mu$} (H2);
\draw[mor-tot] (R) to[bend left] node [midway, above right] {$m_\mu'$} (H2);
\draw[mor-tot] (H) to node [midway, above] {$\mu'$} (H2);

\end{tikzpicture}%}
  \end{center}
  Since the outer square is a 
  pushout, there is a unique morphism $\mu' : H \to H'$ such that the lower 
  inner square commutes. Furthermore since the upper inner square and the outer 
  square are pushouts, so is the lower inner square. This means that $\mu'$ is 
  total and injective since $\mu^{-1}$ is and pushouts preserve this 
  properties. Hence, $H \subOrder H'$. \qed
\end{proof}

\begin{proposition}\label{prop:subgraph-wqo}
  The subgraph ordering on hypergraphs is a wqo for the
  set of graphs where the longest undirected path is bounded by a constant.
\end{proposition}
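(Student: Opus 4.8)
The plan is to reduce the statement to Ding's theorem \cite{ding:subgraphs-wqo}, which says that finite simple graphs with bounded path length are well-quasi-ordered by the (topological-)subgraph ordering; the work is in translating between hypergraphs with a labelled, ordered incidence structure and the ordinary graphs Ding considers. Let $\mathcal{H}_k$ denote the class of $\Lambda$-hypergraphs in which every undirected path has length at most $k$. I would fix a standard encoding $\mathit{enc}$ sending a hypergraph $H$ to an ordinary finite graph: represent each node of $H$ by a vertex, each edge $e$ of $H$ by a fresh ``edge-vertex'' carrying the label $l_H^E(e)$ and the arity, and connect the edge-vertex of $e$ to the $i$-th node in $c_H(e)$ by a short gadget that encodes the position $i$ (for instance a path of length $i+1$, or a decorated vertex) so that the ordering and multiplicity of tentacles is recoverable. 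The two things to check about this encoding are: (i) it is ``injective up to the relevant isomorphism'', and more importantly (ii) $G_1 \subOrder G_2$ in $\mathcal{H}_k$ if and only if $\mathit{enc}(G_1)$ embeds into $\mathit{enc}(G_2)$ as the kind of subgraph Ding's wqo is about, and (iii) bounded undirected path length in $H$ translates into bounded path length in $\mathit{enc}(H)$ (the gadgets only lengthen paths by a bounded multiplicative/additive factor depending on the maximal arity in $\Lambda$, which is finite).

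Given such an encoding, the argument is short. Take any infinite sequence $H_0, H_1, H_2, \dots$ in $\mathcal{H}_k$. Since $\Lambda$ is finite with a fixed arity function, the encoded graphs $\mathit{enc}(H_0), \mathit{enc}(H_1), \dots$ all lie in a single class of finite graphs with path length bounded by some $k'$ depending only on $k$ and $\Lambda$. By Ding's theorem this class is a wqo, so there are indices $i < j$ with $\mathit{enc}(H_i)$ embedding into $\mathit{enc}(H_j)$ in the appropriate sense; by property (ii) this gives $H_i \subOrder H_j$. Hence $\subOrder$ is a wqo on $\mathcal{H}_k$. It is already a quasi-order and, by the anti-symmetry-up-to-isomorphism discussion preceding Definition~\ref{def:condition-representation}, this is exactly the well-quasi-order property we need.

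The main obstacle is property (ii): making the encoding faithful with respect to the subgraph ordering. A naive encoding can create ``illegal'' subgraphs of $\mathit{enc}(H_2)$ that do not come from any hypergraph subgraph of $H_2$ — e.g.\ one might delete a tentacle gadget without deleting its edge-vertex, or split an edge-vertex. I would handle this by choosing the gadgets so that any subgraph of $\mathit{enc}(H_2)$ that is itself of the form $\mathit{enc}(H)$ for some hypergraph $H$ corresponds precisely to a subgraph $H \subOrder H_2$, and by noting that Ding's ordering (topological minors / subgraphs of bounded-path-length simple graphs) still yields a wqo when restricted to the sub-class of encodings — a wqo restricted to any subset is again a wqo — so it suffices that the encoding lands in that class and reflects the order, not that every intermediate graph be an encoding. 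A secondary point to be careful about is hyperedges of arity $0$ or $1$ and parallel hyperedges with identical connection and label; the encoding must keep distinct such edges distinct (distinct edge-vertices), and bounded path length must be interpreted so that these do not cause trouble. Once the encoding is pinned down, no further combinatorics is needed beyond citing \cite{ding:subgraphs-wqo}.
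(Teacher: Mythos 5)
Your proposal follows essentially the same route as the paper's proof: encode each hypergraph as a node-labelled simple graph (vertices for the nodes, labelled vertices for the hyperedges, position-decorated vertices for the tentacles), check that bounded undirected path length is preserved up to a constant and that the encoding reflects the subgraph order, and then invoke Ding's wqo theorem for labelled bounded-path simple graphs on the encoded sequence. The only caution is that of your two gadget options the paper's choice---an intermediate vertex labelled with the tentacle position $i$, rather than a path of length $i+1$---is the one for which order reflection goes through cleanly, since unlabelled path-length gadgets could be re-routed or shortened by an embedding and thereby scramble the tentacle ordering.
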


\begin{proof}
  In \cite{ding:subgraphs-wqo} Ding showed that this proposition holds
  for undirected, simple graphs with node labels. We will now give an
  encoding $f$ of hypergraph to such graphs satisfying the following
  conditions:
  \begin{itemize}
  \item There is a function $g : \nat \to \nat$ such that, if the longest 
  undirected path in a hypergraph $G$ has length $k$, then the longest 
  undirected path in $f(G)$ has length $g(k)$.
  \item For every two hypergraphs $G_1$, $G_2$ if $f(G_1) \subOrder f(G_2)$ 
    then $G_1 \subOrder G_2$.
  \end{itemize}
  If these two properties hold, every infinite sequence $G_0, G_1, \ldots$ of
  hypergraphs with bounded undirected paths can be encoded into an
  infinite sequence of undirected graphs with bounded paths $f(G_0), f(G_1),
  \ldots$ of which we know that two elements $f(G_i) \subOrder f(G_j)$
  exist. Thus, also $G_i \subOrder G_j$ holds.

  Let $G = (V, E, c, l)$ be a $\Lambda$-hypergraph. We define its
  encoding as an undirected graph $f(G) = G' = (V', E', l')$ where
  $E'$ consists of two-element subsets of $V'$ and $l' : V' \to
  \Lambda'$ where the components are defined as follows:
  \begin{align*}
    V' &= V \cup E \cup \{(v,i,e) \mid v \in V, e \in E: c(e) = \alpha v \beta 
    \land |\alpha| = i\}\\
    E' &= \{\{x,y\} \mid x = (v,i,e) \in V' \land (y = v \lor y = e)\}\\
    \Lambda' &= \Lambda \cup \{N\} \cup \{n \in \nat_0 \mid \exists k \in 
    \Lambda : n < \arity(k)\}\\
    l'(x) &= \begin{cases}
      N & \text{if } x \in V \\
      l(x) & \text{if } x \in E \\
      i & \text{if } x = (v,i,e)
    \end{cases}
  \end{align*}
  Note that we assume that $N \notin \Lambda $ and $\Lambda \cap
  \nat_0 = \emptyset$.  An example of such an encoding can be seen in
  the diagram below, where the hypergraph on the left is encoded in
  the graph on the right-hand side.
  \begin{center}
    \scalebox{\graphDiagramScale}{\begin{tikzpicture}[x=1.5cm,y=-1.5cm]

%%%%%%%%%%%%%%%%%%%% left diagram %%%%%%%%%%%%%%%%%%%%
\begin{scope}[shift={(0,0)}]
  \node[stdnode] (hgn1) at (0,0) {};
  \node[stdnode] (hgn2) at (1,0) {};
  \node[stdnode] (hgn3) at (0,1) {};
  \node[stdnode] (hgn4) at (1,1) {};
  \node[stdnode] (hgn5) at (2,0) {};
  \node[henode] (hge1) at ($(hgn1)!0.5!(hgn4)$) {$A$};
  \node[henode] (hge2) at ($(hgn4)!0.5!(hgn5)$) {$B$};
  \draw[undiredge, shorten >=-1pt] (hgn1) -- node[midway, above] {$0$} (hge1);
  \draw[undiredge, shorten >=-1pt] (hgn2) -- node[midway, above] {$1$} (hge1);
  \draw[undiredge, shorten >=-1pt] (hgn3) -- node[midway, below] {$2$} (hge1);
  \draw[undiredge, shorten >=-1pt] (hgn4) -- node[midway, below] {$3$} (hge1);
  \draw[undiredge, shorten >=-1pt] (hgn2) -- node[midway, above] {$0$} (hge2);
  \draw[undiredge, shorten >=-1pt] (hgn5) -- node[midway, above] {$1$} (hge2);
  \draw[undiredge, shorten >=-1pt] (hgn4) -- node[midway, below] {$2$} (hge2);
\end{scope}

%%%%%%%%%%%%%%%%%%%% right diagram %%%%%%%%%%%%%%%%%%%%
\begin{scope}[shift={(4,-0.5)}]
  \node[stdnode, label=above:{$N$}] (gn1) at (0,0) {};
  \node[stdnode, label=above:{$N$}] (gn2) at (2,0) {};
  \node[stdnode, label=above:{$N$}] (gn3) at (0,2) {};
  \node[stdnode, label=above:{$N$}] (gn4) at (2,2) {};
  \node[stdnode, label=above:{$A$}] (gn5e) at (1,1) {};
  \node[stdnode, label=above:{$N$}] (gn6) at (4,0) {};
  \node[stdnode, label=above:{$B$}] (gn7e) at (3,1) {};
  \node[stdnode, label=above:{$0$}] (gn8x) at ($(gn1)!0.5!(gn5e)$) {};
  \node[stdnode, label=above:{$1$}] (gn9x) at ($(gn2)!0.5!(gn5e)$) {};
  \node[stdnode, label=above:{$2$}] (gn10x) at ($(gn3)!0.5!(gn5e)$) {};
  \node[stdnode, label=above:{$3$}] (gn11x) at ($(gn4)!0.5!(gn5e)$) {};
  \node[stdnode, label=above:{$0$}] (gn12x) at ($(gn2)!0.5!(gn7e)$) {};
  \node[stdnode, label=above:{$1$}] (gn13x) at ($(gn6)!0.5!(gn7e)$) {};
  \node[stdnode, label=above:{$2$}] (gn14x) at ($(gn4)!0.5!(gn7e)$) {};
  \draw[undiredge] (gn1) -- (gn8x);
  \draw[undiredge] (gn2) -- (gn9x);
  \draw[undiredge] (gn3) -- (gn10x);
  \draw[undiredge] (gn4) -- (gn11x);
  \draw[undiredge] (gn5e) -- (gn8x);
  \draw[undiredge] (gn5e) -- (gn9x);
  \draw[undiredge] (gn5e) -- (gn10x);
  \draw[undiredge] (gn5e) -- (gn11x);
  \draw[undiredge] (gn2) -- (gn12x);
  \draw[undiredge] (gn6) -- (gn13x);
  \draw[undiredge] (gn4) -- (gn14x);
  \draw[undiredge] (gn7e) -- (gn12x);
  \draw[undiredge] (gn7e) -- (gn13x);
  \draw[undiredge] (gn7e) -- (gn14x);
\end{scope}

\end{tikzpicture}%}
  \end{center}
  We now show that the encoding satisfies the two necessary
  properties.  First we observe, that every (undirected) graph
  generated by this encoding can be transformed back to a unique
  hypergraph, up to isomorphism.

  Now let $G$ be a hypergraph, where the longest undirected path is bounded by
  $k$, we show by contradiction that in $f(G)$ there can not be a path
  of length at least $4k+10$. Assume there is such a path in $f(G)$.
  Apart from the first or last node, all nodes labelled $N$ or $l \in
  \Lambda$ on this path are adjacent to nodes labelled with $n \in
  \nat_0$ and all nodes labelled with $n \in \nat_0$ are adjacent to
  (exactly) one node labelled with $N$ and one node labelled with $l
  \in \Lambda$. We now shorten the path in the least possible way to
  obtain a path of length at least $4k+4$ which starts and ends with
  nodes labelled with $N$. This path can be translated back to a
  sequence $v_0, e_1, v_1, \ldots, v_n, e_{n+1}, v_{n+1}$ since every
  node labelled with $N$ is a node of (the hypergraph) $G$ and every
  node labelled with $l \in \Lambda$ is an edge of $G$. This violates
  our assumption, that the longest undirected path of $G$ is bounded by $k$,
  thus, there is no path of length $4k+10$ or longer in $f(G)$.

  Let $G_1$, $G_2$ be hypergraphs such that $f(G_1) \subOrder f(G_2)$.
  Then there is a total, injective morphism $\mu : f(G_1) \to f(G_2)$.
  Since $f(G_i)$ contains (as nodes) all nodes and edges of $G_i$ (for
  $i \in \{1,2\}$), $\mu$ can be restricted to $V_{G_1} \cup E_{G_1}$
  and is then a total, injective morphism $\mu' : G_1 \to G_2$. The
  nodes of $f(G_i)$ labelled with natural numbers, ensures the
  morphism property on the hypergraphs. By inverting $\mu'$ we obtain
  an injective and surjective, but partial morphism from $G_2$ to
  $G_1$ (a subgraph morphism, see Lemma~\ref{lem:subgraph-condition1}), hence 
  $G_1 \subOrder G_2$. \qed 
\end{proof}

\begin{proposition_app}[\propSubgraphWSTSTitle]{prop:subgraph-wsts}
\propSubgraphWSTS
\end{proposition_app}

\begin{proof}
This is a direct consequence of Lemma~\ref{lem:subgraph-compatibility} and 
Proposition~\ref{prop:subgraph-wqo}. \qed
\end{proof}

\begin{proposition_app}{prop:subgraph-is-decidable}
\propSubgraphDecidable
\end{proposition_app}

\begin{proof}
  In Lemma~\ref{lem:subgraph-condition1},
  \ref{lem:subgraph-condition2} and \ref{lem:subgraph-condition3} we
  have shown that the subgraph ordering satisfies the conditions of
  Procedure~\ref{procedure:main}. Furthermore, the set of minimal
  pushout complements -- not just restricted to $\mathcal{G}_k$ -- can
  be computed in the same way as it is done in \cite{JK08} for the
  minor ordering, such that both variants of
  Procedure~\ref{procedure:main} are applicable. \qed
\end{proof}

\begin{proposition_app}{prop:subgraph-is-undecidable}
\propSubgraphUndecidable
\end{proposition_app}

\begin{proof}
We reduce the control state reachability problem of Minsky machines to the 
restricted coverability problem using the subgraph ordering on the set of 
graphs $\mathcal{G}_2$, where the length of the longest undirected path is less 
than or equal to two. Let 
$(Q,\Delta,(q_0,m,n))$ be the Minsky machine. We define a GTS using $\{q, q^B 
\mid q \in Q\} \cup \{c_1, c_2, X\}$ as the set of labels. The initial graph 
is shown in Figure~\ref{fig:subgraph-undecidable-initial} and illustrates how 
configurations of the Minsky machine are represented as graphs.

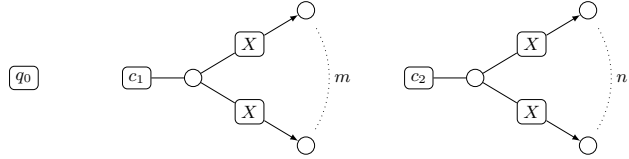
\begin{figure}[ht]
  \centering
  \scalebox{\graphDiagramScale}{\begin{tikzpicture}[StdGraphGrid]

\node[henode] (e1) at (0,0) {$q_0$};

\begin{scope}[shift={(1,0)}]
  \node[henode] (e2) at (0,0) {$c_1$};
  \node[stdnode] (n1) at (0.5,0) {};
  \draw[undiredge] (n1) -- (e2);
  \node[stdnode] (n1a) at (1.5,1) {};
  \node[stdnode] (n1b) at (1.5,-1) {};
  \node[henode] (e3) at ($(n1)!0.5!(n1a)$) {$X$};
  \node[henode] (e4) at ($(n1)!0.5!(n1b)$) {$X$};
  \draw[graphedge] (n1) -- (e3) -- (n1a);
  \draw[graphedge] (n1) -- (e4) -- (n1b);
  \draw[dotline] (n1a) to[bend right] node[midway, right] {$m$} (n1b);
\end{scope}

\begin{scope}[shift={(3.5,0)}]
  \node[henode] (e5) at (0,0) {$c_2$};
  \node[stdnode] (n2) at (0.5,0) {};
  \draw[undiredge] (n2) -- (e5);
  \node[stdnode] (n2a) at (1.5,1) {};
  \node[stdnode] (n2b) at (1.5,-1) {};
  \node[henode] (e6) at ($(n2)!0.5!(n2a)$) {$X$};
  \node[henode] (e7) at ($(n2)!0.5!(n2b)$) {$X$};
  \draw[graphedge] (n2) -- (e6) -- (n2a);
  \draw[graphedge] (n2) -- (e7) -- (n2b);
  \draw[dotline] (n2a) to[bend right] node[midway, right] {$n$} (n2b);
\end{scope}

\end{tikzpicture}%}
  \caption{The initial configuration of the Minsky machine represented by a 
  graph}
  \label{fig:subgraph-undecidable-initial}
\end{figure}

For each transition rule of the Minsky machine, we add a graph transformation 
rule as shown in Figure~\ref{fig:subgraph-undecidable-rules}. A counter is 
represented as a star-like structure with the counters main node as centre, 
where the value of the counter is the number of attached $X$-edges. 
Incrementing and decrementing corresponds to creating and deleting $X$-edges. 
Regardless of the counters value, the longest undirected path of this structure 
has at most length two.

The zero-test adds two $X$-edges and blocks the state-edge, such that the 
rewritten graph has a undirected path of length three if and only if the 
counter was not zero (i.e.~had an $X$-edge attached). The auxiliary rules 
unblock the state to enable further computation.

\begin{figure}[ht]
  \centering
  \scalebox{\graphDiagramScale}{\begin{tikzpicture}[StdGraphGrid]

%%%%%%%%%%%%%%%%%%%% first rule %%%%%%%%%%%%%%%%%%%%
\begin{scope}
  \node[anchor=west] (r1name) at (0,0) {\underline{$(q,\minskyInc{c_1},p)$}:};

  \begin{scope}[shift={(2,-0.5)}]
    \node[henode] (r1le1) at (0,0.5) {$q$};
    \node[henode] (r1le2) at (0.5,0) {$c_i$};
    \node[stdnode] (r1ln1) at (0.5,0.5) {};
    \draw[undiredge] (r1ln1) -- (r1le2);
  \end{scope}
  \node[fit=(r1le1) (r1le2) (r1ln1)] (r1l) {};

  \begin{scope}[shift={(4,-0.5)}]
    \node[henode] (r1re1) at (0,0.5) {$p$};
    \node[henode] (r1re2) at (0.5,0) {$c_i$};
    \node[stdnode] (r1rn1) at (0.5,0.5) {};
    \node[stdnode] (r1rn2) at (1.5,0.5) {};
    \node[henode] (r1re3) at ($(r1rn1)!0.5!(r1rn2)$) {$X$};
    \draw[undiredge] (r1rn1) -- (r1re2);
    \draw[graphedge] (r1rn1) -- (r1re3) -- (r1rn2);
  \end{scope}
  \node[fit=(r1re1) (r1re2) (r1re3) (r1rn1) (r1rn2)] (r1r) {};
  
  \node (r1edge) at ($(r1l.east)!0.5!(r1r.west)$) {$\Rightarrow$};
\end{scope}

%%%%%%%%%%%%%%%%%%%% second rule %%%%%%%%%%%%%%%%%%%%
\begin{scope}[shift={(0,1.25)}]
  \node[anchor=west] (r2name) at (0,0) {\underline{$(q,\minskyDec{c_1},p)$}:};

  \begin{scope}[shift={(2,-0.5)}]
    \node[henode] (r2le1) at (0,0.5) {$q$};
    \node[henode] (r2le2) at (0.5,0) {$c_i$};
    \node[stdnode] (r2ln1) at (0.5,0.5) {};
    \node[stdnode] (r2ln2) at (1.5,0.5) {};
    \node[henode] (r2le3) at ($(r2ln1)!0.5!(r2ln2)$) {$X$};
    \draw[undiredge] (r2ln1) -- (r2le2);
    \draw[graphedge] (r2ln1) -- (r2le3) -- (r2ln2);
  \end{scope}
  \node[fit=(r2le1) (r2le2) (r2le3) (r2ln1) (r2ln2)] (r2l) {};

  \begin{scope}[shift={(5,-0.5)}]
    \node[henode] (r2re1) at (0,0.5) {$p$};
    \node[henode] (r2re2) at (0.5,0) {$c_i$};
    \node[stdnode] (r2rn1) at (0.5,0.5) {};
    \draw[undiredge] (r2rn1) -- (r2re2);
  \end{scope}
  \node[fit=(r2re1) (r2re2) (r2rn1)] (r2r) {};

  \node (r2edge) at ($(r2l.east)!0.5!(r2r.west)$) {$\Rightarrow$};
\end{scope}

%%%%%%%%%%%%%%%%%%%% third rule %%%%%%%%%%%%%%%%%%%%
\begin{scope}[shift={(0,2.5)}]
  \node[anchor=west] (r3name) at (0,0) {\underline{$(q,\minskyZero{c_1},p)$}:};

  \begin{scope}[shift={(2,-0.5)}]
    \node[henode] (r3le1) at (0,0.5) {$q$};
    \node[henode] (r3le2) at (0.5,0) {$c_i$};
    \node[stdnode] (r3ln1) at (0.5,0.5) {};
    \draw[undiredge] (r3ln1) -- (r3le2);
  \end{scope}
  \node[fit=(r3le1) (r3le2) (r3ln1)] (r3l) {};

  \begin{scope}[shift={(4,-0.5)}]
    \node[henode] (r3re1) at (0,0.5) {$p^B$};
    \node[henode] (r3re2) at (0.5,0) {$c_i$};
    \node[stdnode] (r3rn1) at (0.5,0.5) {};
    \draw[undiredge] (r3rn1) -- (r3re2);
    \node[stdnode] (r3rn2) at (1.5,0.5) {};
    \node[henode] (r3re3) at ($(r3rn1)!0.5!(r3rn2)$) {$X$};
    \draw[graphedge] (r3rn1) -- (r3re3) -- (r3rn2);
    \node[stdnode] (r3rn3) at (2.5,0.5) {};
    \node[henode] (r3re4) at ($(r3rn2)!0.5!(r3rn3)$) {$X$};
    \draw[graphedge] (r3rn2) -- (r3re4) -- (r3rn3);
  
  \end{scope}
  \node[fit=(r3re1) (r3re2) (r3re3) (r3re4) (r3rn1) (r3rn2) (r3rn3)] (r3r) {};

  \node (r3edge) at ($(r3l.east)!0.5!(r3r.west)$) {$\Rightarrow$};
\end{scope}

%%%%%%%%%%%%%%%%%%%% auxiliary rule %%%%%%%%%%%%%%%%%%%%
\begin{scope}[shift={(0,3.75)}]
  \node[anchor=west] (r4name) at (0,0) {\underline{$\forall q \in Q$}:};

  \begin{scope}[shift={(2,-0.5)}]
    \node[henode] (r4le1) at (0,0.5) {$q^B$};
    \node[henode] (r4le2) at (0.5,0) {$c_i$};
    \node[stdnode] (r4ln1) at (0.5,0.5) {};
    \draw[undiredge] (r4ln1) -- (r4le2);
    \node[stdnode] (r4ln2) at (1.5,0.5) {};
    \node[henode] (r4le3) at ($(r4ln1)!0.5!(r4ln2)$) {$X$};
    \draw[graphedge] (r4ln1) -- (r4le3) -- (r4ln2);
    \node[stdnode] (r4ln3) at (2.5,0.5) {};
    \node[henode] (r4le4) at ($(r4ln2)!0.5!(r4ln3)$) {$X$};
    \draw[graphedge] (r4ln2) -- (r4le4) -- (r4ln3);
  \end{scope}
  \node[fit=(r4le1) (r4le2) (r4le3) (r4le4) (r4ln1) (r4ln2) (r4ln3)] (r4l) {};

  \begin{scope}[shift={(6,-0.5)}]
    \node[henode] (r4re1) at (0,0.5) {$q$};
    \node[henode] (r4re2) at (0.5,0) {$c_i$};
    \node[stdnode] (r4rn1) at (0.5,0.5) {};
    \draw[undiredge] (r4rn1) -- (r4re2);
  \end{scope}
  \node[fit=(r4re1) (r4re2) (r4rn1)] (r4r) {};

  \node (r4edge) at ($(r4l.east)!0.5!(r4r.west)$) {$\Rightarrow$};
\end{scope}

\end{tikzpicture}%}
  \caption{Translation of Minsky rules to GTS rules}
  \label{fig:subgraph-undecidable-rules}
\end{figure}
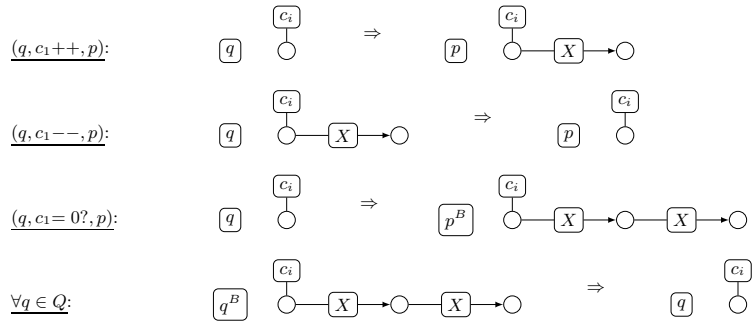

Obviously, if there is a sequence of transitions of the Minsky machine
which leads from a configuration $(q_0,m,n)$ to a state $q_f$, this
sequence can be copied in the GTS and every graph generated through
this sequence is in $\mathcal{G}_2$. On the other hand, if the graph
consisting of a single $q_f$-edge is $\mathcal{G}_2$-restricted coverable in 
the GTS, there is a sequence of rule applications corresponding to a sequence of
transitions of the Minsky machine. Since this rule applications
generate only graphs in $\mathcal{G}_2$, the zero-test-rule is only
applied if the counters value is in fact zero and the sequence of
transitions is valid. 

Instead of adding and removing a path of length two in the last two rules of 
Figure~\ref{fig:subgraph-undecidable-rules} one can add and remove a path of 
length $k$ to show the undecidability for $\mathcal{G}_k$-restricted 
well-structured GTS.\qed
\end{proof}

\subsection{Induced Subgraph Ordering}

\begin{lemma}\label{lem:indsubgraph-condition1}
The induced subgraph ordering is presentable by induced subgraph
morphisms.
\end{lemma}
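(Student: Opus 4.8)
The plan is to follow the same two-step pattern used in the proof of Lemma~\ref{lem:subgraph-condition1} for the plain subgraph ordering, now keeping track of the extra condition satisfied by induced subgraph morphisms (undefinedness on an edge forces undefinedness on an incident node). Concretely, I will verify the three things required by Definition~\ref{def:condition-representation}: (i) whenever $S \indsubOrder G$ there is an induced subgraph morphism $\mu\colon G \indsubArrow S$; (ii) whenever there is an induced subgraph morphism $\mu\colon G \indsubArrow S$, then $S \indsubOrder G$; and (iii) the class of induced subgraph morphisms is closed under composition. Anti-symmetry up to isomorphism is immediate, since a non-trivial node deletion strictly decreases the number of nodes, so $S \indsubOrder G \indsubOrder S$ forces no deletions at all.

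For (i), if $S$ is obtained from $G$ by deleting a node set $D \subseteq V_G$ together with all edges incident to some node of $D$, I take $\mu$ to be undefined exactly on $D$ and on every edge incident to a node of $D$, and the identity elsewhere. This is a partial morphism (an edge on which $\mu$ is defined has all incident nodes outside $D$, hence $\mu$ is defined on them), it is trivially injective where defined and surjective onto $S$, and by construction an edge on which it is undefined is incident to a deleted — hence undefined — node; so $\mu$ is an induced subgraph morphism. (Alternatively, one may factor this into single-node deletions $G_i \indsubArrow G_i \setminus \{v\}$ and appeal to (iii).)

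For (ii), since $\mu$ is injective on its domain and surjective, its inverse $\mu^{-1}\colon S \to G$ is a total injective graph morphism (label- and connection-preservation transfer from $\mu$, using that $\mu^{-1}\circ\mu$ is the identity on the domain and that $\mu$ is defined on the nodes incident to any edge in its domain), and its image is isomorphic to $S$. It remains to check that this image is an induced subgraph of $G$, i.e.\ that every edge $e$ of $G$ all of whose incident nodes lie in $\mu^{-1}(V_S)$ already lies in $\mu^{-1}(E_S)$. If all incident nodes of $e$ are in the image, then $\mu$ is defined on each of them, so by the contrapositive of the defining condition of an induced subgraph morphism $\mu$ is defined on $e$; injectivity then gives $e = \mu^{-1}(\mu(e)) \in \mu^{-1}(E_S)$. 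Conversely, edges of $\mu^{-1}(E_S)$ have all their incident nodes in $\mu^{-1}(V_S)$, so the image is exactly the subgraph of $G$ induced on $\mu^{-1}(V_S)$, and hence $S \indsubOrder G$.

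For (iii), injectivity-where-defined and surjectivity are preserved by composition of partial maps exactly as in the subgraph case. The only point needing care is the extra edge condition: if $\mu_2 \circ \mu_1$ is undefined on an edge $e$, then either $\mu_1$ is already undefined on $e$ — in which case it is undefined on some node incident to $e$, and so is the composite — or $\mu_1(e)$ is defined and $\mu_2$ is undefined on it, hence undefined on some node $v'$ incident to $\mu_1(e)$; since $\mu_1$ is a morphism, the incident nodes of $\mu_1(e)$ are the $\mu_1$-images of the incident nodes of $e$, so $v' = \mu_1(v)$ for an incident node $v$ of $e$, and then $\mu_2 \circ \mu_1$ is undefined on that $v$. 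I expect this last bookkeeping step, together with getting the ``no extra edges'' direction of (ii) exactly right from the defining condition of an induced subgraph morphism, to be the only mildly delicate points; everything else is routine and mirrors Lemma~\ref{lem:subgraph-condition1}.
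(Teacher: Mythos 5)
Your proposal is correct and follows essentially the same route as the paper: closure of induced subgraph morphisms under composition (with the same case analysis on where the edge becomes undefined), factoring $S \indsubOrder G$ into single-node-deletion morphisms, and, for the converse, using that every edge on which $\mu$ is undefined is incident to an undefined node so that the surviving part of $G$ is exactly the subgraph induced on the surviving nodes. Your phrasing of the converse via the total injective inverse $\mu^{-1}$ (as in the plain subgraph case) rather than via a decomposition into node deletions is only a cosmetic difference, and your extra checks (contrapositive of the edge condition, anti-symmetry up to isomorphism) are fine.
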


\begin{proof}
Let $\mu_1 : G_1 \indsubArrow G_2$ and $\mu_2 : G_2 \indsubArrow G_3$ be two 
induced subgraph morphisms. Induced subgraph morphisms are closed under 
composition, since injectivity and surjectivity are preserved and if $\mu_2 
\circ \mu_1$ is undefined for some edge $e$, then $\mu_1$ is undefined on $e$ 
or $\mu_2$ is undefined on $\mu_1(e)$ implying that $\mu_2 \circ \mu_1$ is 
undefined for at least one node of $e$.

For some graph $G$ we can obtain any induced subgraph $G'$ by a sequence of 
node deletions including all attached edges. Each morphisms  
$\mu_i : G_i \indsubArrow G_{i+1}^x$ of this sequence, where $G_{i+1}^x$ is 
obtained by deleting the node $x$ and all its attached edges from $G_i$, is an 
induced subgraph morphisms and since they are closed under composition, the 
entire sequence is as well.

On the other hand every induced subgraph morphism $\mu : G \indsubArrow G'$ 
can be split into a sequence of node deletions (deleting all attached edges), 
since every deleted edge is attached to a deleted node, hence $G' \indsubOrder 
G$. \qed
\end{proof}

\begin{lemma}\label{lem:indsubgraph-condition2}
Induced subgraph morphisms are preserved by total pushouts.
\end{lemma}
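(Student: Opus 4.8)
The plan is to reduce to the subgraph case and then check the one extra condition that induced subgraph morphisms impose beyond subgraph morphisms. First I would note that every induced subgraph morphism is in particular a subgraph morphism (it is injective where defined and surjective), so by Lemma~\ref{lem:subgraph-condition2} the pushout morphism $\mu'$ is already injective where defined and surjective. It therefore suffices to verify that whenever $\mu'$ is undefined on an edge $e \in E_{G_2}$ it is also undefined on some node incident to $e$ in $G_2$; equivalently, the set of elements of $G_2$ on which $\mu'$ is defined forms an induced subgraph of $G_2$.

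I would argue this using the explicit pushout construction (Appendix~\ref{sec:pushouts}): $G_3$ consists of the valid $\equiv$-classes of the nodes and edges of $G_1$ and $G_2$, and $\mu'$ maps an element of $G_2$ to its class if that class is valid and is undefined otherwise. Since $g$ is total, a class can only be invalid because of $\mu$: a node class is invalid iff it contains $g(v_0)$ for some $v_0 \in V_{G_0}$ with $\mu(v_0)$ undefined, and an edge class is invalid iff it contains $g(e_0)$ for some $e_0 \in E_{G_0}$ with $\mu(e_0)$ undefined, or one of its incident node classes is invalid. Using injectivity of $\mu$, the class of $e$ contains no other edge of $G_2$; and if $e = g(e_0)$ for some $e_0 \in E_{G_0}$, then $g$ being a morphism gives $c_{G_2}(e) = g(c_{G_0}(e_0))$, so the node classes incident to $[e]$ in $G_3$ are exactly the classes $[w]$ with $w$ a node incident to $e$ in $G_2$ (this holds trivially, with $[e] = \{e\}$, when $e$ has no $g$-preimage).

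Now suppose $\mu'(e)$ is undefined, i.e.\ $[e]$ is invalid. If this is because $[e]$ contains $g(e_0)$ with $\mu(e_0)$ undefined, then, since $\mu$ is an induced subgraph morphism, $\mu$ is undefined on some node $v_0$ incident to $e_0$ in $G_0$; then $g(v_0) \in g(c_{G_0}(e_0)) = c_{G_2}(e)$ is incident to $e$ in $G_2$ and its class is invalid, so $\mu'$ is undefined on $g(v_0)$. Otherwise some node class incident to $[e]$ is invalid, and by the identification above this class is $[w]$ for some node $w$ incident to $e$ in $G_2$, so $\mu'(w)$ is undefined. In both cases $\mu'$ is undefined on a node incident to $e$, so $\mu'$ is an induced subgraph morphism.

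I expect the main obstacle to be the bookkeeping in the middle step: making precise that the node classes incident to $[e]$ in the pushout coincide with the $G_2$-nodes incident to $e$. This relies on totality of $g$, injectivity of $\mu$, and the general morphism property that a morphism defined on an edge is defined on all incident nodes (Definition~\ref{def:morphism}). None of this is deep, but it is where the argument could become fiddly; alternatively, one could phrase the whole proof by contradiction via the universal property, mirroring the proof of Lemma~\ref{lem:subgraph-condition2}.
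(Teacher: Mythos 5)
Your proposal is correct and follows essentially the same route as the paper: reduce injectivity and surjectivity of $\mu'$ to Lemma~\ref{lem:subgraph-condition2}, then verify the additional edge condition for induced subgraph morphisms. If anything, your explicit case analysis via the pushout construction is slightly more careful than the paper's argument, which asserts that an edge on which $\mu'$ is undefined must have a $G_0$-preimage on which $\mu$ is undefined and thus glosses over the situation where the edge is removed only because an incident node is deleted (possibly via a different preimage of that node) --- a situation your second branch, where an incident node class is invalid, handles directly and which anyway yields the desired conclusion.
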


\begin{proof}
  Since every induced subgraph morphism is also a subgraph morphism,
  $\mu'$ is injective and surjective by
  Lemma~\ref{lem:subgraph-condition2}.  Let $e \in E_{G_2}$ be an edge
  on which $\mu'$ is undefined. $e$ has a preimage $e' \in G_0$ since
  otherwise the pushout of $\mu$ and $g$ would contain $e$.  Since
  $\mu'(g(e'))$ is undefined, so is $g'(\mu(e'))$. In fact $\mu$ is
  undefined for $e'$ because otherwise $g$ and $\mu$ would be defined
  on $e$ and $e$ would be in the pushout. $\mu$ is an induced subgraph
  morphism, thus at least one of the nodes $v$ attached to $e'$ is
  undefined and also $\mu'$ has to be undefined on $g(v)$ for the
  diagram to commute. \qed
\end{proof}

\begin{lemma}\label{lem:indsubgraph-condition3}
Induced subgraph morphisms are pushout closed.
\end{lemma}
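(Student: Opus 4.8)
The plan is to follow the same construction as in the proof of Lemma~\ref{lem:subgraph-condition3} (subgraph morphisms are pushout closed), adapting it to the stricter induced-subgraph setting. Given the inner pushout square with $m \colon L \to G$ total and conflict-free, $r' \colon G \pto H$, $m' \colon R \to H$, and an induced subgraph morphism $\mu \colon H \indsubArrow S$, I would first define $R'$ as the restriction of $R$ to those nodes and edges on which $\mu \circ m'$ is defined, and then \emph{additionally} include in $E_{R'}$ every edge of $R$ whose endpoints both survive in $V_{R'}$ --- this is the new ingredient compared to the subgraph case, since an induced subgraph morphism that keeps both endpoints of an edge must keep the edge. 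I would set $\mu_R \colon R \indsubArrow R'$ to be the identity where defined (undefined exactly on the deleted nodes and their incident edges) and $n = \mu \circ m'$. One then checks $\mu_R$ is an induced subgraph morphism (injective, surjective where defined, and undefined on an edge only when undefined on an incident node, by construction) and that $n$ is total; commutativity $\mu \circ m' = n \circ \mu_R$ holds by definition.

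Symmetrically, I would define $G'$ from $G$ by keeping all nodes that either have an $m$-preimage in $L$ or on which $\mu \circ r'$ is defined, keeping all edges with an $m$-preimage or on which $\mu \circ r'$ is defined, and --- again the extra step for the induced case --- adding back every edge of $G$ whose two endpoints both lie in $V_{G'}$. Set $\mu_G \colon G \indsubArrow G'$ to the identity-where-defined morphism and $s = \mu \circ r'$. The point of the ``add back edges on surviving endpoints'' clause is that $G'$ must genuinely be an induced subgraph of $G$, so $\mu_G$ qualifies as an induced subgraph morphism. Since $\mu_G$ is defined on every node and edge of $G$ having an $m$-preimage, $\mu_G \circ m$ is total, and it is conflict-free wrt.~$r$ because $m$ is and $\mu_G$ is injective where defined. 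Commutativity $\mu \circ r' = s \circ \mu_G$ holds by the same argument as before.

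It then remains to show that the outer square $L \to G' \to S$, $L \to R' \to S$ is a pushout. Commutativity of the outer diagram reduces, exactly as in Lemma~\ref{lem:subgraph-condition3}, to $n \circ \mu_R \circ r = \mu \circ m' \circ r = \mu \circ r' \circ m = s \circ \mu_G \circ m$. For the universal property, given a competing cocone $(S', n' \colon R' \pto S', s' \colon G' \pto S')$, I would use that the inner square is a pushout to obtain a mediating $\eta \colon H \to S'$, and then, since $\mu \colon H \indsubArrow S$ is injective and surjective, the inverse $\mu^{-1} \colon S \to H$ is total, giving $\eta' = \eta \circ \mu^{-1} \colon S \pto S'$; surjectivity of $\mu_R$ and $\mu_G$ then forces $n' = \eta' \circ n$ and $s' = \eta' \circ s$, and uniqueness follows from surjectivity of $\mu$. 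The main obstacle, and the place where one must be careful, is verifying that the newly added ``induced'' edges in $R'$ and $G'$ do not break any of the commutativity or pushout computations --- in particular one should check that the extra edges in $G'$ really do get mapped consistently by $s = \mu \circ r'$ into $S$ (this uses that $\mu$ being an induced subgraph morphism keeps an edge whenever it keeps both endpoints, so the presence of the extra edge in $G'$ is matched by a corresponding edge in $S$), and that $R'$, $G'$ as defined still yield $H$ (resp.~$S$) as the pushout rather than something larger. Once this bookkeeping is done the argument is essentially that of the subgraph case.
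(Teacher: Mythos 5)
Your plan follows the paper's strategy (reuse the construction from Lemma~\ref{lem:subgraph-condition3} and upgrade it to the induced setting), but it deviates in how $R'$ and $G'$ are defined, and the deviation matters. For $R'$ your extra clause is actually vacuous: if $\mu(m'(v))$ is defined for all nodes $v$ incident to $e$, then, since $m'$ is total, $\mu$ is defined on every node incident to $m'(e)$, and inducedness of $\mu$ forces $\mu(m'(e))$ to be defined, so $e$ already lies in the subgraph-case $E_{R'}$. Spelling this out is exactly what you need anyway to see that $n=\mu\circ m'$ is total on $R'$. For $G'$ the modification is genuine (for example an edge of $G$ with no $m$-preimage dangling on a node deleted by the rule: both endpoints can lie in $V_{G'}$ while $r'(e)$, hence $\mu(r'(e))$, is undefined). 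Here, however, the justification you give for the crucial check is off target: such an extra edge is in general \emph{not} ``matched by a corresponding edge in $S$'' --- $s=\mu\circ r'$ is simply undefined on it, which is legitimate because $s$ is partial. What you must show instead is that none of these extra edges survives when the outer pushout of $\mu_R\circ r$ and $\mu_G\circ m$ is formed. This does hold: an extra edge has no preimage under $\mu_G\circ m$, so it survives only if every incident node lies in a valid class, which (identifying the outer pushout with $S$ via $s$, as in the subgraph case) means $\mu(r'(v))$ is defined for every incident node $v$; but then all these nodes survive into $H$, hence so does $e$ (having no $m$-preimage), and inducedness of $\mu$ would make $\mu(r'(e))$ defined, contradicting $e$ being extra. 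With that argument in place of your ``matched in $S$'' claim, the totality, commutativity and universal-property steps go through essentially as in the subgraph case.

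For comparison, the paper takes the other route: it keeps exactly the subgraph-case $R'$ and $G'$ and only argues that the resulting $\mu_R$ and $\mu_G$ already satisfy the induced condition; for $\mu_G$ it asserts that an edge without $m$-preimage is always present in $H$, which passes over precisely the dangling-edge situation your added-edges clause is designed to handle. So your construction, once the pushout bookkeeping is done as above, treats that case more explicitly, at the cost of having to re-verify the outer pushout rather than inheriting it wholesale from Lemma~\ref{lem:subgraph-condition3}.
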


\begin{proof}
In Lemma~\ref{lem:subgraph-condition3} we have shown that there are subgraph 
morphisms $\mu_R$ and $\mu_G$ if $\mu$ is a subgraph morphism. We will show 
that these morphisms are induced subgraph morphisms if $\mu$ is an induced 
subgraph morphism.

Let $e \in E_R$ be an edge on which $\mu_R$ is undefined. By definition 
$\mu(m'(e))$ is undefined and since $m'$ is total, $\mu$ is 
undefined on $m'(e)$ (which is defined). Hence, at least one node $v$ 
attached to $m'(e)$ has no image under $\mu$ and all its preimages under 
$m'$ (which exist since $e$ has a preimage) are undefined under $\mu_R$. 
Thus, $e$ is attached to at least one node on which $\mu_R$ is undefined on.

Let $e' \in E_G$ be an edge on which $\mu_G$ is undefined. By definition 
$\mu(r'(e'))$ is undefined and $e'$ has no preimage under $m$. 
Because of the latter property, $e'$ is in the pushout $H$ and therefore 
defined under $r'$. Thus, $\mu$ is undefined on $r'(e)$ and on at 
least one attached node. All preimages under $r'$ of this node are 
undefined under $\mu_G$ since the diagram commutes. Hence, $\mu_G$ is an 
induced subgraph morphism. \qed
\end{proof}

\begin{lemma}\label{lem:induced-subgraphs-compatibility}
Every GTS satisfies the compatibility condition of 
Definition \ref{def:wsts} with respect to the induced subgraph ordering.
\end{lemma}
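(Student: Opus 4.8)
The plan is to mirror the structure of the proof of Lemma~\ref{lem:subgraph-compatibility} (the analogous statement for the ordinary subgraph ordering), adapting it to the stricter induced subgraph relation. Suppose $G \Rightarrow H$ via a rule $r \colon L \pto R$ with a conflict-free match $m \colon L \to G$, so that $H$ is the pushout of $r$ and $m$. Suppose furthermore $G \indsubOrder G'$, witnessed by an induced subgraph morphism $\mu \colon G' \indsubArrow G$. As in the subgraph case, the key observation is that $\mu$, being injective and surjective where defined, has a total injective inverse $\mu^{-1} \colon G \to G'$ whose image is an induced subgraph of $G'$. We then set $m_\mu = \mu^{-1} \circ m \colon L \to G'$; this is total (composition of total morphisms) and conflict-free wrt.~$r$ (since $\mu^{-1}$ is injective and $m$ is conflict-free). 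Applying $r$ at $m_\mu$ rewrites $G'$ to some graph $H'$ via a pushout square.

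First I would set up the cube of pushout squares exactly as in the diagram \texttt{diagrams/wsts-with-subgraphs}: the upper inner square is the pushout producing $H$, the lower inner square will produce $H'$, the left face relates $G'$ to $G$ via $\mu$ (composed with the identity on $L$, or rather via $m$ and $m_\mu$), and the outer rectangle (top-left $L$ going down to $G'$ and across to $H'$) is a pushout. From the pushout composition/decomposition lemma — the upper inner square and the outer square being pushouts forces the lower inner square to be a pushout — we obtain a unique mediating morphism $\mu' \colon H \to H'$. Since $\mu^{-1}$ is total and injective, and pushouts preserve totality and injectivity of morphisms, $\mu'$ is total and injective as well.

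The step that requires genuine extra care, beyond what Lemma~\ref{lem:subgraph-compatibility} already gives, is showing that $\mu'$ is not merely a subgraph morphism but an \emph{induced} subgraph morphism: that is, whenever there is an edge $e$ in $H'$ connecting only nodes that lie in the image of $\mu'$, then $e$ itself lies in the image of $\mu'$. Equivalently, inverting $\mu'$ must yield an induced subgraph morphism $H \indsubArrow H'$. The argument I would give is: an edge $e$ of $H'$ not in the image of $H$ must, by the construction of pushouts, come from the right-hand side $R$ (an edge created by the rule) or be an edge of $G'$ outside the image of $G$. In the first case, its incident nodes are also newly created or are images of nodes of $L$, hence already in the image of $H$, so such $e$ is no obstruction — these edges get ``added on both sides''. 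In the second case, $e$ is an edge of $G'$ that is undefined under $\mu$; since $\mu$ is an \emph{induced} subgraph morphism, at least one node incident to $e$ is also undefined under $\mu$, hence not in the image of $G$, hence — tracing through the pushouts and the fact that the rule does not touch it — not in the image of $\mu'$ either. Thus every edge of $H'$ with both endpoints in the image of $\mu'$ already lies in that image, so $H \indsubOrder H'$.

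I expect the main obstacle to be exactly this last bookkeeping: carefully case-splitting the edges (and the nodes incident to them) of $H'$ according to whether they originate from $R$, from the preserved part of $G'$, or from the deleted part of $G'$, and checking in each case that the ``induced'' condition is not violated — in particular making sure that the rule $r$ cannot create an edge between two ``old'' nodes one of which was supposed to be outside the image. This is where the restriction to conflict-free matches and the precise form of the induced subgraph morphism definition (undefined on an edge $\Rightarrow$ undefined on an incident node) are used. Once that is settled, concluding $H \indsubOrder H'$ and hence the compatibility condition ($G' \Rightarrow H'$ with $H \indsubOrder H'$) is immediate. \qed
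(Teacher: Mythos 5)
Your proposal is correct and follows essentially the same route as the paper: reuse the pushout construction from the subgraph-compatibility proof (match $\mu^{-1}\circ m$, mediating total injective $\mu'\colon H\to H'$) and then verify the additional induced-subgraph condition by observing that any edge of $H'$ without $\mu'$-preimage must stem from an edge of $G'$ on which $\mu$ is undefined, so the induced property of $\mu$ yields an incident node outside the image, exactly the contradiction/case analysis the paper gives.
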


\begin{proof}
We modify the proof of Lemma~\ref{lem:subgraph-compatibility} by additionally 
showing that the reverse of $\mu'$ is an induced subgraph morphism.
Assume there is an edge $e \in E_{H'}$, where all attached nodes have a 
preimage under $\mu'$ but $e$ has none. Since $r'$, $\mu^{-1}$, $\mu'$, 
$r_\mu$ is a pushout, this can only be the case if all nodes attached to $e$ 
have a preimage in $G$ and $G'$ and $e$ has a preimage in $G'$. Because $\mu$ 
is an induced subgraph morphism, $e$ has a preimage in $G$. Due to 
commutativity $r'$ cannot be undefined on this preimage, thus, $e$ has to have 
a preimage in $H$, violating the assumption. \qed
\end{proof}

To prove Proposition~\ref{prop:induced-subgraph-wqo} we adapt Dings proof
using the notion of type of a graph.

\begin{definition}[Type of a Graph]
  A graph which consists of at most a single node with possibly
  attached edges has type one. A connected graph containing at least
  two nodes has at most type $n$, if there is a node $v$ so that the
  deletion of $v$ and all attached edges splits the graph into
  components which each have the type $n-1$. The type of a
  non-connected graph is the maximal type of its components.
\end{definition}

\begin{lemma}[\cite{ding:subgraphs-wqo}]\label{lem:pathlength-to-type}
  Every directed graph, where the longest undirected path has length $n$, has at
  most type $n+2$.
\end{lemma}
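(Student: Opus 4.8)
The plan is to bound the type of $G$ by the depth of a depth-first search spanning tree. First I would reduce to the connected case: since the type of a disconnected graph is the maximum of the types of its components, and every component inherits the bound $n$ on the length of the longest undirected path, it suffices to prove the statement for connected $G$; moreover the recursion below only ever produces connected subgraphs, so there is no circularity. (Loops and parallel edges are harmless: a loop is incident to only one node, so it never occurs in an undirected path, and parallel edges cannot lengthen a path either.)

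So let $G$ be connected and fix a spanning tree $T$ obtained by running a depth-first traversal of $G$ (ignoring edge directions) from some root $r$. The crucial property of a DFS tree is the absence of ``cross edges'': every edge of $G$ joins a vertex to one of its ancestors in $T$; in particular there is no edge of $G$ between vertices lying in two different subtrees hanging off a common vertex. Consequently, deleting $r$ together with its incident edges decomposes $G$ precisely into the subgraphs induced by the vertex sets $V(T_i)$, where the $T_i$ are the subtrees of $T$ rooted at the children of $r$; each such induced subgraph is connected (it is spanned by $T_i$) and $T_i$ is itself a DFS tree of it. By induction on the number of vertices I would then prove that \emph{the type of a connected graph is at most the number of vertices on a longest root-to-leaf path of any of its DFS trees}: the base case is a single vertex (type one), and in the inductive step the components of $G-r$ have DFS trees $T_i$ of depth at most one less than that of $T$, so by the induction hypothesis their types are bounded by that quantity minus one, and the definition of type yields the claim for $G$.

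It then remains to bound the DFS depth. Any root-to-leaf path in $T$ is a sequence of distinct vertices joined by distinct tree edges, hence an undirected path in $G$; by hypothesis it has length at most $n$, i.e.\ at most $n+1$ vertices. Therefore the type of $G$ is at most $n+1$, which in particular gives the claimed bound $n+2$. The main obstacle is the key inductive claim relating type to DFS depth -- specifically, making precise that removing the DFS root separates $G$ exactly along the children's subtrees, and that this is a valid witness for the definition of type; everything there rests on the cross-edge-free property of DFS trees. (As an alternative one could try to exhibit directly, for a graph whose longest path has length exactly $n$, a single ``central'' vertex whose removal leaves all components with longest path at most $n-1$, and then induct on $n$; but locating such a vertex in an arbitrary graph seems more delicate than simply invoking the structure of a DFS tree.)
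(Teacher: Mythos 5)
Your proof is correct, but it is worth pointing out that the paper does not prove this lemma at all: it is imported from Ding's paper \cite{ding:subgraphs-wqo}, and the only original content in the paper is the remark translating conventions (path length counted in edges rather than nodes, and Ding excluding paths on $n$ nodes), which is where the constant $n+2$ comes from. You instead give a self-contained argument, essentially the standard ``tree-depth is bounded by DFS depth'' proof: the no-cross-edge property of a depth-first spanning tree of the underlying undirected (multi)graph guarantees that deleting the root splits the graph exactly into the induced subgraphs on the children's subtrees, each again equipped with a DFS tree of smaller depth, so induction on the number of vertices bounds the type by the number of vertices on a root-to-leaf path; since such a path is an undirected path in the sense of Definition~\ref{def:hypergraph} (distinct nodes and edges, loops and parallel edges being unusable), its length is at most $n$, giving type at most $n+1$, which is even slightly sharper than the stated $n+2$. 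Your side remarks are the right ones: disconnected graphs reduce to components, loops and parallel edges cannot occur in paths, and the reading of the type definition as an upper bound (components of type \emph{at most} $n-1$) matches how the paper uses it. What your approach buys is independence from Ding's lemma and a marginally better constant; what the paper's approach buys is brevity and a direct handle on Ding's setting (simple node-labelled graphs), which is the form in which the wqo result is actually invoked in the proof of Proposition~\ref{prop:induced-subgraph-wqo}.
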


Note that contrary to Ding our type is bounded by $n+2$ instead of
$n$, because we measure path lengths via the number of edges instead
of nodes and Ding excludes paths of length $n$ to obtain graphs of
type at most $n$.

\begin{proposition}\label{prop:induced-subgraph-wqo}
  Let $n$, $k$ be natural numbers. The induced subgraph ordering is a
  wqo on the set of directed, edge-labelled graphs, where the longest undirected
  path is bounded by $n$ and every two nodes are connected by at most
  $k$ parallel edges with the same label (bounded edge multiplicity).
\end{proposition}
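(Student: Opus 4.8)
The plan is to replace the path-length bound by a bound on the \emph{type} (via Lemma~\ref{lem:pathlength-to-type}) and then to run an induction on this bound, using Higman's lemma in the inductive step. First I would note that if the longest undirected path of a directed graph has length at most $n$, then by Lemma~\ref{lem:pathlength-to-type} it has type at most $t:=n+2$; so it suffices to prove that, for every fixed $t\in\nat$ and every fixed multiplicity bound $k$, the induced subgraph ordering $\indsubOrder$ is a wqo on the class $\mathcal{C}_t$ of directed, edge-labelled graphs of type at most $t$ in which any two nodes are joined by at most $k$ parallel edges carrying the same label. To make the induction work I would actually prove this for graphs additionally carrying node labels from an arbitrary finite set $\Sigma$ (induced subgraph morphisms then being required to preserve node labels); the proposition as stated is the case $|\Sigma|=1$. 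The reason for the enrichment is that when we decompose a graph at a cut vertex, the resulting components must remember how the cut vertex attached to them, and this information is naturally recorded as a new, still finite, node labelling.

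The base case $t=1$ is immediate: a connected graph of type $1$ has at most one node, so with $\Lambda$, $\Sigma$ and $k$ all finite there are only finitely many of them up to isomorphism, and a finite set is trivially a wqo; a disconnected type-$1$ graph is a finite multiset of these, so Higman's lemma gives that $\mathcal{C}_1$ is wqo. For the inductive step assume the claim for $t-1$ over every finite node-label set. Given a connected $G\in\mathcal{C}_t$ with at least two nodes, pick a cut vertex $v$ witnessing $\mathrm{type}(G)\le t$, so that every component of $G-v$ has type at most $t-1$. Encode $G$ by the pair consisting of (a)~the isomorphism type of the loop graph at $v$ together with the node label of $v$ (finitely many possibilities, by the multiplicity bound), and (b)~the finite multiset of the components of $G-v$, where each component $C$ is viewed as an element of $\mathcal{C}_{t-1}$ over the enlarged finite node-label set $\Sigma\times A$; here $A$ is the finite set of possible ``attachment types'' of a node $w$, i.e.\ the multiset of edges between $v$ and $w$ together with their labels and directions (finite because $k$ and $\Lambda$ are). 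Given an infinite sequence of connected graphs in $\mathcal{C}_t$, pigeonhole yields an infinite subsequence on which datum~(a) is constant, and on that subsequence the data~(b) are finite multisets over the wqo $(\mathcal{C}_{t-1}$ over $\Sigma\times A,\indsubOrder)$ provided by the induction hypothesis, so Higman's lemma gives indices $i<j$ with the multiset of $G_i$ dominated by that of $G_j$.

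The key step is then to show that this combinatorial domination, together with the coincidence of datum~(a), forces $G_i\indsubOrder G_j$: map the root of $G_i$ to the root of $G_j$ and each component of $G_i-v_i$ by a label-preserving induced subgraph morphism into the matching component of $G_j-v_j$. This map is injective, and its image is an \emph{induced} subgraph because (i)~inside a single component the image is induced by the induction hypothesis, (ii)~the attachment-type node labels, being preserved, force every $G_j$-edge between the root and an image node to be the image of a $G_i$-edge, and (iii)~no $G_j$-edge runs between image nodes lying in distinct components, and the loops at the root are controlled by datum~(a). Finally, an arbitrary (possibly disconnected) graph in $\mathcal{C}_t$ is a finite multiset of connected graphs in $\mathcal{C}_t$, and Higman-domination between two such multisets again implies the induced subgraph relation (components embed into components, no edges between distinct components), so $\mathcal{C}_t$ is wqo; taking $t=n+2$ and $|\Sigma|=1$ completes the proof.

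I expect the main obstacle to be precisely the verification of this key step, namely ensuring that the larger graph contains \emph{no extra} edges among the image vertices — this is what forces the attachment type to be recorded as an exact node label rather than as some coarser datum, and it is exactly the place where the bounded-edge-multiplicity hypothesis is indispensable (without it the alphabet $A$ would be infinite, and, as the paper notes, the induced subgraph ordering is then genuinely not a wqo). A secondary technical subtlety is that the cut vertex is not canonical; I handle this above by the pigeonhole step on the finitely many values of datum~(a), rather than by attempting to make the decomposition itself canonical.
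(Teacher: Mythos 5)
Your proposal is correct and follows essentially the same route as the paper's proof: reduce the path bound to a type bound via Lemma~\ref{lem:pathlength-to-type}, induct on the type of node-labelled graphs, record the attachment of the cut vertex in an enlarged finite node-label alphabet (the paper's functions $f_y\colon\Lambda\to\{0,\dots,k\}^2$), apply Higman's lemma to the components, and reassemble the induced subgraph embedding exactly as in the construction of $\mu$ there. Your pigeonhole on the root datum and your use of multisets instead of tuples are only cosmetic variations of the paper's treatment of $\widetilde{G}_i$ and $\indsubOrder^*$.
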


\begin{proof}
We prove this proposition by induction over the type of a graph, adapting 
Dings proof in \cite{ding:subgraphs-wqo} that undirected, node-labelled graphs 
of bounded type are well-quasi-ordered by the induced subgraph order. Because 
of Lemma~\ref{lem:pathlength-to-type} we know that the result for bounded 
types automatically transfers to bounded undirected paths.
To prove this proposition we use hypergraphs which are additionally node 
labelled, i.e.~there is a second alphabet $\Sigma$ of node labels and a 
(total) labelling function $\sigma : V_G \to \Sigma$. We obtain classical 
directed graphs if $|\Sigma| = 1$.

Let $G_1, G_2, \ldots$ be an infinite sequence of graphs of type $n$
and with edge multiplicity bounded by $k$. If $n=1$ then every $G_i$
consists of a single node with up to $k \cdot |\Lambda|$ attached 
loops. Since the sets of node and edge labels are finite, there are only 
finitely many possibilities to attach up to $k \cdot |\Lambda|$ edges to the 
node, thus $G_i \indsubOrder G_j$ for some $i < j$, i.e.~$\indsubOrder$ is a
wqo on the set of all such graphs.

Now let $n > 1$. Then there is a node $v_i \in G_i$ such that the
deletion of $v_i$ (and its attached edges) splits the graph into
components $G_{i,q}$ (for $1 \leq q \leq \ell_i$) of type at most
$n-1$. We define $\widetilde{G}_i$ to be the graph containing only
$v_i$ and its attached loops. Additionally we define
$\widehat{G}_{i,q}$ to be $G_{i,q}$ where the label $\sigma(y)$ of
every node $y$ is changed to $\sigma'(y) = (f_y,\sigma(y))$, where
$f_y: \Lambda \to \{0,1,\ldots,k\}^2$ is a function such that
$f_y(\lambda) = (a,b)$ where $a$ is the number of incoming and $b$ of
outgoing $\lambda$-labelled edges attached to both $y$ and $v_i$.
Since there are only finitely many possible functions $f_y$ (due to
the multiplicity constraint), the set of labels remains finite. We
extend $\indsubOrder$ to sequences such that $(\widetilde{G}_i,
\widehat{G}_{i,1}, \ldots, \widehat{G}_{\ell_i}) \indsubOrder^*
(\widetilde{G}_j, \widehat{G}_{j,1}, \ldots, \widehat{G}_{j,\ell_j})$
if and only if $\widetilde{G}_i \indsubOrder \widetilde{G}_j$ and
there are $p_1, \ldots, p_{\ell_i}$ with $1 \leq p_1 < \ldots <
p_{\ell_i} \leq \ell_j$ such that $\widehat{G}_{i,q} \indsubOrder
\widehat{G}_{j,p_q}$.  As shown for the case $n = 1$, $\indsubOrder$
is a wqo on all $\widetilde{G}_i$ and since the graphs
$\widehat{G}_{i,q}$, $\widehat{G}_{j,p_q}$ are of type $n-1$, they are
well-quasi-ordered by induction hypothesis. Hence, due to Higman
\cite{higman1952-divisibility} $\indsubOrder^*$ is also a wqo and
there are indices $i < j$ such that $(\widetilde{G}_i,
\widehat{G}_{i,1}, \ldots, \widehat{G}_{i,\ell_i}) \indsubOrder^*
(\widetilde{G}_j, \widehat{G}_{j,1}, \ldots, \widehat{G}_{j,\ell_j})$.
It remains to be shown that this implies $G_i \indsubOrder G_j$.  By
Lemma~\ref{lem:indsubgraph-condition1} there are induced subgraph
morphisms $\mu_0 : \widetilde{G}_j \indsubArrow \widetilde{G}_i$ and
$\mu_q : \widehat{G}_{j,p_q} \indsubArrow \widehat{G}_{i,q}$ for $1
\leq q \leq \ell_i$. We define the morphism $\mu : G_j \to G_i$ as
\[\mu(x) = \begin{cases}
v_i & \text{if } x = v_j\\
\mu_q(x) & \text{if } x \in \widehat{G}_{j,p_q} \text{ for some } q\\
\mu_0(x) & \text{if } x \in E_{\widetilde{G}_j} \text{ and } 
c_{\widetilde{G}_j}(x) = v_jv_j\\
\mu^v(x) & \text{if } x \in E_{G_j} \text{ and } c_{G_j}(x) = v_jv \lor 
c_{G_j}(x) = vv_j\\
& \text{for } v_j \neq v  \in V_{G_j} \text{ and } \mu(v) \text{ is defined}\\
\text{undefined} & \text{else}
\end{cases}\]
where $\mu^v$ is any total, bijective morphism from $G_j$ restricted to $v_j$, 
$v$ and the edges between them to $G_i$ restricted to $v_i$, $\mu_q(v)$ if $v 
\in \widehat{G}_{j,p_q}$ and any edges between them (both not including 
loops). Note that $\mu^v$ exists since $v$ and $\mu_q(v)$ are labelled with 
some $(f,\alpha)$, thus the number of edges between $v_j$ and $v$ is equal to 
the number of edges between $v_i$ and $\mu_q(v)$ for all labels and directions.

We now show that $\mu$ is a induced subgraph morphism. First note that
$\mu$ is a valid morphism since $\mu_q$, $\mu_0$ and $\mu^v$ are
morphisms and labels of edges in $G_i$, $G_j$ are the same as their
representative in $\widehat{G}_{j,p_q}$, $\widehat{G}_{i,q}$ and
representatives of nodes are labelled with $(f,\alpha)$ while the
origin is labelled $\alpha$ also implying equality on labels.  We then
observe that $\mu$ is injective and surjective, since $\mu_q$, $\mu_0$
and $\mu^v$ are all injective and surjective and $v_j$ is mapped to
$v_i$.  Assume there is an edge $e \in E_{G_j}$ for which $\mu$ is
undefined. If $e$ is contained in one of the components
$\widehat{G}_{j,p_q}$ or in $\widetilde{G}_j$, at least one attached
node is undefined, since $\mu_q$ and $\mu_0$ are an induced subgraph
morphisms. If $e$ connects $v_j$ and a node $v$ of a component
$\widehat{G}_{j,z}$, then either $z$ is not of the form $p_q$ and $\mu$ is
undefined on $\widehat{G}_{j,z}$ or $z = p_q$ and $\mu(v)$ is
undefined since otherwise $\mu^v$ has a mapping for $e$. Since $\mu$ is an
induced subgraph morphism, we obtain that $G_i \indsubOrder G_j$. \qed
\end{proof}

\begin{proposition_app}[\propIndsubgraphWSTSTitle]{prop:induced-subgraphs-wsts}
\propIndsubgraphWSTS
\end{proposition_app}

\begin{proof}
This is a direct consequence of Lemma~\ref{lem:induced-subgraphs-compatibility} 
and Proposition~\ref{prop:induced-subgraph-wqo}. \qed
\end{proof}

\begin{proposition_app}{prop:induced-subgraph-is-decidable}
\propIndsubgraphDecidable
\end{proposition_app}

\begin{proof}
  As shown in Lemma~\ref{lem:indsubgraph-condition1},
  \ref{lem:indsubgraph-condition2} and
  \ref{lem:indsubgraph-condition3} the induced subgraph ordering
  satisfies the conditions of
  Procedure~\ref{procedure:main}.
  The computation of minimal pushout complements is more involved than
  in the subgraph case. This is due to the fact that if a rule deletes
  a node, all attached edges are deleted, even if these edges have no
  preimage in $L$.  Adding an edge to a pushout complement and
  attaching it to a node which is deleted by the rule, results in
  another pushout complement. Contrary to the subgraph ordering these
  pushout complements are not already represented by the graph without
  the edge, if the induced subgraph ordering is used, but we can
  compute them as follows:
  \begin{enumerate}
    \item Let $(r : L \pto R) \in \mathcal{R}'$ be a rule and $m : R \to G$ a 
    match calculated in Step 1 of Procedure~\ref{procedure:main}. Calculate 
    the set of minimal pushout complements $\mathcal{G}_{poc}$ wrt.~the 
    subgraph ordering restricted to $\mathcal{G}_{n,k}$.
    \item \label{step:indsubgraphs-minimal-pocs}
    For all pushout complement objects $X \in \mathcal{G}_{poc}$ with 
    morphism $r' : X \pto G$, add all $X'$ to $\mathcal{G}_{poc}$, where $X'$ 
    can be obtained by adding an edge to $X$, which is attached to at least 
    one node on which $r'$ is undefined. Do not add $X'$ if it exceeds the 
    bounded multiplicity.
  \item Perform Step~\ref{step:indsubgraphs-minimal-pocs} until
    $\mathcal{G}_{poc}$ becomes stationary, which will be the case
    since the multiplicity is bounded. The set $\mathcal{G}_{poc}$ is
    then the set of minimal pushout complement objects wrt.~the
    induced subgraph ordering.
  \end{enumerate} 
  \qed
\end{proof}
}

\end{document}